\newtheorem{thm}{Theorem}
\newtheorem{definition}{Definition}
\begin{document}

\title{\textsf{VeriDIP}: Verifying Ownership of Deep Neural Networks through Privacy Leakage Fingerprints}

\author{Aoting Hu, Zhigang Lu, Renjie Xie, Minhui Xue
\IEEEcompsocitemizethanks{\IEEEcompsocthanksitem Aoting Hu is with the School of Electrical and Information Engineering, Anhui University of Technology. Email: aotinghu@ahut.edu.cn.
\IEEEcompsocthanksitem Zhigang Lu is with the College of Science and Engineering, James Cook University, Australia. Major work was done when he was a postdoctoral research fellow at Macquarie University. Email: zhigang.lu@jcu.edu.au.
\IEEEcompsocthanksitem Renjie Xie is with the National Mobile Communications Research Laboratory, Southeast University. Email: renjie\_xie@seu.edu.cn.
\IEEEcompsocthanksitem Minhui Xue is with CSIRO's Data61, Australia. Email: jason.xue@data61.csiro.au.

}
}

\markboth{Journal of \LaTeX\ Class Files,~Vol.~14, No.~8, August~2015}%
{Shell \MakeLowercase{\textit{et al.}}: Machine Learning Model Ownership Proof Leveraging Membership Inference Advantadges}

\IEEEtitleabstractindextext{%
\begin{abstract}

Deploying Machine Learning as a Service gives rise to model plagiarism, leading to copyright infringement. Ownership testing techniques are designed to identify model fingerprints for verifying plagiarism. However, previous works often rely on overfitting or robustness features as fingerprints, lacking theoretical guarantees and exhibiting under-performance on generalized models. In this paper, we propose a novel ownership testing method called VeriDIP, which \underline{veri}fies a \underline{D}NN model's \underline{i}ntellectual \underline{p}roperty. VeriDIP makes two major contributions. (1) It utilizes membership inference attacks to estimate the lower bound of privacy leakage, which reflects the fingerprint of a given model. The privacy leakage fingerprints highlight the unique patterns through which the models memorize sensitive training datasets. (2) We introduce a novel approach using less private samples to enhance the performance of ownership testing.

Extensive experimental results confirm that VeriDIP is effective and efficient in validating the ownership of deep learning models trained on both image and tabular datasets. VeriDIP achieves comparable performance to state-of-the-art methods on image datasets while significantly reducing computation and communication costs. Enhanced VeriDIP demonstrates superior verification performance on generalized deep learning models, particularly on table-trained models. Additionally, VeriDIP exhibits similar effectiveness on utility-preserving differentially private models compared to non-differentially private baselines.

\end{abstract}

\begin{IEEEkeywords}
Fingerprinting, neural networks, ownership protection, membership inference, differential privacy.
\end{IEEEkeywords}}

\maketitle

\IEEEdisplaynontitleabstractindextext

\IEEEpeerreviewmaketitle

\ifCLASSOPTIONcompsoc
\IEEEraisesectionheading{\section{Introduction}\label{sec:introduction}}
\else
\section{Manuscript}
\label{sec:introduction}
\fi

\IEEEPARstart{D}{eep} learning plays an important role in various tasks such as image recognition~\cite{he2016deep,krizhevsky2012imagenet,simonyan2014very}, natural language processing~\cite{goldberg2016primer}, and speech recognition~\cite{graves2013speech} tasks. Building a sophisticated deep neural network (DNN) requires a significant amount of annotated training data, which often contains user privacy, demands powerful computing resources, and necessitates machine learning expertise. These unique DNN models represent valuable intellectual property (IP) and require copyright protection. However, deploying DNN models' APIs for user queries introduces the risk of model extraction attacks, leading to copyright infringement~\cite{tramer2016stealing,papernot_practical_2017,orekondy_knockoff_2019}. Model extraction attack efficiently transfers the functionality of a \emph{victim model} to a \emph{stolen copy} using limited query answers. Additionally, attackers, who may be insiders with full access to the victim models, employ techniques such as distillation~\cite{yang_effectiveness_2019}, fine-tuning~\cite{chen_refit_2021}, or pruning~\cite{bailey_finepruning_2018,liu_rethinking_2019} in an attempt to reverse-engineer the tracking. 

Proof-of-ownership serves as an adequate protection mechanism against model stealing attacks, ensuring accountability for any theft of copyright-protected models. However, proving ownership of a neural network poses challenges due to the stochastic nature of the training and stealing process~\cite{KingmaB14}. Many stealing mechanisms have minimal side effects on the model's functionality but disable the proof-of-ownership mechanism~\cite{zhang_protecting_2018,yang_effectiveness_2019, chen2021copy}. Methods for proving ownership of DNN models can be broadly classified into two categories: \textbf{watermark embedding (WE)}~\cite{Yusuke2017Embedding,rouhani2018deepsigns,le_merrer_adversarial_2020,li_spread_2021,Enzo_Delving_2021,liu_Watermarking_2021,ChenRFZK19} and \textbf{ownership testing (OT)} ~\cite{maini2021dataset,chen2021copy,cao_ipguard_2021,lukas_deep_2021}. 

\begin{figure}[!t]
    \centering  
    \includegraphics[width=0.95\columnwidth]{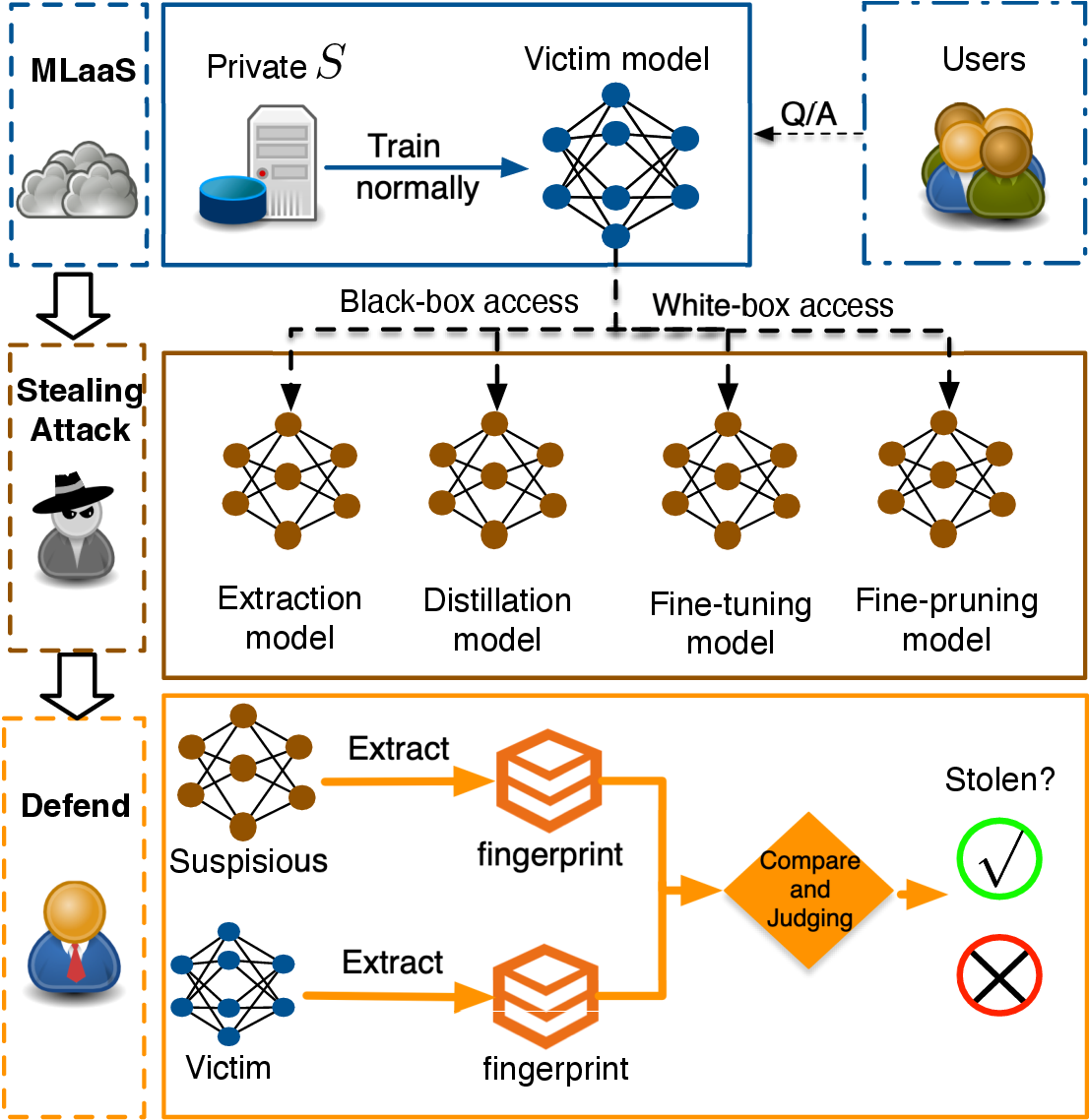}
     \vspace{-2mm}
    \caption{Ownership testing framework for DNN models.}
    \label{fig:ot}
\end{figure}

The WE methods embed customized watermarks into DNN models during the training stage then verify the ownership by confirming the presence of the respective watermarks from given suspect models. However, WE techniques have certain limitations, including tampering with the training process, potential side effects on model functionality, and vulnerability to watermark erasure attacks~\cite{yang_effectiveness_2019,jia_entangled_2020,chen_refit_2021}. In contrast, the OT methods extract the intrinsic characteristics (fingerprints) of DNN models, making them non-invasive and more resilient to adaptive attacks~\cite{maini2021dataset,chen2021copy}. In this paper, our focus is on the OT technique to verify the copyright of DNN models.

To the best of our knowledge, existing ownership testing solutions rely on two types of DNN fingerprints 
--- \emph{model robustness} and \emph{model overfitting}, which capture the uniqueness of DNN models. 
Robustness-based solutions utilize adversarial examples to delineate the decision boundary of 
both the victim model and its stolen copies, and then compare the percentage of matched answers~\cite{cao_ipguard_2021,lukas_deep_2021,chen2021copy}. 
However, techniques that enhance a DNN model's robustness against adversarial attacks, such as adversarial training~\cite{madry_towards_2019}, undermine the performance of ownership testing. 
On the other side, overfitting-based OT solutions, such as dataset inference~\cite{maini2021dataset}, leverage the observation that the stolen copies exhibit a higher level of overfitting to the training set of the victim models, thereby extracting the overfitting level as fingerprints. 
While these approaches are innovative and effective, they have certain limitations. The verification process is communicational and computationally expensive requiring thousands of queries to the stolen copy to obtain dozens of minimal adversarial noise as fingerprints~\cite{maini2021dataset} 
Continuous inquiries may raise suspicions of model theft and result in rejection of the inquiries~\cite{hitaj_have_2018}. 
Furthermore, the performance of overfitting-based solutions is negatively affected by the model's generalization ability.

To address these problems, we propose a novel ownership testing approach to \underline{Veri}fy a \underline{D}NN model's \underline{I}ntelligence \underline{P}roperty (\textsf{VeriDIP}). The key feature of VeriDIP is its utilization of \textbf{privacy leakage} fingerprints, instead of relying on overfitting~\cite{maini2021dataset} or robustness~\cite{cao_ipguard_2021,lukas_deep_2021,chen2021copy} metrics to indicate model uniqueness. Drawing on the concept of membership inference (MI) attacks from previous works~\cite{shokri2017membership,sablayrolles2019white,carlini_membership_2022}, the privacy leakage of a model against MI attacks reflects the extent to which the model has memorized its private or secret training data. Hence, considering the secrecy of the training data, a stolen model would not exhibit the same level of privacy leakage on the victim's private training data as the victim model under the same MI attacks. In other words, the privacy leakage fingerprint of a model captures the distinctive and confidential patterns learned by the model, fulfilling the criteria of a reliable fingerprint: uniqueness and irremovability. As a result,
any unauthorized DNN models that result in a certain degree of privacy leakage of a private training set can be identified as plagiarized.

Using privacy leakage fingerprints, VeriDIP consists of four components for verifying a DNN model's intelligence property. First, motivated by the aforementioned properties of the privacy leakage of a given model, we utilize MI attacks to estimate the lower bound of privacy leakage, which serves as the extracted fingerprint of a given model. Then we employ hypothesis testing on the extracted fingerprint to determine the likelihood of a suspect model being a stolen copy of the victim model. However, we may encounter the issue of ``fingerprint fading" when dealing with well-generalized models that exhibit minimal privacy leakage against MI attacks. To tackle this problem, 
we introduce an enhanced version of VeriDIP where MI attacks query the suspect models using less private samples to extract the worst-case privacy leakage fingerprints of the suspect models. These less private samples face higher privacy leaking risks against MI attacks, enabling the enhanced VeriDIP to extract stronger privacy leakage fingerprints.
To identify the less private data in advance, we train numerous shadow models to investigate the impact of each training sample on the decision boundary of DNN models. The data that significantly influences the models will be considered as the less private data.

We extensively evaluate \textsf{VeriDIP} on two image datasets (FMNIST and CIFAR-10) and two tabular datasets (Adult and Health) against three types of model stealing attacks: model extraction attack, model distillation attack, and fine-tune attack. The evaluation results for FMNIST and CIFAR demonstrate that VeriDIP can publicly authenticate all stolen copies while exposing less than 5 training samples, with a significantly reduced number of queries to the suspect models compared to~\cite{maini2021dataset}. Despite the models trained on tabular datasets having minimal overfitting, VeriDIP is still capable of publicly authenticating all stolen copies, at the cost of exposing dozens of training samples, whereas previous works~\cite{maini2021dataset,cao_ipguard_2021,lukas_deep_2021,chen2021copy} are unable to do so.

In this work, we also address an open question raised in~\cite{maini2021dataset} regarding the effectiveness of VeriDIP on differentially private DNN models. We demonstrate that VeriDIP's success rate is constrained by a stringent privacy budget, such as $\varepsilon = 0.1$. However, we find that VeriDIP remains effective even for utility-preserving differentially private models, such as those with a higher privacy budget, e.g., $\varepsilon = 0.5$.

To summarize, our contributions are as follows:
\begin{itemize}[leftmargin=*]
	\item We propose VeriDIP, a model ownership testing (OT) approach for DNN models. VeriDIP utilizes the membership inference (MI) attack to estimate the privacy leakage of DNN models, which serves as the fingerprint of a given (victim/target) model. 
 
	\item We further enhance VeriDIP by utilizing less private samples to estimate the worst-case privacy leakage, thereby strengthening the extracted fingerprints of DNN models.

    \item We perform extensive evaluations on VeriDIP using various DNN models trained with tabular or image benchmarks, against three types of model stealing attacks. The results show that VeriDIP can publicly authenticate all stolen copies with minimal verification costs.
    
	\item We theoretically and experimentally analyze the connection between the effectiveness of VeriDIP and differential privacy (DP) privacy protection. The results demonstrate that as long as a DP model is utility-preserving, VeriDIP can effectively protect its copyright. 
\end{itemize}

\section{Related Work}
In this section, we review model stealing attacks, well-known ownership testing techniques and membership inference attacks. We list the comparison of different copyright protection methods for DNN models in Table~\ref{tab:comparison}.

\begin{table*}[t]
    \centering
    \caption{Comparison of different DNN model copyright protection methods. ME: model extraction attack; KD: knowledge distillation attack; FT: fine-tuning attack; ADV: adversarial; DP: differential privacy. ME, KD, and FT are model stealing attacks. Adaptive attacks aim to weaken the effect of ownership test approaches.}
    \label{tab:comparison}
    \vspace{-3mm}
    \begin{tabular}{l|cccccccc}
        \toprule
        \multirow{2}{*}{Approaches}  &\multirow{2}{*}{Type} &\multirow{2}{*}{Method} &\multirow{2}{*}{\makecell{Non-\\invasive}}& \multirow{2}{*}{\makecell{DP \\connection}}& \multicolumn{3}{c}{Model stealing attacks} & \multirow{2}{*}{Adaptive attacks}\\ \cmidrule(r){6-8} &&&&& ME & KD & FT \\
        \midrule
        Adi et al.\cite{adi_turning_nodate} & watermarking & backdoor & $\times$&N/A &$\times$~\cite{shafieinejad_robustness_2021}&$\times$\cite{yang_effectiveness_2019}&$\times$\cite{chen_refit_2021}&N/A\\ 
        Zhang el al.~\cite{zhang_protecting_2018} & watermarking & backdoor & $\times$&N/A &$\times$~\cite{shafieinejad_robustness_2021}&$\times$~\cite{yang_effectiveness_2019}&$\times$~\cite{chen_refit_2021}&N/A\\
        Chen et al.~\cite{chen2021copy}& fingerprinting & robustness & $\surd$&N/A  &$\times$~\cite{chen2021copy} &$\times$~\cite{chen2021copy}& $\surd$ & ADV training~\cite{madry_towards_2019}\\
        Cao et al.~\cite{cao_ipguard_2021}& fingerprinting & robustness & $\surd$&N/A  &$\times$~\cite{lukas_deep_2021} &$\times$~\cite{lukas_deep_2021}&$\surd$ & ADV training~\cite{madry_towards_2019}\\
        Lukas et al.~\cite{lukas_deep_2021}& fingerprinting & robustness & $\surd$ &N/A &$\surd$&$\surd$ &$\surd$& ADV training~\cite{madry_towards_2019}\\
        Maini et al.~\cite{maini2021dataset}& fingerprinting & over-fitting & $\surd$ &N/A&$\surd$ &$\surd$&$\surd$ &detector attacks~\cite {hitaj_have_2018} \\
        \rowcolor{gray!20}  VeriDIP (This work) & fingerprinting & privacy leakage & $\surd$& $\surd$& $\surd$& $\surd$ &$\surd$ & secure for now \\
        \bottomrule
    \end{tabular}  
\end{table*}

\subsection{Model stealing attacks}
\textbf{Black-box attacks.} Tramer et al.~\cite{tramer2016stealing} proposed the first model extraction attack that trains a stolen copy using the predictions of victim models. It requires black-box access to the victim model and some unlabeled datasets from the same distribution. According to Shafieinejad et al.~\cite{shafieinejad_robustness_2021}, existing watermark embedding techniques~\cite{adi_turning_nodate,zhang_protecting_2018} and some fingerprinting solutions~\cite{chen_refit_2021,cao_ipguard_2021} cannot withstand model extraction attacks. Distillation~\cite{hinton_distilling_2015} was first proposed to distill the knowledge of teacher models into student models and later extended as an attack against methods that protect model copyrights~\cite{yang_effectiveness_2019}. Distilled models are often able to evade copyright tracking, as demonstrated in works such as Cao et al.~\cite{cao_ipguard_2021} and Lukas et al.~\cite{lukas_deep_2021}.

\textbf{White-box attacks.} White-box attackers have full access to victim model's parameters, and their goal is modify these parameters in order to disable copyright protection mechanisms. 
For instance, fine-pruning~\cite{bailey_finepruning_2018} is a defensive method against DNN model backdooring. It prune backdoored neurons and then fine-tuning the models. Consequently, fine-pruning could potentially be an attack against backdoor-based model watermarking techniques, such as those proposed in works like Adi et al.~\cite{adi_turning_nodate,zhang_protecting_2018}.
More recently, Chen et al.~\cite{chen_refit_2021} proposed an advanced fine-tuning technique that aims to erase model watermarks. They initially increase the learning rate to make the victim model forget unnecessary details about watermarks and then gradually restore the utility of the model by reducing the learning rate step by step. While these attacks are effective in disabling watermark embedding techniques, it remains unclear whether they pose a threat to the copyright protection provided by ownership testing methods.

\subsection{Ownership testing}
Ownership testing (OT) techniques, also referred to as DNN fingerprinting techniques, are an emerging area of research that focuses on extracting the intrinsic characteristics of DNN models to track stolen copies. Currently, the research on OT is limited, with the majority of studies relying on two fingerprint characteristics: robustness~\cite{cao_ipguard_2021,lukas_deep_2021,chen2021copy} and overfitting~\cite{maini2021dataset}.

IPGuard~\cite{cao_ipguard_2021} proposes using model robustness as fingerprints. The authors observe that stolen copies exhibit similar predictions to the victim model for most adversarial data points. While IPGuard can successfully identify white-box derivation attacks, such as fine-tuning, it is not effective against black-box extraction attacks, such as model extraction attack~\cite{shafieinejad_robustness_2021}, where the attacker retrains the model from scratch, resulting in a larger disparity in the decision surface compared to the victim model. To address this limitation, Lukas et al.~\cite{lukas_deep_2021} propose the use of transferable adversarial samples to extract DNN fingerprints. This approach successfully defends against white-box derivation attacks and most black-box extraction attacks, but it is vulnerable to transfer learning and adversarial training. More recently, Chen et al.~\cite{chen2021copy} propose a testing framework for verifying ownership. Instead of relying on single metrics, they utilize multiple dimensions and combine the results to determine ownership. Their black-box metrics also use robustness as fingerprints, similar to IPGuard~\cite{cao_ipguard_2021}, making them susceptible to black-box extraction attacks. Their white-box metrics utilize the robustness of inner neuron outputs, requiring the defender to have knowledge of all parameters of stolen copies.

Dataset inference (DI)~\cite{maini2021dataset} exploits the overfitting of DNN models to their training data as a means to demonstrate that stolen copies exhibit similar overfitting fingerprints to the victim models. They employ minimal adversarial noise that leads to model misclassification~\cite{szegedy_intriguing_2014} as fingerprints. DI is capable of identifying all white-box and black-box model variations~\cite{maini2021dataset}. However, this approach has some limitations. Firstly, it cannot be directly applied to DNN models trained on tabular data since some of the features are categorical, making it challenging to perform most adversarial example attacks~\cite{ballet_imperceptible_2019}. Secondly, DI requires querying the suspect model thousands of times, which significantly increases the risk of detector attacks~\cite{hitaj_have_2018}. Thirdly, the effectiveness of DI on differentially private (DP)~\cite{dwork2006calibrating} DNN models remains unanswered. Hence, this paper aims to propose a novel ownership testing approach that addresses these limitations by achieving high verification efficiency and protecting the intellectual property of DP models.

\subsection{Membership inference attacks}
Shokri et al. proposed the first membership inference (MI) attack in 2017~\cite{shokri2017membership}, which successfully guesses the membership of the training data with black-box access to the target DNN models. Since then, researchers have made efforts to enhance the attack performance and reduce the background information required by MI attackers. More recently, some researchers have utilized MI attacks as an empirical measurement for estimating the privacy leakage of DNN models~\cite{yeom2018privacy,sablayrolles2019white,carlini_membership_2022}. This approach has inspired us to leverage the MI advantage as a lower bound for estimating model privacy leakage and consider privacy leakage characteristics as the model fingerprint. Additionally, other studies have revealed the varying exposure risks of training data against MI attacks~\cite{feldman_does_2020,carlini_membership_2022}, which have also motivated us to extract stronger fingerprints.

\begin{table}[!t]
    \centering
	\caption{Summary of Notations}
	\vspace{-3mm}
	\label{tab:variable_mimop}
	\begin{tabular}{l|p{6cm}}
		\toprule
		Notations & Description  \\
		\midrule
		$\bm{x}$ & feature vector \\ 
        $y$ & the label corresponding to $\bm{x}$\\
        $\bm{z}$ & a data point $\bm{z} = (\bm{x},y)$ \\ 
        $\alpha$ & significance level for hypothesis testing \\
        $\mathcal{D}$ & data distribution \\ 
        $S$ & private training dataset\\
        $f$ & DNN models \\
        $n_{S}$ & \makecell[l]{number of exposed samples during\\ public copyright verification}\\
        $(\epsilon,\delta)$ & DP parameters (privacy budget, failure probability)\\
       ($C$, $\mathrm{z}$) & DP hyper-parameters (clipping threshold, noise multiplier)\\
        $P$ & probability of not being a stolen model\\
        $Y$ & ownership testing outcome - Stolen or Not stolen\\
        \midrule
        $\ell (f,\bm{z})$ & Loss function, output the prediction loss of model $f$ on sample $\bm{z}$\\
        $\mathcal{V} (f,\mathcal{P}_{S},\mathcal{B})$ & OT algorithm, output whether a suspect model $f$ is trained on $S$, where $\mathcal{P}_{S}$ is an auxiliary dataset to $S$ and $\mathcal{B}$ is background knowledge about $f$ or $S$\\
		$\mathcal{A} (\bm{z},f,\mathcal{D})$ & MI attack algorithm, output whether a sample $\bm{z}$ is used to train model $f$, $\mathcal{D}$ is auxiliary information\\
        $\operatorname{Adv}^{\mathrm{M}} (\mathcal{A},f,\mathcal{D})$& Membership advantage algorithm, output membership advantages of algorithm $\mathcal{A}$ on model $f$, $\mathcal{D}$ is auxiliary information\\
		\bottomrule
	\end{tabular}%
\end{table}%

\section{Ownership Testing Problem}
In this section, we first formulate the ownership testing (OT) problem, then discuss the capabilities of adversaries and defenders, followed by the backgrounds of differential privacy and membership inference.

\subsection{Notations}
Let $\bm{z}= (\bm{x},y)$ be a data point, where $\bm{x}$ is the feature vector and $y$ is the corresponding label. $\mathcal{D}$ represents the data distribution from which $\bm{z}$ is drawn. We assume that the victim model is trained on the training set $S (\sim \mathcal{D}^{n})$ consisting of $n$ data points. The loss function $\ell (f,\bm{z})$ measures the difference between the model predictions $f (\bm{x})$ and the ground-truth label $y$. We provide a summary of the notations used in this work in Table~\ref{tab:variable_mimop}.

\subsection{Problem Formulation}
Figure~\ref{fig:ot} depicts a general framework of ownership testing (OT) for DNN models, where we have three components - machine learning as a service (MLaaS), model stealing attacks and defences. 

Particularly, MLaaS provides users with access to pre-built machine learning (DNN) models through APIs, allowing the users to integrate machine learning capabilities into their applications and perform complex tasks through simple queries.
However, to fully utilize the potential of the pre-build models, attackers might attempt to steal the models by mimicking the behaviors of regular users (querying the models through the open APIs) to infer/extract the model details.
To protect the copyright of (the victim) DNN models, an OT approach extracts and compares the fingerprints of a suspect model and the victim model to produce a test outcome, indicating whether the suspect model is a stolen copy of the victim model.

In this paper, we aim to design a model OT algorithm $\mathcal{V}$, defined as follows
\begin{equation}
\label{exp_di}
	\mathcal{V} (f,\mathcal{P}_{S},\mathcal{B}) \rightarrow \{0,1\},
\end{equation}
where $\mathcal{P}_{S}$ is an auxiliary dataset, containing carefully chosen adversarial examples~\cite{lukas_deep_2021,cao_ipguard_2021,chen2021copy} or a subset of training examples~\cite{maini2021dataset} and $\mathcal{B}$ represents the publicly available knowledge about the model~\cite{lukas_deep_2021,cao_ipguard_2021,chen2021copy} or about the private training data~\cite{maini2021dataset}.
In the algorithm $\mathcal{V} (f,\mathcal{P}_{S},\mathcal{B})$, the verifier first extracts the inherent fingerprint of the suspect model $f$ using $\mathcal{P}_{S}$ and $\mathcal{B}$, and then determines the ownership based on whether it aligns with the owner's expectations. The algorithm $\mathcal{V} (f,\mathcal{P}_{S},\mathcal{B})$ outputs $1$ when the verifier believes the suspect model $f$ is a stolen copy of the victim model $f_{S}$, and vice versa. The algorithm $\mathcal{V} (f,\mathcal{P}_{S},\mathcal{B})$ should be highly accurate, computationally and communicationally efficient, and privacy-preserving (safe to audit in public).

\subsection{Threat Model}
\label{sec:threat_model}
We specify the capabilities of the attacker and verifier (defender) shown in Figure~\ref{fig:ot}.
 
 \textbf{Attacker.} We consider a wide variety of model stealing attacks, including both black-box access and white-box access capabilities. However, the adversary does not have access to the entire (private) training set of the victim model.
\begin{itemize}[leftmargin=*]
	\item {Black-box attacker.} Attackers, who are external entities attempting to exploit the functionality of the victim model, employ various attacks such as model extraction attacks~\cite{tramer2016stealing} and model distillation attacks~\cite{yang_effectiveness_2019}. 
	\item {White-box attacker.} Attackers, who are insiders with full access rights to the victim model, aim to evade tracking and detection. They employ various attacks such as model fine-tuning~\cite{chen_refit_2021} and model fine-pruning~\cite{bailey_finepruning_2018,liu_rethinking_2019}.
\end{itemize}

\textbf{Verifier.} As for defense, our focus is on black-box verifiers who have limited query access to the suspect model. There are two main reasons for this choice.  
First, when the verifier is a third-party agency, sharing excessive information such as training data or model parameters can pose risks to the model owner or data contributors. Second, allowing an unlimited number of verification queries can potentially trigger detector attacks~\cite{hitaj_have_2018}. In a detector attack, the unauthorized model API may refuse to respond or provide random responses upon detecting an attempt to verify copyright. For example, in the work by Maini et al.~\cite{maini2021dataset}, the victim model is queried 1500 times for a single data point to collect minimal adversarial noise vectors for ownership determination, which significantly increases the likelihood of triggering a detector attack (refer to Table~\ref{tab:comparison}).

\subsection{Membership Advantage}
As we know, Yeom et al.~\cite{yeom2018privacy} show that the privacy budget of a differentially private DNN model is a lower bound of the model's privacy leakage against MI attacks. Furthermore,
as demonstrated by Yeom et al.~\cite{yeom2018privacy}, the privacy budget of a differentially private DNN model serves as a lower bound for estimating the model's privacy leakage against membership inference (MI) attacks. Additionally, recent research by Hyland et al.~\cite{Stephanie_Intrinsic_2019} highlights that not only intentionally noisy DNN models provide privacy protection, but ordinary DNN models also possess a certain level of privacy protection due to the inherent randomness introduced by stochastic gradient descent (SGD). Consequently, it becomes possible to assess the potential privacy leakage of a non-differentially private DNN model by estimating the corresponding privacy budget associated with the non-DP model.

\subsubsection{Differential Privacy}
\label{sec:DP}
Recall the definition of differential privacy~\cite{dwork2006calibrating}, A learning algorithm $f:\mathcal{D} \mapsto \mathcal{R}$ satisfies ($\epsilon,\delta$)-DP if, for all adjacent databases $D$ and $D^{\prime}$ that differs in one record, and all possible outputs $\mathcal{O}\subseteq \mathcal{R}$, the following inequality holds.
\begin{equation}
\label{eq:dp}
    \Pr[f(D)\in \mathcal{O}]\leq \exp(\epsilon) \Pr[f(D^{\prime})\in \mathcal{O} ] + \delta,
\end{equation}
where the probabilities are taken only over the randomness of the learning algorithm $f$. A greater $\epsilon$ indicates a lesser degree of privacy protection for the training data, meaning that the machine learning algorithm $f$ may potentially compromise more privacy of the sensitive database $D$.

If the verifier is able to quantify the privacy risks associated with a particular learning algorithm on a specific private training set, this value can be used as a fingerprint for identifying plagiarism. This is because the target model and its pirated version are likely to exhibit higher privacy leakage of their training data compared to independently trained models. By analyzing and comparing the privacy risks of different models, the verifier can detect potential instances of plagiarism or unauthorized use of the training data. 
However, it is noteworthy that directly estimating the value of $\epsilon$ for deployed non-DP DNN models on given datasets is intractable. This is because it would require traversing all possible adjacent datasets and evaluating all possible outputs to compute the maximum divergence. This process becomes computationally expensive and impractical, especially for large-scale datasets and complex models.

\subsubsection{Membership Inference}

Membership inference (MI) attacks~\cite{shokri2017membership} aim to predict whether a particular example is part of a training dataset. Recently, some researchers~\cite{murakonda_ml_2020,song2020introducing} have proposed utilizing MI attacks as a means to measure privacy leakage. Other works~\cite{yeom2018privacy,sablayrolles2019white} have theoretically established that the privacy leakage measured by MI attacks serves as a lower bound for $\epsilon$. In this work, we leverage the concept of membership advantage~\cite{yeom2018privacy} and utilize it as a fingerprint for our model. We provide a review of the related definition below.

Before getting into membership advantage, we first define the MI attack following  ~\cite{shokri2017membership,yeom2018privacy}. 

\begin{definition}[Membership inference experiment $\operatorname{Exp}^{\mathrm{M}} (\mathcal{A}, f_S, \mathcal{D})$] \label{ME}
Let $\mathcal{A}$ be a membership inference attack algorithm, $f_{S}$ is a machine learning model trained on $S \sim \mathcal{D}^{n}$. The procedure of the membership inference experiment is as follows:
\begin{enumerate}[leftmargin=*]
 	\item Toss a coin at random $b \leftarrow\{0,1\}$;
 	\item If $b=1$, then the sample $\bm{z}$ draws from $S$, denoted as $\bm{z} \sim S$. If $b=0$, then the sample $\bm{z}$ comes from $\mathcal{D}$, denoted as $\bm{z} \sim \mathcal{D}$;
 	\item $\{0,1\}\leftarrow \operatorname{Exp}^{\mathrm{M}} (\mathcal{A}, f_S, \mathcal{D})$. The experiment $\operatorname{Exp}^{\mathrm{M}} (\mathcal{A}, f, \mathcal{D})$ returns $1$ to represent the attacker correctly guessing the answer of $b$, denoted as $\mathcal{A}\left (\bm{z}, f_{S}, \mathcal{D}\right)=b$ and vice versa.
\end{enumerate}
\end{definition}

In Definition~\ref{ME}, the attack algorithm $\mathcal{A}\left (\bm{z}, f_{S}, \mathcal{D}\right)$ inputs arbitrary sample $\bm{z}$, model $f_{S}$, public data distribution $\mathcal{D}$, and outputs the judgment about whether the sample $\bm{z}$ is used to train model $f_{S}$.
 
Membership advantage~\cite{yeom2018privacy} represents the advantage of an MI attacker's ability to guess the decision boundary of training samples and other samples over random guess.
 
\begin{definition}[Membership Advantage]\label{def:advM}
The advantage of the MI attack algorithm $\mathcal{A}$ is defined as 
\begin{equation}
\label{eq:miadv}
    \operatorname{Adv}^{\mathrm{M}} (\mathcal{A}, f, \mathcal{D})=2 \operatorname{Pr}\left[\operatorname{Exp}^{\mathrm{M}} (\mathcal{A}, f, \mathcal{D})=1\right]-1.
\end{equation}
\end{definition}
 
Membership advantage ranges from $0$ to $1$, where $0$ indicates no advantage (equivalent to random guessing), and $1$ represents a full advantage. The right-hand side of Equation~\eqref{eq:miadv} can be empirically determined by computing the difference between the true positive rate (TPR) and the false positive rate (FPR) of the attack algorithm $\mathcal{A}$. That is, 
\begin{equation}
\label{eq:MI_adv}
    \begin{aligned}
        \operatorname{Adv}^{\mathrm{M}} (\mathcal{A}, f, \mathcal{D})  &=\operatorname{Pr}[\mathcal{A}=1 \mid b=1]-\operatorname{Pr}[\mathcal{A}=1 \mid b=0] 	\\
        &= \underset{\bm{z} \sim S}{\mathbb{E}}\left[\mathcal{A}\left (\bm{z},f, \mathcal{D}\right)\right]-\underset{\bm{z} \sim \mathcal{D}}{\mathbb{E}}\left[\mathcal{A}\left (\bm{z},f,\mathcal{D}\right)\right].
    \end{aligned}
\end{equation}
 
It can be observed from the above equation that the membership advantage is dependent on the specific implementation approach of the attack algorithm $\mathcal{A}\left (\bm{z}, f, \mathcal{D}\right)$, and various options have been proposed in the literature, including~\cite{shokri2017membership,yeom2018privacy,nasr2019comprehensive,chen_gan-leaks_2020}.

\section{VeriDIP}
In this section, we present our ownership testing approach for DNN models called VeriDIP, which performs hypothesis testing for extracted privacy leakage fingerprints. To illustrate, we first introduce the framework for \emph{basic VeriDIP}, followed by a detailed fingerprint extraction algorithm. Next, we propose \emph{enhanced VeriDIP} to improve the performance of the basic VeriDIP for more generalized DNN models. Finally, we discuss the relationship between VeriDIP and differential privacy techniques.

\subsection{Ownership Testing Algorithm}
We present the construction of ownership testing algorithm $\mathcal{V} (f,\mathcal{P}_{S},\mathcal{B})\rightarrow \{0,1\}$ (see Equation~\eqref{exp_di}) that outputs whether the suspect model $f$ is a stolen copy of the victim model. Let $S\sim \mathcal{D}^{n}$ be a private training set, $f_{S}$ be the IP-protected (victim) DNN model trained on $S$, $\mathcal{P}_{S} = \{\bm{z}\mid \bm{z} \in S\}_{n_{S}} $ be an auxiliary dataset
associated with $S$ that contains $n_{S}$ random samples from the private training set $S$, and $\mathcal{B} =\{\mathcal{A},\mathcal{D}\}$ be the public background knowledge that contains an MI attack algorithm $\mathcal{A}$ and the publicly available data distribution $\mathcal{D}$. We show the proposed ownership testing algorithm in Algorithm~\ref{alg:DI}. 

Algorithm~\ref{alg:DI} performs a one-tailed hypothesis test on the observed membership advantage fingerprints for stolen model $f$ on a given private training set $S$. We first give formal definitions of the membership advantage fingerprints of a DNN model $f$ as follows: 
\begin{definition}[Membership advantage fingerprint]
\label{def:MAF}
We define the fingerprint of a DNN model $f$ as its privacy leakage against the private training set $S$, which is empirically computed as $\mathcal{F}(f\mid S)=  \operatorname{Adv}^{\mathrm{M}} (\mathcal{A}, f, \mathcal{D}).$  
\end{definition}

Empirically, $\mathcal{F}$ represents the membership advantage of the attacker over a random guesser. If $f$ is independent of $f_{S}$, then $\mathcal{F}(f\mid S)$ should be close to 0. Therefore, we set the null hypothesis as $\mathcal{F}(f\mid S)=0$, which indicates that the suspect model $f$ is not a stolen copy of the victim model $f_{S}$. On the other hand, a larger value of $\mathcal{F}(f\mid S)$ in the alternative hypothesis indicates that the suspect model $f$ discloses more privacy of the private training set $S$ of $f_{S}$ and is more likely to be a stolen copy of $f_{S}$.

In the verification process, the verifier computes the likelihood of observed fingerprints. Firstly (step 1 in Algorithm~\ref{alg:DI}), the verifier randomly selects $n_{S}$ training samples from the private dataset $S$ and randomly selects $n_{S}$ samples from the public data distribution $\mathcal{D}$. Then (step 2 in Algorithm~\ref{alg:DI}), the empirical computation of fingerprint estimation is performed as follows:
\begin{equation}
\label{eq:fingerprint_estimation}
    \mathcal{F}^{\star}(f\mid S) = \underset{\bm{z} \sim D_{0}}{\mathbb{E}}\left[\mathcal{A}\left (\bm{z},f, \mathcal{D}\right)\right]-\underset{\bm{z} \sim D_{1}}{\mathbb{E}}\left[\mathcal{A}\left (\bm{z},f,\mathcal{D}\right)\right].
\end{equation}

Next (step 3 in Algorithm~\ref{alg:DI}), it computes the p-value for observed fingerprints. The output p-value stands for the likelihood of a suspect model not being a stolen model. It computes
\begin{equation}
    P = 1-\Pr[Z> \mathcal{F}^{\star}(f\mid S)],
\end{equation}
where $Z\sim \mathcal{N}(0,\sigma)$ and $\sigma$ are estimated by the observed $\mathcal{F}^{\star}(f\mid S)$. Thus, for the stolen models, a lower p-value indicates better OT performance. Finally (step 4 in Algorithm~\ref{alg:DI}), we give the judgment based on pre-defined significant level $\alpha$. 

The use of hypothesis testing in VeriDIP serves the purpose of enabling public verifiability. Hypothesis testing allows for a reduction in the number of exposed training samples during ownership verification while maintaining a satisfactory level of verification confidence. If the verifier (as shown in Figure~\ref{fig:ot}) is a third-party agency or if the verification process is required to be executed publicly, directly exposing the entire private training set $S$ to the public would lead to severe privacy violations. 

We then theoretically analyze factors that influence the performance of our OT algorithm.

\begin{thm}\label{pvalues} The p-value returned by Algorithm~\ref{alg:DI} is negatively correlated with the extracted model fingerprint estimation value and sample size $n_s$.  
\end{thm}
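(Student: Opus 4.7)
The plan is to exhibit the p-value $P$ as the composition of two strictly monotone maps and then read off the signs of the dependencies from the composition. Under the null hypothesis $H_0:\mathcal{F}(f\mid S)=0$, the statistic $\mathcal{F}^{\star}(f\mid S)$ from Equation~\eqref{eq:fingerprint_estimation} is a difference of two sample means, each averaging $n_s$ i.i.d.\ bounded (Bernoulli-valued) outputs of $\mathcal{A}(\cdot,f,\mathcal{D})$. By the Central Limit Theorem, its sampling distribution under $H_0$ is approximately $\mathcal{N}(0,\sigma^2/n_s)$ for some effective variance $\sigma^2$ bounded by $1/2$.

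Next, I would standardize to form the z-statistic
\begin{equation*}
T \;=\; \frac{\mathcal{F}^{\star}(f\mid S)\sqrt{n_s}}{\sigma},
\end{equation*}
and write the one-sided p-value as $P = \Pr[Z > T] = 1-\Phi(T)$ for $Z\sim\mathcal{N}(0,1)$, interpreting the formula in the text (up to sign convention) via the standard z-test construction.

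From here the proof is essentially two monotonicity observations chained together. Because $\Phi$ is strictly increasing, $T\mapsto P$ is strictly decreasing. Because $T$ is linear in $\mathcal{F}^{\star}(f\mid S)$ with positive coefficient and increasing in $n_s$ through the factor $\sqrt{n_s}$ (for the relevant regime $\mathcal{F}^{\star}(f\mid S)>0$), it is strictly increasing in each of the two quantities when the other is held fixed. Composing the two, $P$ is strictly decreasing --- i.e., negatively correlated --- in both $\mathcal{F}^{\star}(f\mid S)$ and $n_s$, which is exactly the claim.

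The only delicate point, and what I would flag as the main obstacle, is the treatment of $\sigma$. The paper says $\sigma$ is estimated from the observed $\mathcal{F}^{\star}(f\mid S)$, so in principle $\sigma$ could itself drift with $\mathcal{F}^{\star}$ or with $n_s$ and in pathological cases reverse a dependency. I would resolve this by noting that for $[0,1]$-valued MI outputs the per-sample variance is bounded above by $1/4$, so the standard error of the difference of means scales as $\Theta(1/\sqrt{n_s})$ regardless of the plug-in used; hence the $\sqrt{n_s}$ factor in $T$ dominates, and the plug-in estimator cannot overturn the monotonicity at the rate that matters. With this clarification the composition argument goes through cleanly, and the theorem follows.
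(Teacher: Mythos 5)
Your proof is correct and takes essentially the same route as the paper's: under $H_0$ apply the central limit theorem to the difference of sample means, write the one-sided p-value as $1-\Phi(\cdot)$, and read off the negative dependence on both the fingerprint estimate and $n_s$ from the monotonicity of $\Phi$. Your explicit $\sqrt{n_s}$ standardization and the remark about the plug-in variance are if anything slightly more careful than the paper's version, which uses separate $\sigma_0,\sigma_1$ and writes a factor $n_S$ in the standardized statistic where $\sqrt{n_S}$ (as in its later DP bound) is the consistent choice.
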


\begin{proof}
In Algorithm~\ref{alg:DI}, assume $H_{0}$ is true then $\operatorname{Adv}^{\mathrm{M}} (\mathcal{A}, f, \mathcal{D}) =0$. Let the observed the standard deviation of $\mathcal{A}\left (\bm{z},f, \mathcal{D}\right)$ be $\sigma_{0}$ and $\sigma_{1}$, for $\bm{z}\in S$ and $\bm{z}\in \mathcal{D}$, respectively. According to the central limit theorem~\cite{shafer2013introductory}, $\underset{\bm{z} \sim D_{0}}{\mathbb{E}}\left[\mathcal{A}\left (\bm{z},f, \mathcal{D}\right)\right]-\underset{\bm{z} \sim D_{1}}{\mathbb{E}}\left[\mathcal{A}\left (\bm{z},f,\mathcal{D}\right)\right]$ approximately follows Gaussian distribution $\mathcal{N}(0, \sqrt{\frac{\sigma_{0}^{2}+\sigma_{1}^{2}}{n_s}})$, where $D_{0}$ and $D_{1}$ are randomly sampled $n_{S}$-sized datasets, from $S$ and $\mathcal{D}$, respectively.
Thus, p-value is computed as:
\begin{equation}
\label{eq:pvalues}
 \begin{aligned}
	P &=1-\Phi\left (\frac{\left(\underset{\bm{z} \sim D_{0}}{\mathbb{E}}\left[\mathcal{A}\left (\bm{z},f, \mathcal{D}\right)\right]-\underset{\bm{z} \sim D_{1}}{\mathbb{E}}\left[\mathcal{A}\left (\bm{z},f,\mathcal{D}\right)\right]\right)*n_{S}}{\sqrt{\sigma_{0}^{2}+\sigma_{1}^{2}}}\right)\\
 &=1-\Phi\left (\frac{\mathcal{F}^{\star}(f\mid S)*n_{S}}{\sqrt{\sigma_{0}^{2}+\sigma_{1}^{2}}} \right),
\end{aligned}
\end{equation}
where $\Phi$ is the cumulative distribution function of the standard normal distribution and $D_0$ and $D_{1}$ are two randomly sampled $n_{S}$-sized datasets from $S$ and $\mathcal{D}$, respectively.

\end{proof}

Referring to Equation~\eqref{eq:pvalues}, it can be observed that $\sigma_0$ and $\sigma_1$ are constants specific to the neural networks used. Hence, generalized models (with less overfitting) may pose challenges in obtaining satisfactory ownership judgments when limited sensitive training samples are available (smaller $n_{S}$). Additionally, a more potent membership inference (MI) attack can enhance the likelihood of obtaining positive judgments for plagiarism.

\SetKwInput{KwInput}{Input}                
\SetKwInput{KwOutput}{Output} 
\begin{algorithm}[!t]
	\caption{Ownership Testing Algorithm $\mathcal{V} (f,\mathcal{P}_{S},\mathcal{B})$}
	\label{alg:DI}
	\DontPrintSemicolon
    \KwInput{Suspect model $f$, sample size $n_{S}$, sensitive training set $S$, fingerprint estimation algorithm $\mathcal{F}(f\mid S)$, public data distribution $\mathcal{D}$, significance level $\alpha$.} 
    \KwOutput{Probability of not being a stolen model $P$, ownership testing outcome $Y$.}
    \textbf{Set hypotheses.}\;
    $H_{0}$: $\mathcal{F}(f\mid S) = 0$;\;
    $H_{a}$: $\mathcal{F}(f\mid S) > 0$.\;
    \textbf{Verification.}:\;
    \textbf{1.} Randomly sample two $n_{S}$-sized datasets $D_{0}$ and $D_{1}$, from $S$ and $\mathcal{D}$, respectively;\;
    \textbf{2.} Compute fingerprints estimation $\mathcal{F}^{\star}(f\mid S)$ on $D_{0}$ and $D_{1}$ following Equation~\eqref{eq:fingerprint_estimation};\;
    \textbf{3.}  Calculate the p-value $P$ for $\mathcal{F}^{\star}(f\mid S)$;\;
    \textbf{4.} If $P<\alpha$, reject $H_{0}$ and suggest $H_{a}$, $Y=1$; Else,  $Y=0$.\;
    \textbf{return} $P$ and $Y$.

\end{algorithm}

\subsection{Fingerprints Extraction}
\label{thre_mia}
In this section, we provide a comprehensive explanation of the implementation process for estimating the membership advantage fingerprint, as defined in Definition~\ref{def:MAF}. The goal is to compute the membership advantage $\operatorname{Adv}^{\mathrm{M}} (\mathcal{A}, f, \mathcal{D}) = \underset{\bm{z} \sim \mathcal{D}}{\mathbb{E}}\left[\mathcal{A}\left (\bm{z},f, \mathcal{D}\right)\right]-\underset{\bm{z} \sim S}{\mathbb{E}}\left[\mathcal{A}\left (\bm{z},f,\mathcal{D}\right)\right]$ (refer to Equation~\eqref{eq:MI_adv}). It is worth noting that any existing black-box membership inference (MI) attack algorithms can be utilized as fingerprint extractors. In this paper, we discuss two specific instantiations.

For illustrative purposes, we begin by considering a simple MI attack ---\emph{Global threshold MI attack}~\cite{yeom2018privacy}. The definition is as follows. 
\begin{definition}[Global MI attack $\mathcal{A}$~\cite{yeom2018privacy}]
\label{def:loss_mi}
	Assume the loss of a machine learning model $f$ is bounded by a constant $B$, denoted as $\ell (f,\bm{z})\leq B$. Data $\bm{z}= (\bm{x},y)$ are sampled from the training set $S$ or data distribution $\mathcal{D}$. Given model $f$, sample $\bm{z}= (\bm{x},y)$, public data distribution $\mathcal{D}$, the MI attack algorithm $\mathcal{A}_{\ell}\left (\bm{z}, f, \mathcal{D}\right)$ output 1 with probability $1-\ell (f,\bm{z})/B$.
\end{definition}

The membership advantage fingerprint is estimated as follows:
\begin{equation}
\footnotesize
\label{eq:blf}
 \begin{aligned}
& \quad \mathcal{F}(f\mid S) \\
&= \operatorname{Adv}^{\mathrm{M}} (\mathcal{A}_{\ell}, f, \mathcal{D})\\
     &=\mathbb{E}\left[\frac{\ell\left(f, \bm{z}\right)}{B} \Bigm| b=1\right]-\mathbb{E}\left[\frac{\ell\left(f, \bm{z}\right)}{B} \Bigm| b=0\right] \\
 	&{=} \underset{\bm{z} \sim \mathcal{D}}{\mathbb{E}}\left[\frac{\ell\left (f,\bm{z}\right)}{B}\right]- \underset{\bm{z} \sim S}{\mathbb{E}}\left[\frac{\ell\left (f,\bm{z}\right)}{B}\right]. \\
 \end{aligned}
\end{equation}

We also consider the latest (to the best of our knowledge) membership inference (MI) attack, known as the \emph{Per-sample threshold MI attack}~\cite{carlini_membership_2022}. This attack takes a different approach by training multiple shadow models to learn the discrepancy in the model's loss distribution for each sample, distinguishing between samples that are part of the training set and those that are not. For each data point $\bm{z}$, the attack fits two Gaussian distributions, $\mathcal{N}\left(\mu_{\mathrm{in}}, \sigma_{\mathrm{in}}^2\right)$ and $\mathcal{N}\left(\mu_{\mathrm{out}}, \sigma_{\mathrm{out}}^2\right)$, to the confidence distribution in the logit scale. Subsequently, a likelihood test is performed to compute $L(\bm{z}) = \frac{\text{logit}(p_{z})\mid \mathcal{N}\left(\mu_{\mathrm{in}}, \sigma_{\mathrm{in}}^2\right)}{\text{logit}(p_{z})\mid \mathcal{N}\left(\mu_{\mathrm{out}}, \sigma_{\mathrm{out}}^2\right)}$, where $\text{logit}(p) = \ln(\frac{p}{1-p})$ and $p_{\bm{z}} = -\exp(\ell(f,\bm{z}))$. A large value of $L(\bm{z})$ indicates a higher likelihood of the data point $\bm{z}$ being a member. In this attack, the membership advantage is computed as the difference between the true positive rate (TPR) and the false positive rate (FPR) of the MI attack algorithm.

Note that while the per-sample threshold MI attack may be computationally inefficient due to the need to train multiple shadow models for each batch of MI queries, it is particularly suitable for model ownership verification tasks. This is because the ownership testing verifier has prior knowledge of the data used for conducting MI attacks, allowing the shadow models to be pre-trained in advance.

\subsection{Enhanced VeriDIP}\label{sec:enh}
Recall that we have previously suspected that more generalized models may yield unsatisfactory ownership judgments due to the negative correlation between input membership advantage fingerprints and output p-values, as shown in Equation~\eqref{eq:pvalues}. To address this issue, we propose an enhanced version of VeriDIP that mitigates the reliance on the effectiveness of VeriDIP's MI attack success rates. The key idea is to utilize \emph{the worst-case} privacy leakage instead of \emph{the average-case} privacy leakage as model fingerprints for ownership verification. While average privacy risks are computed using a set of randomly sampled training samples, the worst-case privacy leakage focuses on measuring the privacy risks of a set of less private training samples. It serves as a tighter lower bound for $\epsilon$ defined in differential privacy. Therefore, we believe it constitutes an enhanced fingerprint for identifying stolen models.

Recently, several studies have demonstrated that certain training samples exhibit lower levels of privacy than others when subjected to MI attacks~\cite{carlini_membership_2022,feldman_what_2020}. These samples with reduced privacy are well-suited for estimating worst-case privacy leakage. We define less private data in model $f$ as follows:

\begin{figure}[!t]
    \centering  
    \subfigure[More private data]{
        \includegraphics[width=0.47\columnwidth]{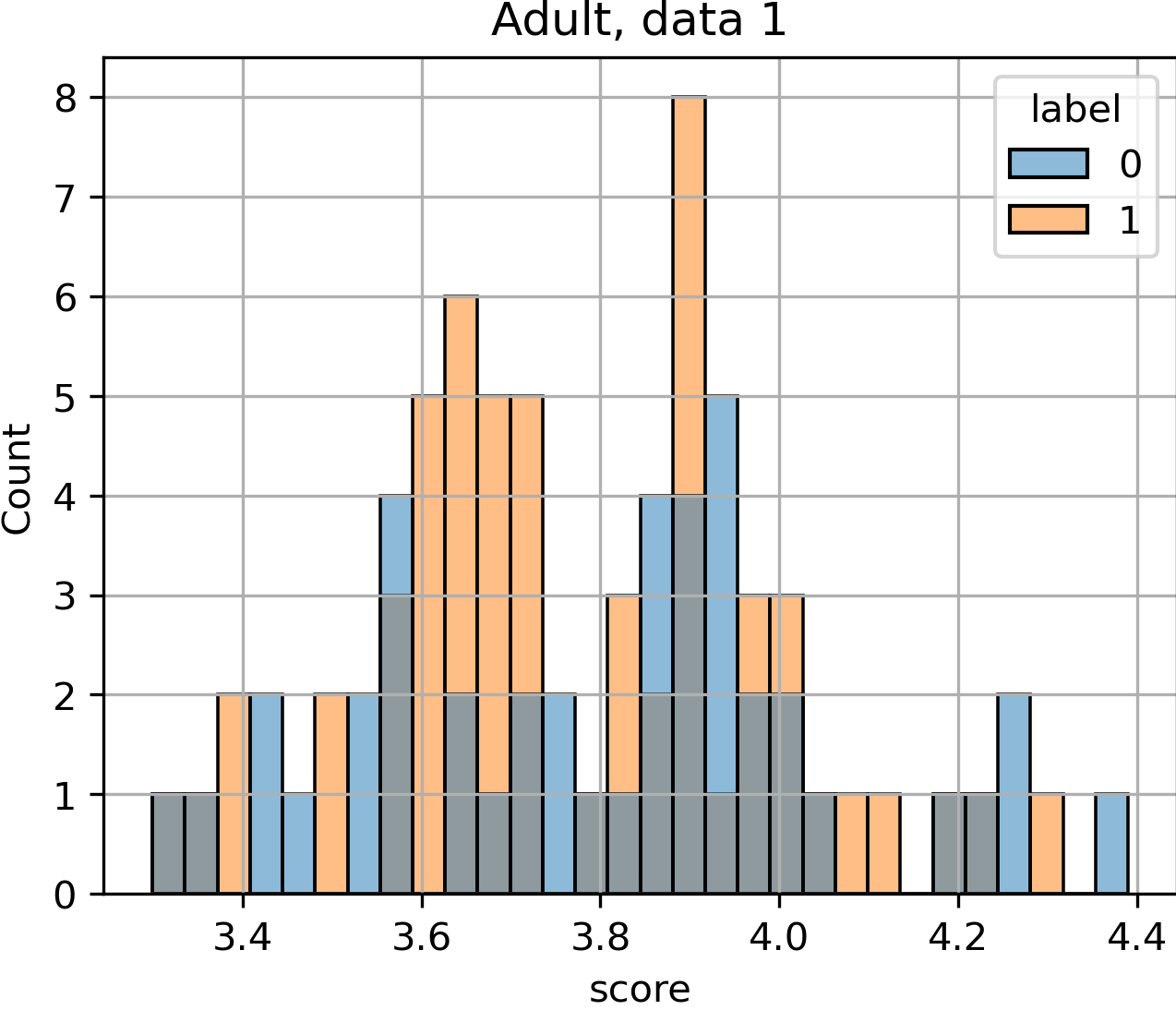}
        \label{subfig:non-sensitive data}
    }
    \subfigure[Less private data]{
        \includegraphics[width=0.47\columnwidth]{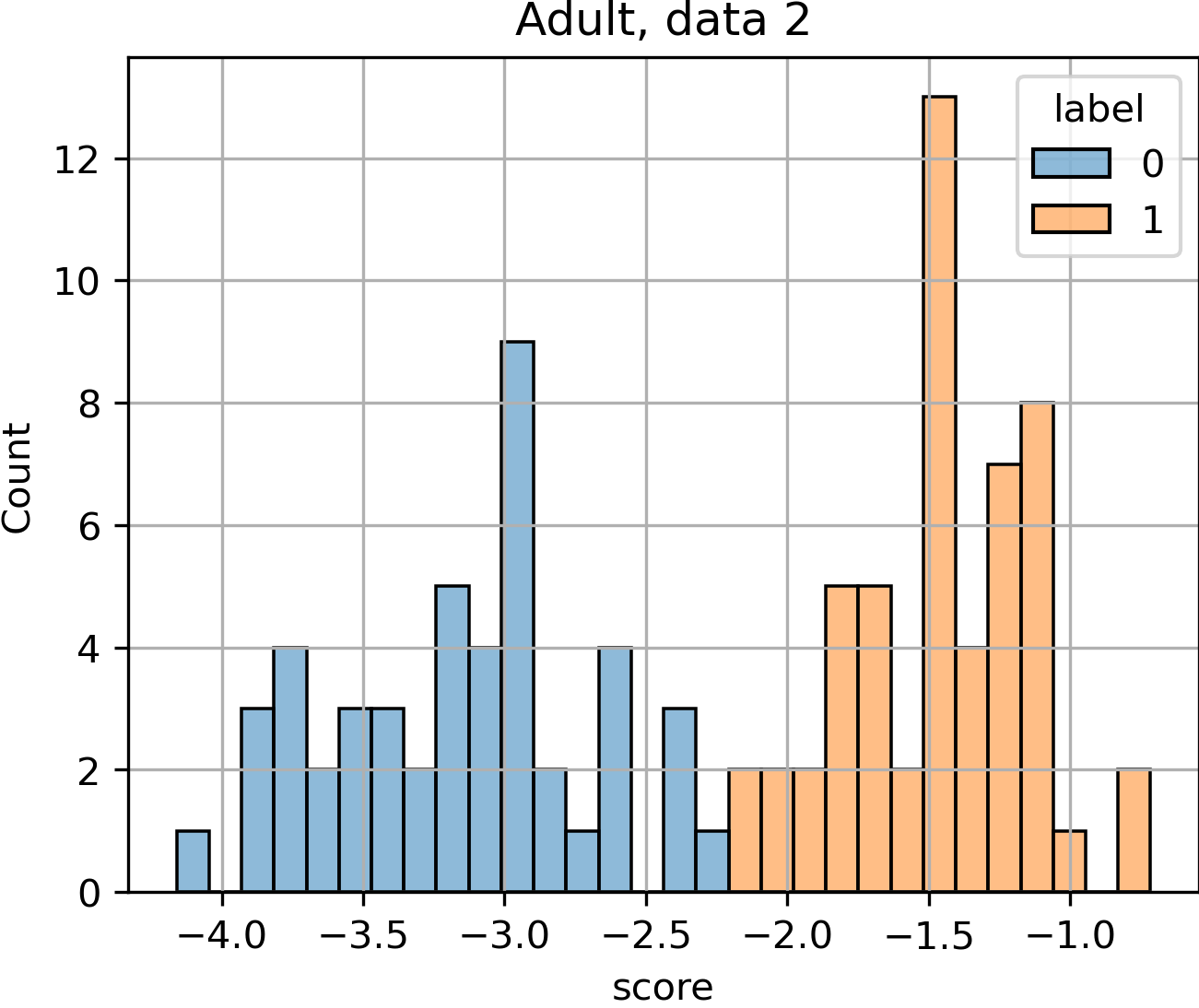}
        \label{subfig:sensitive data}
    }
    \vspace{-4mm}
    \caption{Loss score distribution comparison for the data ``IN" model and ``OUT" of model, \textsf{Adult} database. The response of DNN models is more sensitive to the absence of data $2$ than data $1$.}
    \label{fig:lesspriv}
\end{figure}

\begin{definition}[Less private Data]\label{def:more_vol} Let $S$ be the training set for the DNN model $f_{S}$. We define a data point $\bm{z} \in S$ as a less private data point if the model trained on the set $S\setminus {\bm{z}}$ is significantly different from $f_{S}$.
\end{definition} 

\textbf{Search for the less private data.} Measuring the difference between two DNN models, as defined in Definition~\ref{def:more_vol}, can be challenging. However, if we assume that the removal of a data point $\bm{z}$ from the training set has the most significant impact on the model's prediction for that data point, the problem becomes more manageable. We can compute the loss difference between two models by comparing their performance when trained with and without the presence of $\bm{z}$. This can be expressed as follows:
\begin{equation}
\label{eq:pred_gap}
\eta(\bm{z}) = \frac{\ell(f_{S\setminus \bm{z}},\bm{z})}{\ell(f_{S},\bm{z})}.
\end{equation}
The data point with a larger $\eta(z)$ value is less private.

To provide an example of the less private data, we conducted a search within the training set of DNN models to identify the sample with the highest $\eta(\bm{z})$ score. The behavior of a less private data point and a more private data point is demonstrated in Figure~\ref{fig:lesspriv}. The x-axis represents a transformation of the loss $S^{-1}(\exp(-\ell(f,\bm{z})))$ following~\cite{carlini_membership_2022}, where $S^{-1}$ denotes the inverse of the \textsf{Sigmoid} function. This transformation ensures that the transformed loss distribution is approximately normal. The y-axis represents the frequency of discrete loss values. From Figure~\ref{fig:lesspriv}, it is evident that the prediction capability of DNN models is particularly sensitive to the presence or absence of certain data points, as illustrated in Figure~\ref{subfig:sensitive data} compared to Figure~\ref{subfig:non-sensitive data}. The absence of data point 2 significantly reduces the model's confidence in predicting the label of data point 2. Therefore, data point 2 corresponds to the less private data we are specifically interested in identifying.

Through further analysis, we discovered that the less private data points are significantly more abundant compared to other data points. To assess the prevalence of the less private data, we traversed all training data points for four benchmarks and computed the corresponding $\eta(\bm{z})$ values for each data point. The distributions of $\eta(\bm{z})$ for each database are depicted in Figure~\ref{fig:sensitivity_dist}.  Notably, all distributions exhibit a \emph{long tail} pattern, indicating a substantial presence of the less private data points. Consequently, if we were to draw random samples to estimate privacy leakage, encountering the less private data points would be a rare occurrence. Therefore, identifying these less private data points is crucial in obtaining robust privacy leakage fingerprints.

In summary, for the enhanced VeriDIP, our approach involves initially identifying a set of several less private data points, similar to ``Data 2" in Figure~\ref{subfig:sensitive data}, for each victim model beforehand. During the verification phase, the verifier utilizes these data points to extract worst-case privacy leakage fingerprints, rather than relying on average-case privacy leakage, as evidence for claiming ownership. It is worth noting that training shadow models to identify the less private data incurs additional computational costs. However, it is important to highlight that, for a given victim model, only one dataset of less private data is required. This dataset can be used for an unlimited number of ownership verifications for the respective victim model. Consequently, the additional cost associated with training the shadow models does not pose a significant challenge for the enhanced VeriDIP approach.

\subsection{Bounding Model's Ownership via Differential Privacy Budget}\label{sec:bound_DP}
Maini et al.~\cite{maini2021dataset} raised an open question regarding the effectiveness of ownership testing methods based on overfitting metrics when applied to differentially private DNN models. In this paper, we aim to address this question by investigating the behavior of the p-value in Algorithm~\ref{alg:DI} for $\epsilon$-DP DNN models, where $\epsilon$ represents the privacy budget.

Differential privacy techniques~\cite{dwork2006calibrating}, considered the de facto standard for privacy protection, provide an upper bound on the advantage of MI attacks~\cite{yeom2018privacy} by definition. Consequently, they also place a lower bound on the p-value obtained through the model ownership proof algorithm, such as Algorithm~\ref{alg:DI}. These techniques introduce a privacy budget $\epsilon$ to govern the level of privacy protection afforded to DNN models (see Section~\ref{sec:DP}). A smaller value of $\epsilon$ corresponds to stronger privacy protection.

Let $f_{\epsilon}$ be a DNN model that satisfies $\epsilon$-DP and $\mathcal{A}$ be the global MI attack algorithm in Definition~\ref{def:loss_mi}. According to~\cite{yeom2018privacy}, the membership advantadge satisfies $\operatorname{Adv}^{\mathrm{M}} (\mathcal{ A}, f_{\epsilon}, \mathcal{D}) \leq \exp (\epsilon)-1$. Substituting the inequality into Equation~\eqref{eq:pvalues}, we have 
\begin{equation}
\footnotesize
\label{eq:dpbound}
	\begin{aligned}
	P
	&=1-\Phi \left(\frac{ \left(\underset{\bm{z} \sim D_{0}}{\mathbb{E}}\left[\mathcal{A}\left (\bm{z},f_{\epsilon}, \mathcal{D}\right)\right]-\underset{\bm{z} \sim D_{1}}{\mathbb{E}}\left[\mathcal{A}\left (\bm{z},f_{\epsilon},\mathcal{D}\right)\right]\right)}{\sqrt{\frac{\sigma_{0}^{2}+\sigma_{1}^{2}}{n_{S}}}}\right)\\
	& \geq 1-\Phi \left(\frac{ \left(\exp (\epsilon)-1\right)*\sqrt{n_{S}}}{\sqrt{\sigma_{0}^{2}+\sigma_{1}^{2}}}\right).
\end{aligned}
\end{equation}

Therefore, when the privacy budget $\epsilon$ and sample size $n_{S}$ are fixed, the minimum p-value is determined accordingly. We plot the minimum p-value as a function of the privacy budget $\epsilon$ for specific values of $n_{S}$. In our analysis, we consider three choices for $n_{S}$, namely $10$, $20$, and $100$. The corresponding results are illustrated in Figure~\ref{fig:dpbound}.

\begin{figure}[!t]
    \centering 
    \includegraphics[width=0.48\columnwidth]{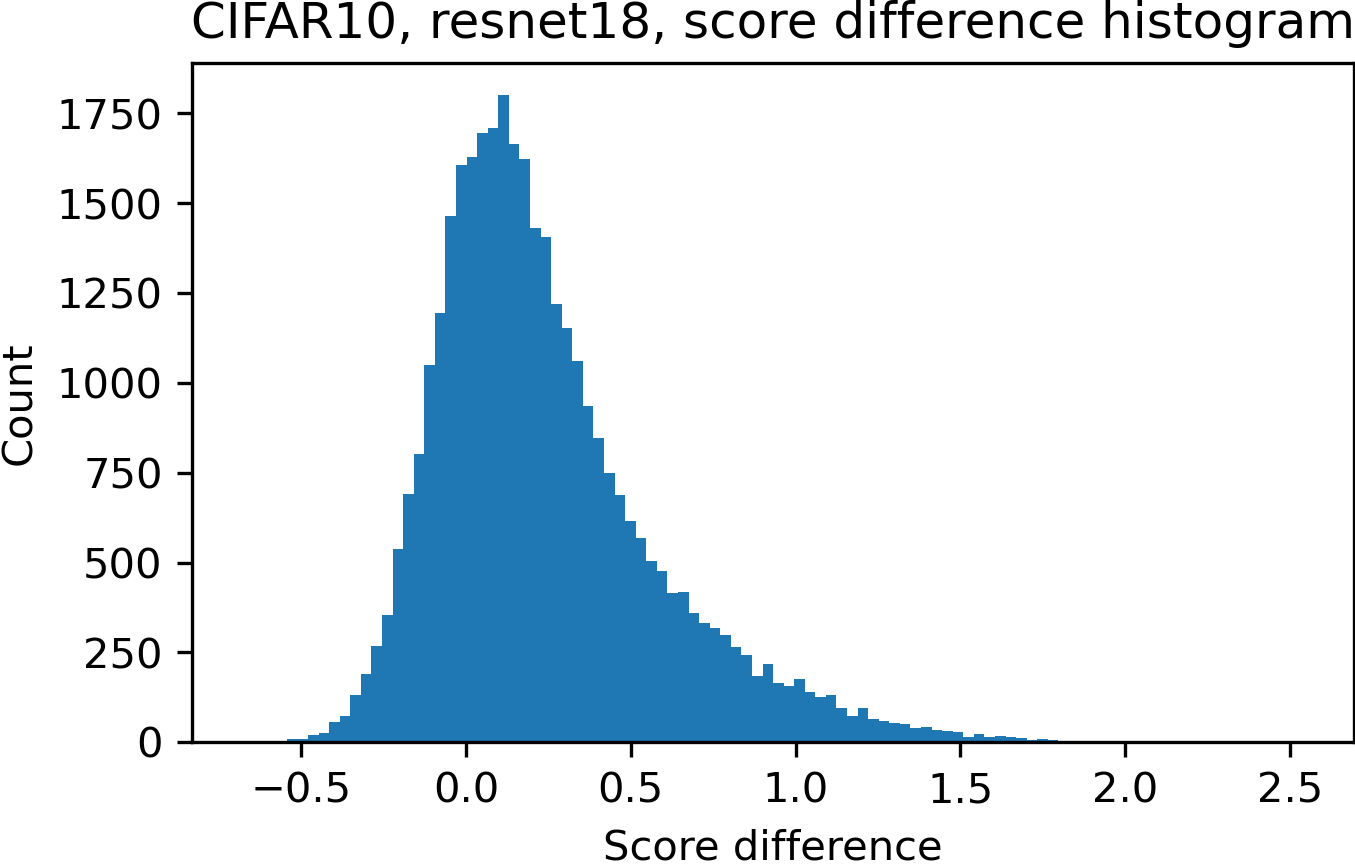}
    \includegraphics[width=0.48\columnwidth]{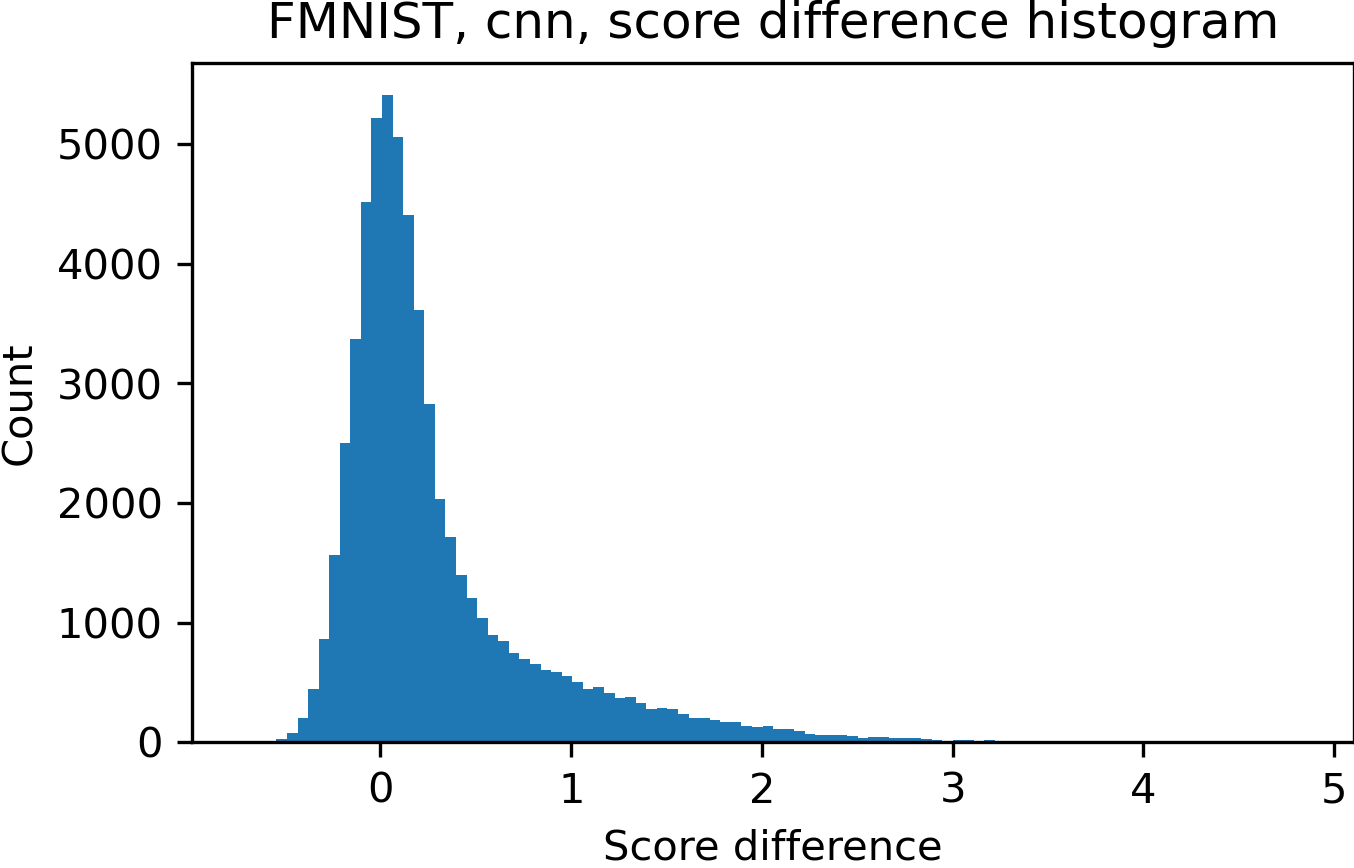}
    \includegraphics[width=0.48\columnwidth]{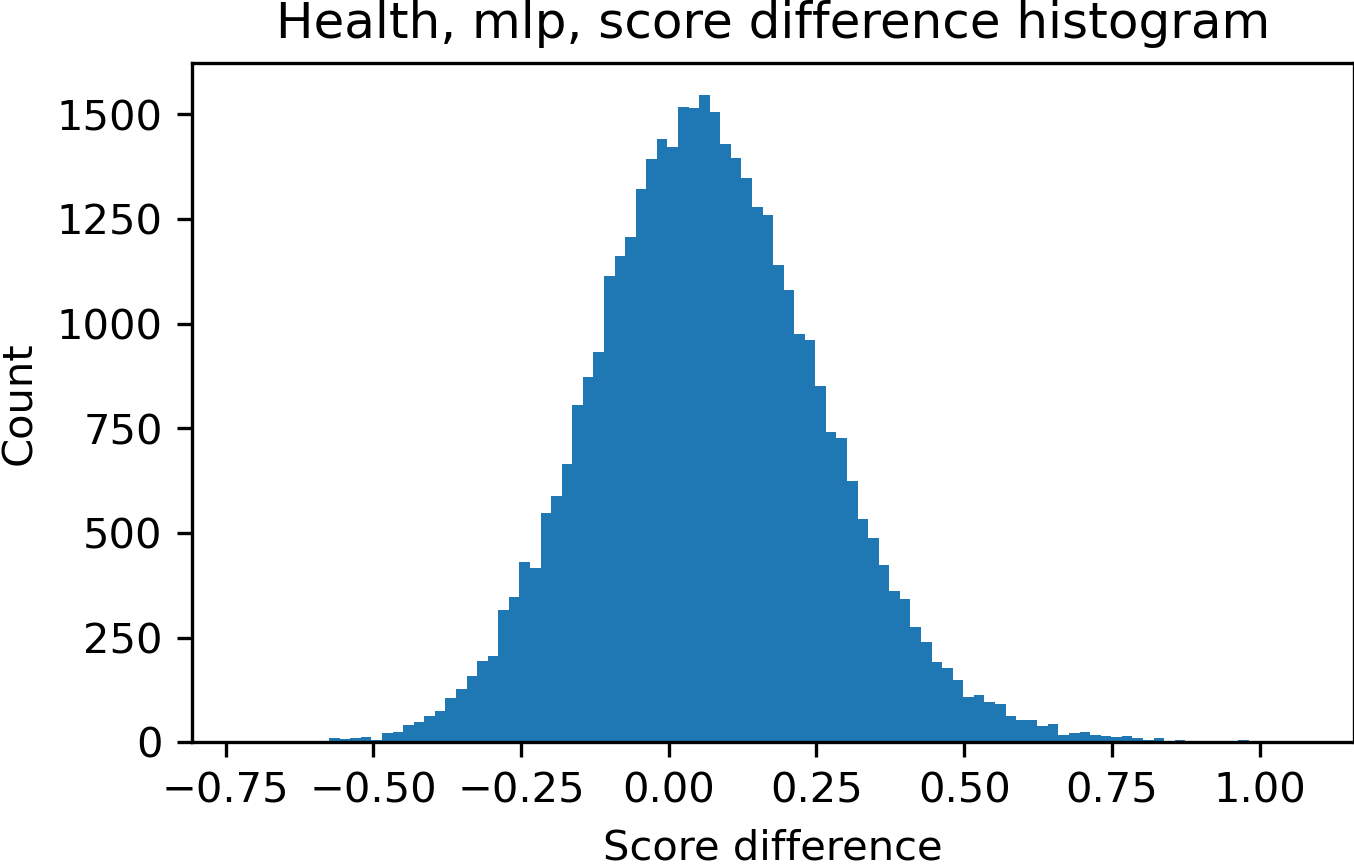}
    \includegraphics[width=0.48\columnwidth]{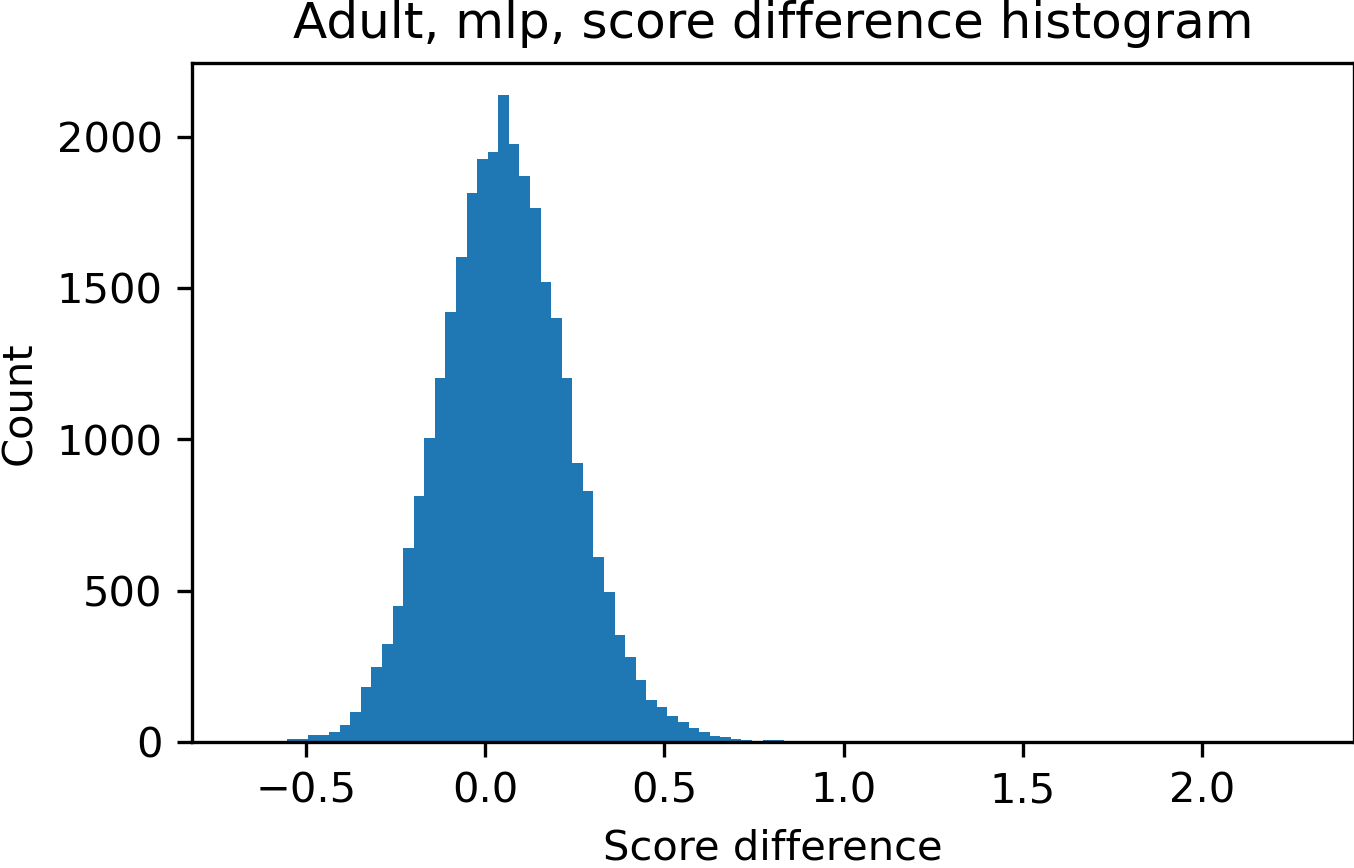}
    \vspace{-3mm}
    \caption{Score difference distribution for CIFAR-10, FMNIST, Health, Adult datasets.}
    \label{fig:sensitivity_dist}
\end{figure}

\begin{figure}[!t]
    \centering  
    \subfigure[CIFAR-10]{
        \includegraphics[width=0.47\columnwidth]{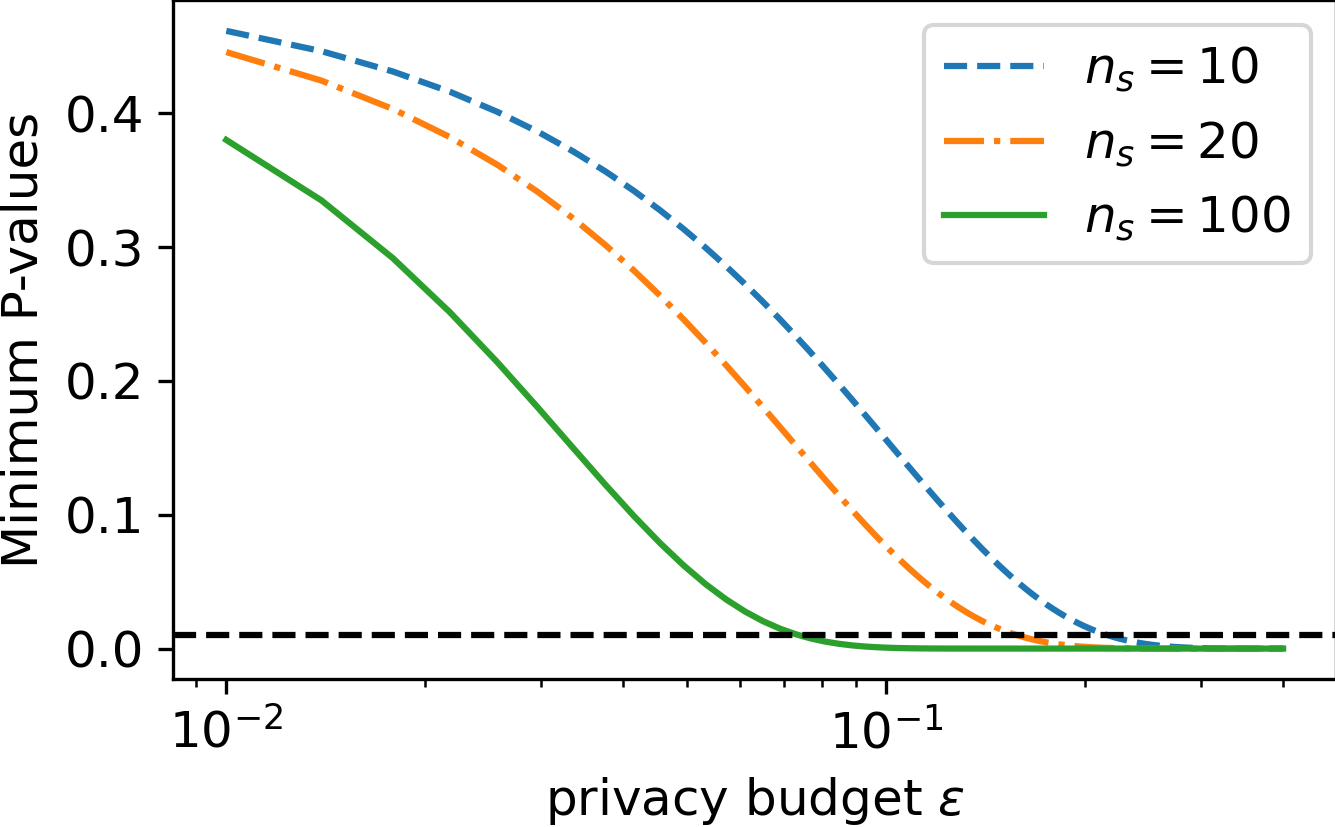}
        \label{subfig:dp_cifar10_DI_bound}
    }
    \subfigure[FMNIST]{
        \includegraphics[width=0.47\columnwidth]{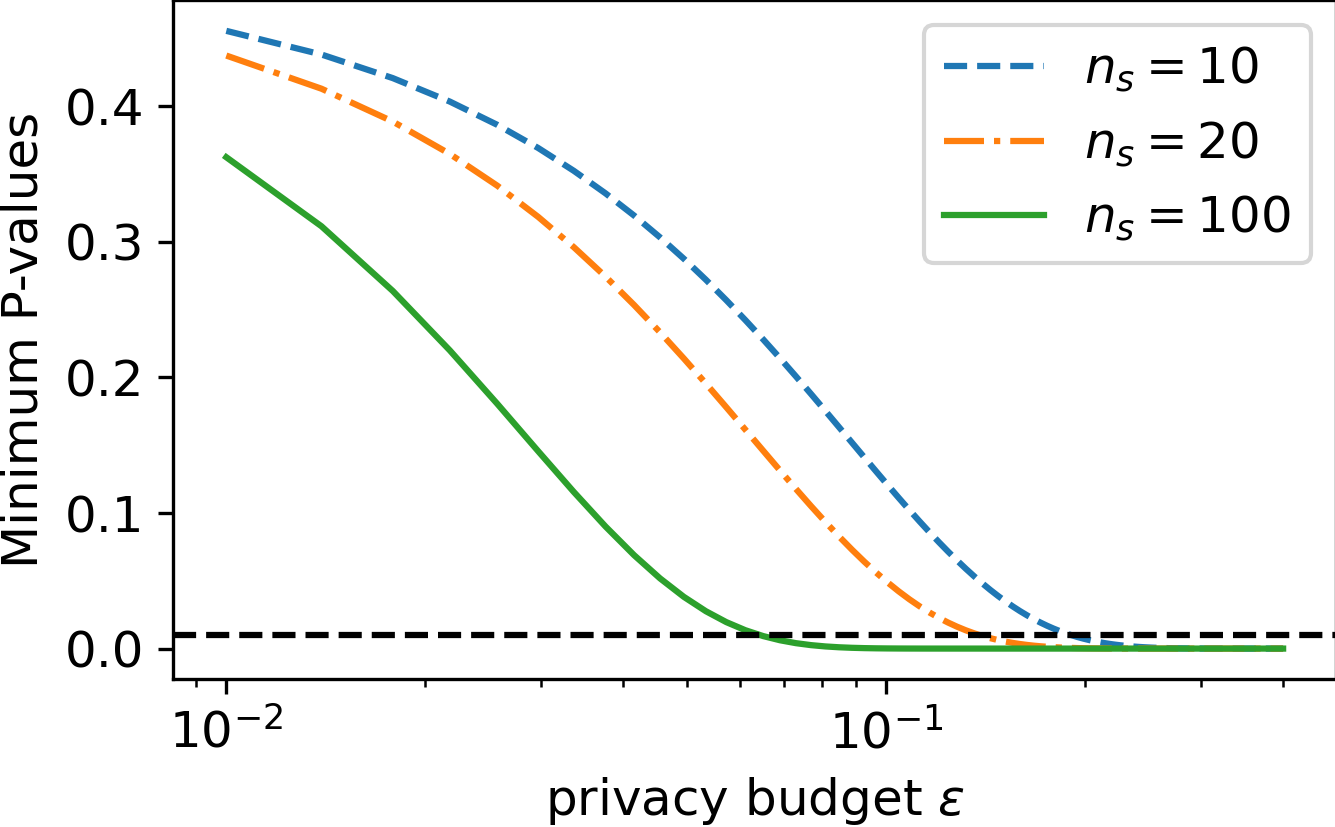}
    }
    \subfigure[Adult]{
        \includegraphics[width=0.47\columnwidth]{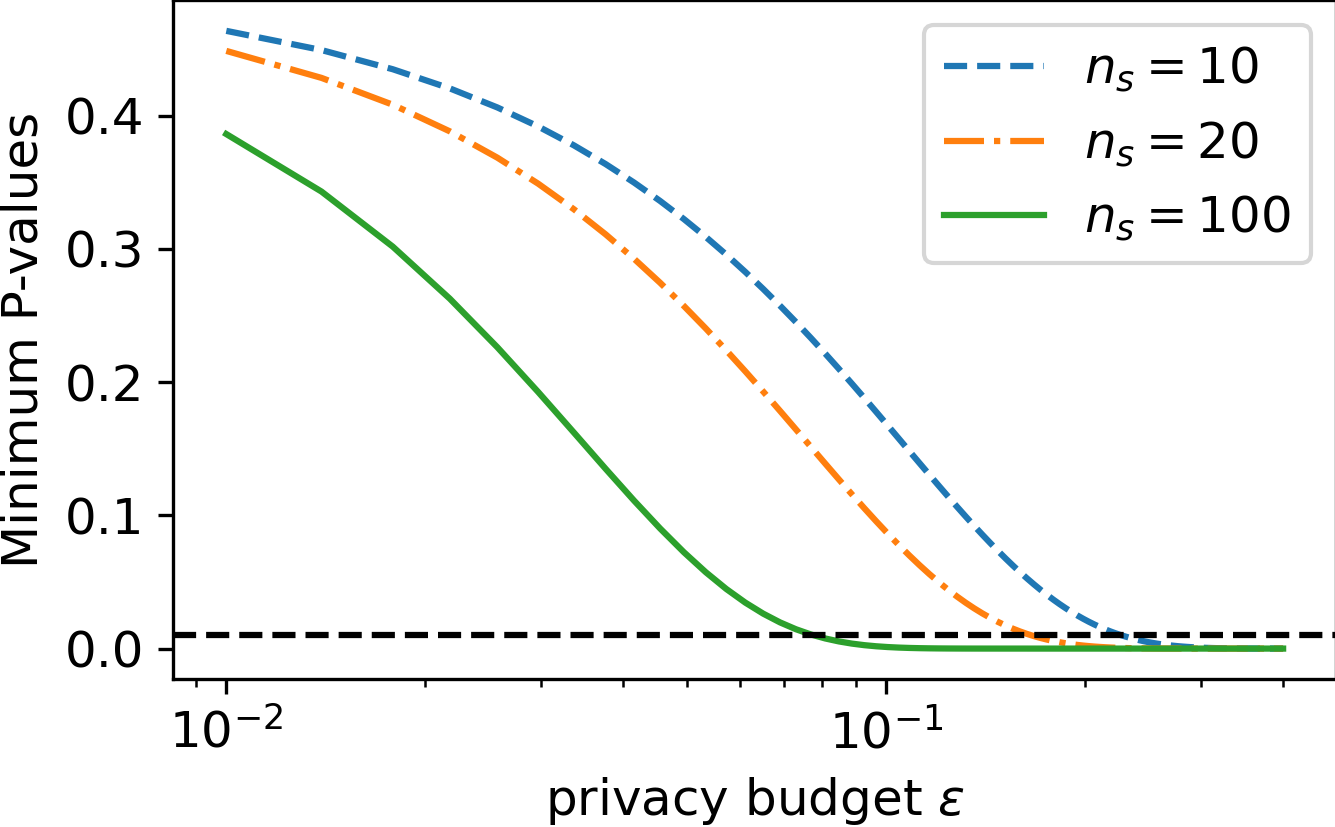}
    }
    \subfigure[Health]{
        \includegraphics[width=0.47\columnwidth]{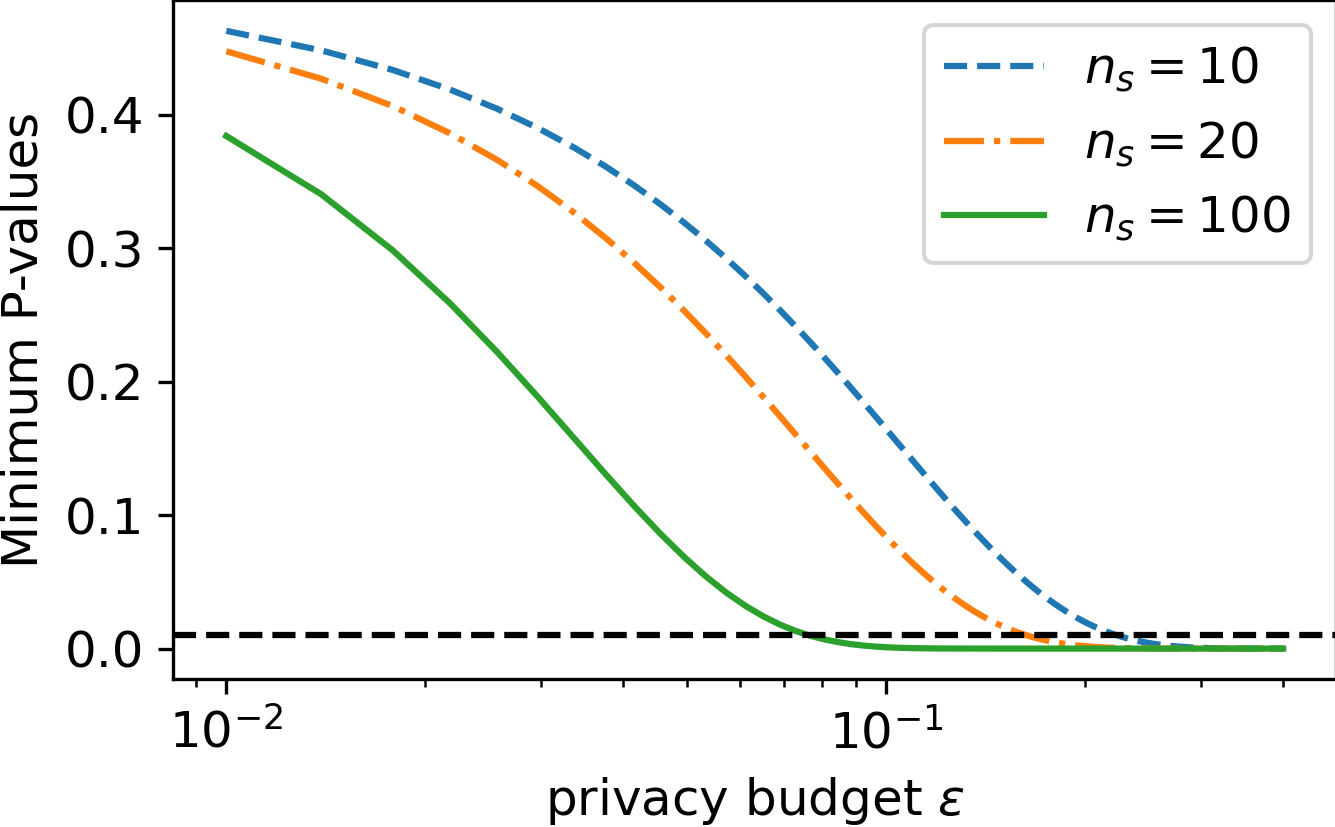}
    }
     \vspace{-3mm}
    \caption{Lower bound of p-values against the privacy budget $\epsilon$.}
    \label{fig:dpbound}
\end{figure}

\textbf{Differential privacy budgets negatively impact the performance of VeriDIP.} In Figure~\ref{subfig:dp_cifar10_DI_bound}, for the \textsf{CIFAR-10} dataset, when $\epsilon = 0.1$ and $n_{S} = 10$, the corresponding p-value is $P \geq 0.156$. This implies that if the DNN model is $0.1$-differentially private, the ownership testing algorithm, using only $10$ samples at a significance level of $\alpha = 0.01$, cannot claim ownership of this model due to the presence of differential privacy protection. This holds true regardless of the effectiveness of the deployed MI attack. By increasing $\epsilon$ to $0.5$, the lower bound of the p-value decreases to $P \geq 1.15 \times 10^{-14}$. Fortunately, in practice, it is uncommon to train machine learning models with excessively restrictive privacy budgets such as $\epsilon = 0.1$, as doing so would significantly compromise the utility of the machine learning model. In the upcoming section, we will experiment with a reasonable privacy budget on a wide range of models and datasets to explore the trade-offs between privacy protection and model ownership protection.

\section{Evaluations}
\label{sec:expriments}
In this section, we begin by introducing the experimental settings. We then conduct a comprehensive evaluation of both the basic and enhanced VeriDIP methods, comparing their performance to the state-of-the-art Dataset Inference (DI)~\cite{maini2021dataset} approach. Finally, we explore the effectiveness of VeriDIP when applied to DP DNN models.

\subsection{Experimental Setup}\label{sec:exp_setup}
To begin with, we briefly show the details of datasets and the configurations of machine learning models used in the experiments.

\noindent \textbf{Datasets.} We use four famous datasets in our experimental evaluation, CIFAR-10~\footnote{\url{https://www.cs.toronto.edu/~kriz/cifar.html}}, FMNIST~\footnote{\url{https://github.com/zalandoresearch/fashion-mnist}}, Adult~\footnote{\url{https://archive.ics.uci.edu/ml/datasets/adult}}, and Health~\footnote{\url{https://www.dshs.texas.gov/THCIC/Hospitals/Download.shtm}}. Specifically, CIFAR-10 and FMNIST are two image datasets used by recent studies in evaluating WE and OT approaches~\cite{maini2021dataset,chen_refit_2021,lukas_deep_2021,adi_turning_nodate}; Adult and Health are two tabular datasets, by which we could train (almost) perfect MI attacks-resilient model as (almost) the worst-case scenario for VeriDIP (Algorithm~\ref{alg:DI}).

\begin{itemize}[leftmargin=*]
    \item \textsf{CIFAR-10}: CIFAR-10 consists of $32 \times 32$ color images of $10$ real world objects, with $5,000$ instances of each object class.

    \item \textsf{FMNIST}: Fashion MNIST consists of $28 \times 28$ grayscale images, associated a label from $10$ classes, with $7,000$ instances of each object class.

    \item \textsf{Adult}: The US Adult Census dataset comprises $48,842$ entries, with each entry containing $13$ features. These features are utilized to infer whether an individual's income exceeds 50K/year or not.

    \item \textsf{Health}: The Heritage Health dataset consists of $139,785$ physician records and insurance claims, with each record containing $250$ features. The objective is to predict ten-year mortality by binarizing the Charlson Index, using the median value as a cutoff.
\end{itemize}

\noindent \textbf{Neural networks.} Following existing works~\cite{adi_turning_nodate,chen_refit_2021}, we train CIFAR-10 using ResNet-18 architecture and the SGD optimizer with a stepped learning rate. The initial learning rate is set to $0.01$ and is divided by ten every 20 epochs. For the FMNIST dataset, we train a convolutional neural network (CNN) using the Adam optimizer. As for the Adult and Health datasets, which are tabular datasets, we utilize a 4-layer perceptron with the Adam optimizer. The learning rate for all Adam optimizers is set to $10^{-4}$. The batch size is set to $50$ for CIFAR-10 and FMNIST, and it is set to $500$ for Adult and Health.

\noindent \textbf{Model stealing attacks.} We have discussed attackers in OT experiments in Section~\ref{sec:threat_model}. In this section, we consider three types of model stealing attacks that are commonly used for evaluating the effectiveness of copyright protection approaches. Note that fine-prune attack \cite{bailey_finepruning_2018} presented in Figure~\ref{fig:ot} is not specifically targeted at model copyright protection but rather falls under a category of defenses against model backdoor attacks. Therefore, to ensure fairness in the experiments, we did not include it in our evaluation.

\begin{itemize}[leftmargin=*]
\item \textbf{Model extraction (ME) attack}~\cite{tramer2016stealing,shafieinejad_robustness_2021}. The ME attack retrains a model from scratch by minimizing the loss between the predictions of stolen copies and its teacher predictions.
\item \textbf{Knowledge distillation (KD)}~\cite{yang_effectiveness_2019}. The KD attack retrains a model from scratch by minimizing the distance between the teacher's and student's soft predictions plus the cross-entropy loss between the student's prediction and ground-truth label $y$. The student model is the stolen copy.
\item \textbf{Fine-tuning (FT)}~\cite{chen_refit_2021}. The FT attack keeps training the victim model for a while to modify the original decision boundary. It first uses a large learning rate to erase the original decision boundary, then gradually reduces the learning rate to restore the prediction accuracy of the model. According to their result, it is effective for removing all watermarks. 
\end{itemize}

The ME and the KD are black-box attacks, while FT is a white-box attack. We use the open-source code and the same hyperparameters as the existing works of ME~\cite{shafieinejad_robustness_2021}, KD~\cite{yang_effectiveness_2019} and FT~\cite{chen_refit_2021}. We list their loss functions and hyper-parameters in Table~\ref{tab:steal_atk_hyperpara}. 
According to \cite{chen_refit_2021}, carefully tuning the learning rate can remove all model watermarks. Our aim is to determine the effectiveness of these attacks in disturbing model fingerprints.

\noindent \textbf{MI attack algorithm.} The implementation of the global threshold MI attack follows the methods proposed by Yeom et al.~\cite{yeom2018privacy}. As for the per-sample threshold MI attacks, there are two implementations: online and offline. We use the open-source code of the online implementation~\cite{carlini_membership_2022} since it demonstrates better attack performance. 

\begin{table}
	\centering
	\caption{Model stealing attack implementations. $\bm{a}^{\tau} = \sigma(\frac{\bm{a}}{\tau})$, $\sigma$ denotes the softmax function. Hyper-parameters: $\lambda_{1} = \lambda_{2}=0.5, \tau= 1.5$.}
	\vspace{-3mm}
	\label{tab:steal_atk_hyperpara}%
	\begin{tabular}{cl}
		\toprule
        Attack type & Loss function \\
        \midrule
        ME & $\ell(f_{S}(\bm{x}),f(\bm{x}))$ \\ 
        KD & $\lambda_{1}\ell(f_{S}(\bm{x}),y)$+ $\lambda_{2}\ell(f_{S}(\bm{x})^{\tau},f(x)^{\tau})$  \\
        FT &$\ell(f_{S}(\bm{x}),y)$ \\
	   \bottomrule
	\end{tabular}%
\end{table}%

\noindent \textbf{Reproduction of Dataset Inference (DI)~\cite{maini2021dataset}.} DI proposed to use ``prediction margins" as fingerprints to verify model ownership. The prediction margins are obtained by performing adversarial attacks on the suspect models. We use their black-box implementation (\emph{Blind Walk}) since it is more consistent with our attacker's capability assumptions. Plus, the \emph{Blind Walk} has better verification performance and lower computational costs than their white-box implementation (\emph{MinGD})~\cite{maini2021dataset}.

\subsection{Metrics}
We use two indicators to evaluate the performance of the model OT algorithm:
\begin{itemize}[leftmargin=*]

	\item \textbf{p-value.} The p-value is the outputs of Algorithm~\ref{alg:DI}, which is inherited from~\cite{maini2021dataset}. The p-value indicates the probability that a suspect model is not a stolen copy. The smaller this metric, the more copyright verification judgment is likely to be correct. 

	\item \textbf{Exposed sample size $n_{S}$}. $n_{S}$ denotes the minimum number of training samples exposed in the verification phase to verify the copyright of stolen copies successfully. Thus, for a fixed $\alpha$, a smaller value of $n_{S}$ indicates better privacy protection. 

\end{itemize}

\subsection{Performance of Baseline Models: Victim and Stolen Models}
We begin by training machine learning models on the four datasets and present the training set size (TrainSize), test set size (TestSize), training set accuracy (TrainAcc), test set accuracy (TestAcc), and accuracy difference (AccDiff) in Table~\ref{tab:gen_error}. It can be observed that all victim/target models achieve satisfactory accuracy. 
To improve the performance of CIFAR-10, we employ the data augmentation technique~\cite{krizhevsky2012imagenet}. This involves randomly flipping and cropping the images to generate new samples, thereby increasing the diversity of the training set and enhancing the generalization capabilities of the trained machine learning models. As depicted in Table~\ref{tab:gen_error}, the models trained on tabular datasets (i.e., Adult and Health) exhibit better generalization (with smaller TrainAcc and TestAcc differences) compared to the models trained on image datasets (i.e., CIFAR-10 and FMNIST).

We also present the performance of stolen models obtained using the ME attack, KD attack, and FT attack in Table~\ref{tab:steal_atk}. We assume that attackers possess a randomly sampled subset of the private trainset $S$, comprising $40\%$ of the data. It is important to note that the ME attacker does not have access to ground-truth labels, as per its definition. The FT attack, as described in~\cite{chen_refit_2021}, initially perturbs the original decision boundary of the model using a large learning rate and subsequently reduces the learning rate to restore the model's usability. In general, the performance of FT models tends to be superior to that of the victim model, whereas the usability of ME and KD models is slightly inferior to that of the victim model.

\begin{table}[!t]
	\centering
	\caption{Machine learning efficacy for victim models,\\ AccDiff$=$TrainAcc $-$TestAcc.}
	\vspace{-3mm}
	\label{tab:gen_error}%
	\begin{tabular}{c|ccccc}
		\toprule
	    Datasets &TrainSize& TestSize& TrainAcc & TestAcc & AccDiff \\
		\midrule
		CIFAR-10 & $17500$& $10000$ &$98.41\%$ &$86.76\%$& $11.79\%$ \\
		FMNIST  &$29700$&$10000$&$99.77\%$& $90.50\%$ & $9.51\%$ \\
		Health & $20000$&$10000$&$88.31\%$ & $86.87\%$ & $1.43\%$\\
        Adult  &$15000$&$5222$&$85.61\%$ & $84.81\%$ & $0.80\%$\\
		\bottomrule
	\end{tabular}%
\end{table}%

\begin{table}[!t]
	\centering
	\caption{Machine learning performance of stolen copies.}
	\vspace{-3mm}
	\label{tab:steal_atk}%
	\begin{tabular}{cc|cccc}
        \toprule
        Database & TrainSize & ME & KD & FT & Base\\
        \midrule
        CIFAR-10 & 7000  & $80.60\%$&$81.79\%$  & $89.61\%$ & $86.76\%$ \\ 
        FMNIST   & 11880 &$88.23\%$ & $88.23\%$ & $91.04\%$ & $90.50\%$\\
        Health   &8000   &$86.74\%$ & $86.61\%$ & $86.77\%$ & $86.87\%$\\
        Adult    & 6000  &$84.73\%$ & $84.70\%$ & $84.82\%$ & $84.81\%$\\
        \bottomrule
	\end{tabular}%
\end{table}%

\subsection{VeriDIP Performance}
\subsubsection{Fingerprints Distribution}

By conducting theoretical analysis, we can determine whether the MI advantage serves as a valid fingerprint. In such cases, its value should be higher in the victim model and approach $0$ in the independent model. Here, an independent model refers to a model that is trained separately and is not derived from the victim model. To represent independent models, we consider two scenarios: (1) models trained on disjoint but identically distributed data, specifically using validation data, and (2) models trained on different distributional data, involving other datasets. For our experiment, we train a total of $50$ victim models and $50$ independent models for each database. Subsequently, we plot the distribution of extracted model fingerprints for both victim models (positives) and independent models (negatives). The resulting distributions are presented in Figure~\ref{fig:finger_dist}.

\begin{figure*}[!t]
    \centering  
    \subfigure[CIFAR-10]{
        \label{subfig:cifar_finger_dist}
        \includegraphics[width=0.23\textwidth]{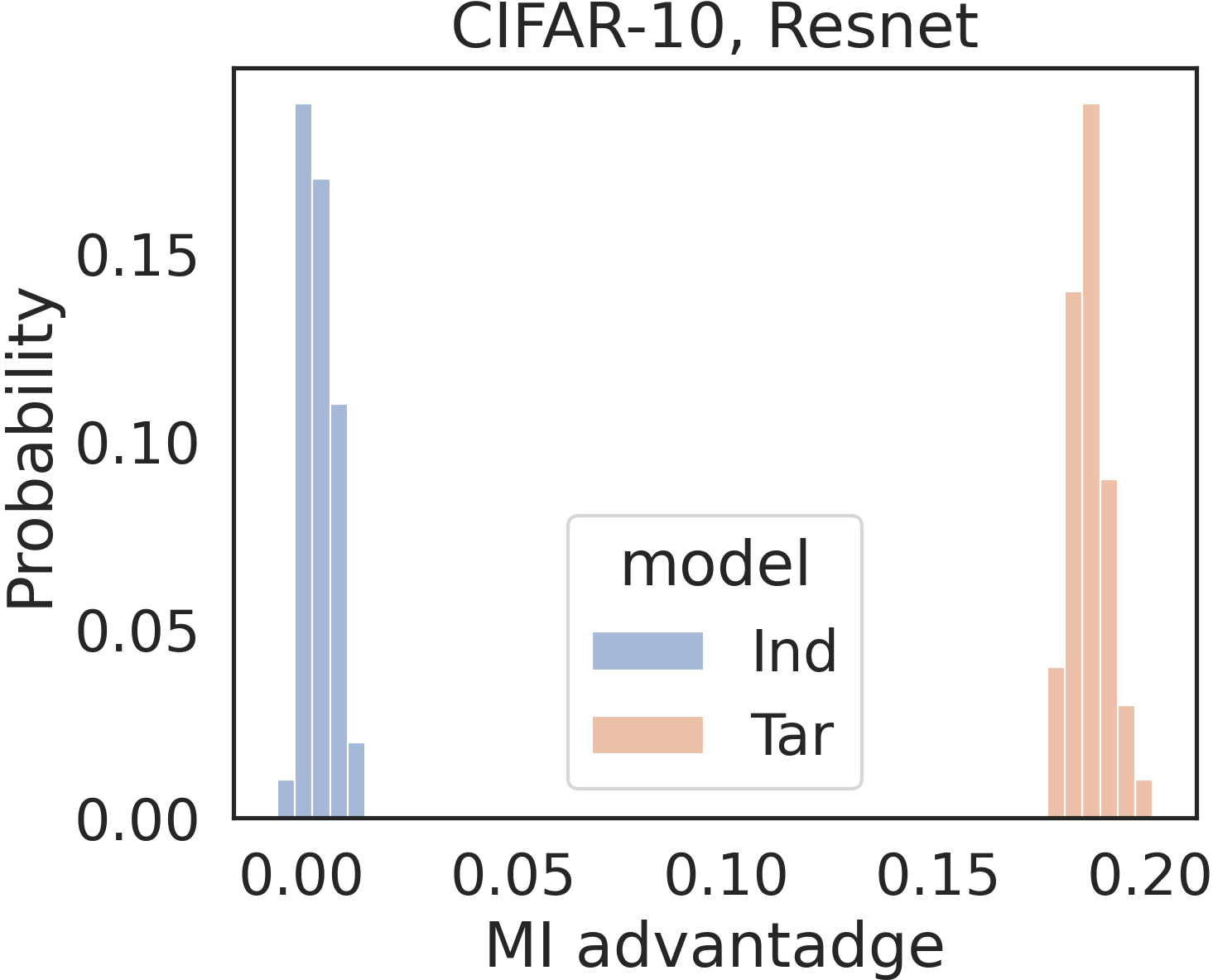}
    }
    \subfigure[Health]{
        \label{subfig:health_loss_mia_dist}
        \includegraphics[width=0.23\textwidth]{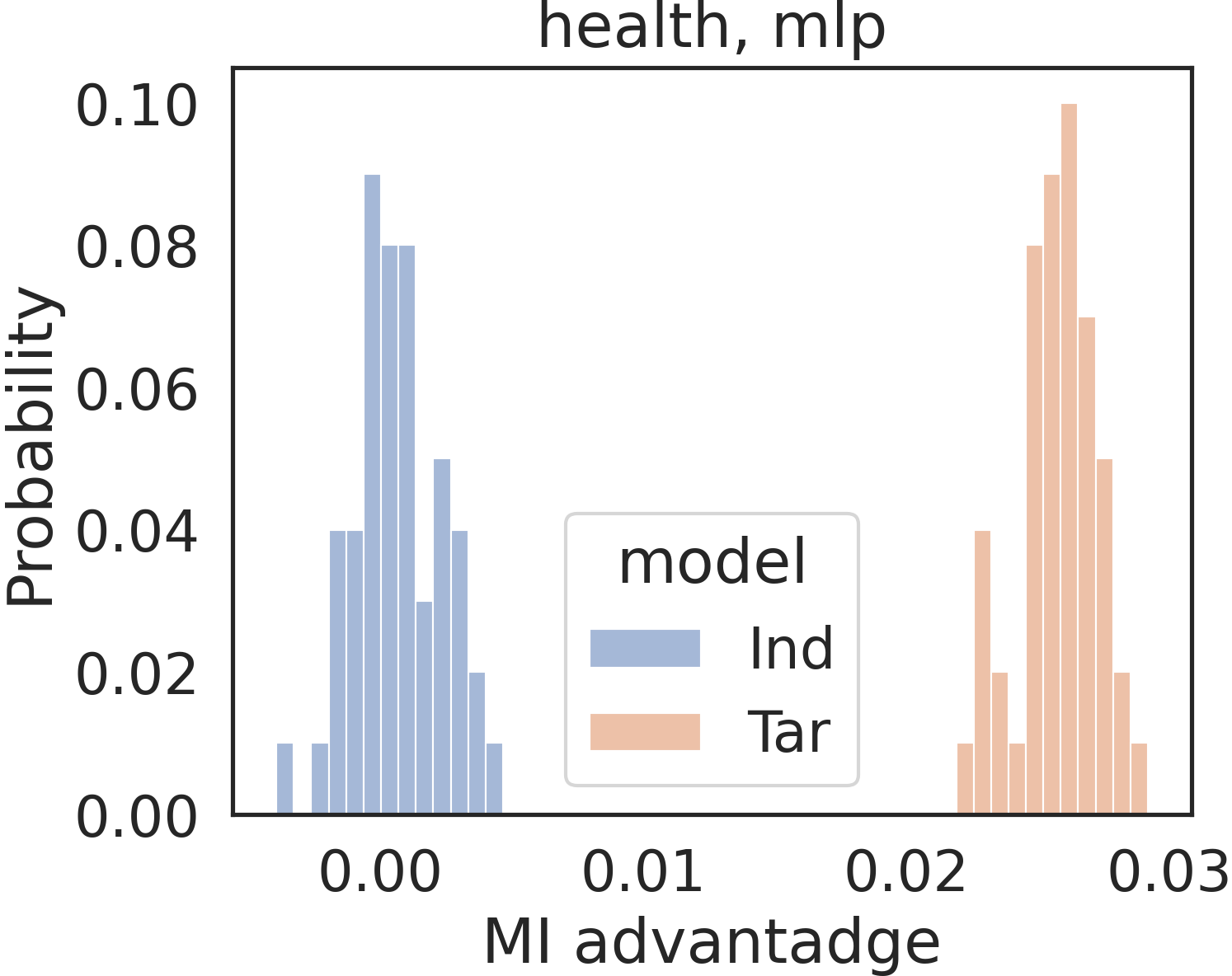}
    }
    \subfigure[FMNIST]{
        \label{subfig:fmnist_loss_mia_dist}
        \includegraphics[width=0.23\textwidth]{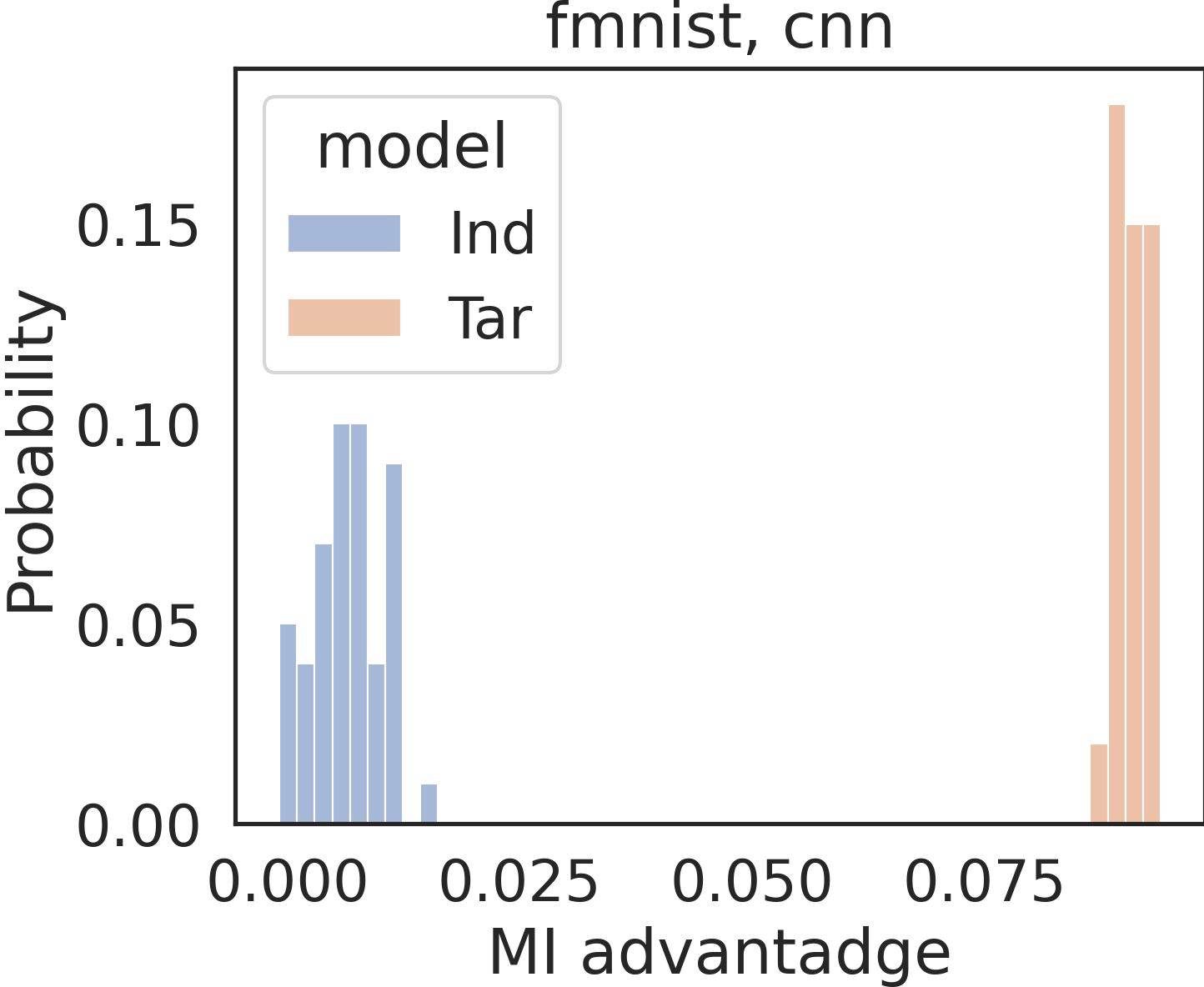}
    }
    \subfigure[FMNIST]{
        \label{subfig:fmnist_loss_mia_dist_mnist}
        \includegraphics[width=0.23\textwidth]{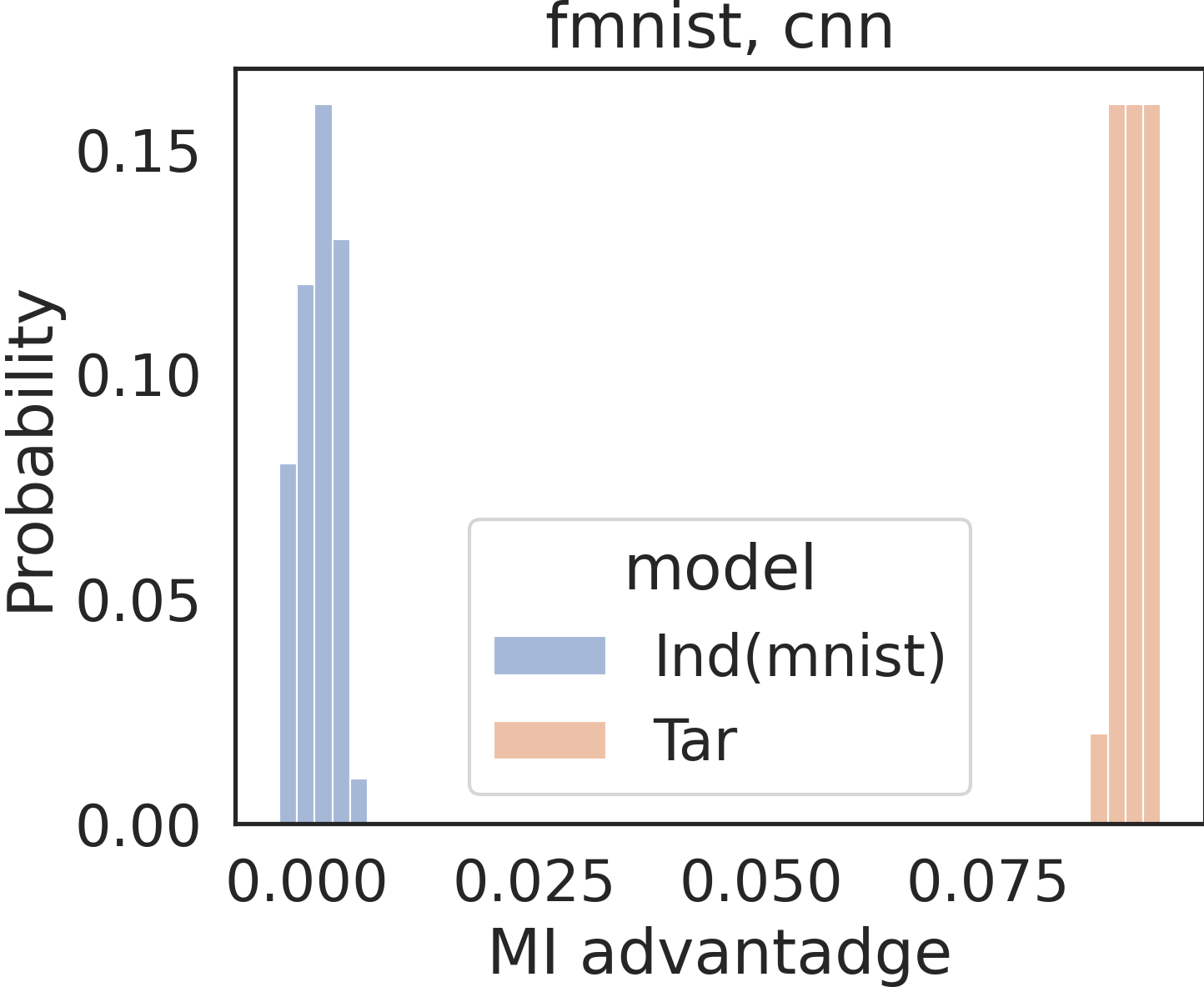}
    }
    \vspace{-3mm}
    \caption{Fingerprints distribution for target models and the independent models, using 50 models for each distribution. }
    \label{fig:finger_dist}
\end{figure*}
 
\textbf{The experimental results confirmed that MI advantage is a valid fingerprint.} Overall, we observe that the MI advantage of all target models can be clearly distinguished from that of the independent models. Specifically, the MI advantage of all independent models approaches $0$, aligning with our expectations. Notably, in Figure~\ref{subfig:health_loss_mia_dist}, we observe that the MI advantage serves as a valid fingerprint even for Health models, as evidenced by the AUROC of the global threshold MI attack being $0.5032$ (indicating performance similar to random guessing).

Regardless of whether the training set of independent models is sampled from the same data distribution or other data distributions, the use of MI advantages as fingerprint estimations enables their identification as negative models. Figure~\ref{subfig:cifar_finger_dist}, Figure~\ref{subfig:health_loss_mia_dist}, and Figure~\ref{subfig:fmnist_loss_mia_dist} depict independent models trained on validation data from the same distribution, while Figure~\ref{subfig:fmnist_loss_mia_dist_mnist} shows independent models trained on MNIST datasets (representing a different distribution). In all these benchmarks, the extracted fingerprints from victim models are consistently close to 0.

\subsubsection{Basic VeriDIP}
In this section, we evaluate the performance of VeriDIP, as proposed in Algorithm~\ref{alg:DI}, on the four datasets. We first focus on the basic VeriDIP, which utilizes "random samples" to estimate the average-case privacy leakage. The basic VeriDIP, coupled with the global threshold MI attack, is denoted as $\mathcal{V}_{G}$, while the basic VeriDIP employing the per-sample threshold MI attack is denoted as $\mathcal{V}_{P}$. Stolen copies obtained through model extraction attacks (ME), knowledge distillation (KD), and fine-tuning (FT) are considered positive instances in our evaluation.

We report the p-values returned by Algorithm~\ref{alg:DI} in Table~\ref{tab:DI}. A lower p-value is considered better for positive instances (victim, stolen models), while a higher p-value is preferred for negative instances (independent models). To obtain each p-value presented in Table~\ref{tab:DI}, we trained a minimum of $10$ models with varying seeds. We then performed hypothesis tests over $20$ iterations for each model, resulting in an average of at least $200$ trials for the final result.

Since different numbers of exposed samples ($n_{S}$) lead to different p-values, we also plot the p-value curves against $n_{S}$ for the four datasets. The results are shown in Figure~\ref{fig:pvalues_ns}. The black dashed line represents the significance level set at $\alpha=0.01$. When a point on the curve lies below the threshold line, it indicates that exposing those $n_{S}$ training samples is sufficient to establish ownership under the condition of $\alpha=0.01$.

\begin{table}[!t]
    \centering
    \caption{p-values for OT. Tar: target models; ET: model extraction attack; DT: Distillation Attack; FT: Fine-tune Attack; Ind: independent models. $\mathcal{V}_{\text{G}}$: The basic VeriDIP equipped with the global MI attack; $\mathcal{V}_{\text{P}}$: The basic VeriDIP equipped with the per-sample MI attack.}
     \vspace{-3mm}
    \label{tab:DI}%
    \scalebox{0.99}{
	\begin{tabular}{ccrccccc}
		\toprule & \multirow{2}{*}{Datasets}  &\multirow{2}{*}{$n_{S}$} &\multicolumn{5}{c}{p-value}\\ \cmidrule(r){4-8} &&&  TAR& ME & KD & FT & IND\\
		\midrule
        \multirow{4}{*}{$\mathcal{V}_{\text{G}}$}&CIFAR-10&$200$&$10^{-5}$ & $10^{-3}$ &$10^{-3}$ &$10^{-8}$&$10^{-1}$ \\
       
        &FMNIST&$200$&$10^{-6}$ &$10^{-3}$ &$10^{-3}$ &$10^{-8}$& $10^{-1}$\\
        
        &Adult&$2000$&\cellcolor{gray!20} $10^{-2}$&\cellcolor{gray!20}$10^{-2}$&\cellcolor{gray!20}$10^{-2}$ &\cellcolor{gray!20}$10^{-2}$& $10^{-1}$\\
        
        &Health &$3000$& \cellcolor{gray!20}$10^{-2}$&\cellcolor{gray!20}$10^{-1}$&\cellcolor{gray!20}$10^{-2}$ &$10^{-3}$&$10^{-1}$\\
        \midrule
        \midrule
        \multirow{4}{*}{$\mathcal{V}_{\text{P}}$}&CIFAR-10&$200$& $10^{-10}$& $10^{-4}$&$10^{-5}$ &$10^{-11}$ & $10^{-1}$\\
        &FMNIST&$200$& $10^{-10}$&$10^{-4}$ &$10^{-4}$ &$10^{-11}$&$10^{-1}$ \\
        &Adult&$2000$& $10^{-5}$&$10^{-4}$&$10^{-3}$ &$10^{-5}$& $10^{-1}$\\
        &Health &$3000$& $10^{-12}$&$10^{-3}$& $10^{-3}$&$10^{-10}$& $10^{-1}$\\
		\bottomrule
	\end{tabular}%
    }
\end{table}%

According to the results shown in Table~\ref{tab:DI} and Figure~\ref{fig:pvalues_ns}, we summarized the following results.

\textbf{(1) The basic VeriDIP demonstrates satisfactory performance in verifying the ownership of victim models and their stolen copies on CIFAR-10 and FMNIST datasets.} Overall, VeriDIP equipped with both the global and the per-sample MI attacks successfully establishes ownership of all positive models with a confidence level exceeding $99\%$, requiring the exposure of fewer than $200$ private training samples. The p-values of all independent models (negative models) are in the range of $10^{-1}$, ensuring they are not misclassified as positives. This effective discrimination between positive and negative models is achieved through the proposed fingerprint extraction scheme in this paper.

\textbf{(2) The ownership verification performance of VeriDIP is negatively correlated with the model's generalization ability.} VeriDIP equipped with the per-sample MI attack remains effective for DNN models trained on the Adult and Health datasets but exposes a larger number of private training samples, up to about $2,000$ to $3,000$. However, VeriDIP equipped with a global MI attack fails to achieve successful verification on these two datasets. This outcome is not surprising, as we have previously expressed concerns in Section~\ref{sec:enh}. When a model's output probability distributions for membership and non-membership are nearly identical, extracting sufficient fingerprints to determine ownership requires more exposed samples and stronger MI attacks. Nevertheless, increasing the number of exposed private training samples violates the principle of personal privacy protection during public ownership verification. Therefore, the adoption of stronger fingerprint extraction methods, such as the enhanced VeriDIP proposed in Section~\ref{sec:enh}, may prove beneficial.

\textbf{(3) Fine-tuning, although the most effective attack against watermark embedding, is the easiest attack for VeriDIP to defend.} Unlike watermark embedding techniques that artificially embed unique classification patterns into the decision boundary of IP-protected models, VeriDIP extracts inherent privacy leakage characteristics as fingerprints for ownership verification. As reported in~\cite{chen_refit_2021}, their proposed fine-tuning attack can effectively remove all watermarks. However, the results shown in Figure~\ref{fig:pvalues_ns} indicate that the fine-tuned model (red line) is even more susceptible to fingerprint extraction compared to the original model (blue line). The reason behind this observation might be that fine-tuning reinforces the model's memory of a subset of training samples, which VeriDIP can exploit as a fingerprint for ownership judgment.

\textbf{(4) The effect of VeriDIP is positively correlated with the MI attack effectiveness.} While VeriDIP can be equipped with various black-box MI attacks to extract model ownership fingerprints, this paper focuses on evaluating two representative attacks: the basic global MI attack and the advanced per-sample MI attack, due to space limitations. Comparing Figure~\ref{subfig:cifar10_ns_pvalues_global} and Figure~\ref{subfig:cifar10_ns_pvalues_persample} for CIFAR-10, as well as Figure~\ref{subfig:fmnist_ns_pvalues_global} and Figure~\ref{subfig:fmnist_ns_pvalues_persample} for FMNIST, we observe that $\mathcal{V}{P}$ requires exposing only half the number of training samples compared to $\mathcal{V}{G}$. Additionally, for the Adult and Health databases, $\mathcal{V}_{G}$ fails to verify ownership altogether (refer to Figure~\ref{subfig:health_ns_pvalues_global} and Figure~\ref{subfig:adult_ns_pvalues_persample}). The reason for this is that a stronger MI attack can provide a tighter lower bound estimation of privacy leakage, resulting in more accurate model fingerprints.

In summary, the basic VeriDIP equipped with the per-sample MI attacks $\mathcal{V}_{P}$ successfully identifies all victim models and their stolen copies as positives, while correctly classifying all independent models as negatives. However, for models that are only slightly overfitted, even with the utilization of the most advanced MI attack to estimate privacy leakage fingerprints, a significant number of private training samples are still required to establish ownership. Hence, it is imperative to devise solutions that reduce VeriDIP's reliance on model overfitting.

\begin{table}[!t]
    \centering
    \caption{Exposed number of training samples $n_{S}$ when $\alpha = 0.01$.  Smaller $n_{S}$ means better ownership verification performance. ``--" means failure.}
    \vspace{-3mm}
    \label{tab:enh}%
    \begin{tabular}{cccccc}
		\toprule
	    \multirow{2}{*}{Datasets} & \multirow{2}{*}{Models}  & \multicolumn{2}{c}{global} & \multicolumn{2}{c}{per-sample}\\ \cmidrule(r){3-6} &&  Basic & Enh & Basic & Enh \\
		\midrule
		\multirow{4}{*}{\rotatebox[origin=c]{0}{CIFAR-10}} 
		& TAR & 42 & 5 &23&5\\
		& ME & 185 & 5 &87 &5\\
		& KD & 94 & 5 &47& 5\\
		& FT& 24 & 5 &23& 5\\
		\midrule
		\multirow{4}{*}{\rotatebox[origin=c]{0}{FMNIST}} 
		& TAR & 27 & 5& 17 & 5 \\
		& ME & 170 & 5& 75 & 5 \\
		& KD & 125 & 5& 80 & 5 \\
		& FT& 23 & 5 & 15 & 5\\
		\midrule
		\multirow{4}{*}{\rotatebox[origin=c]{0}{Adult}} 
		& TAR & -- &  -- & 460 & 5 \\
		& ME  & -- & -- & 800  & 6  \\
		& KD  & -- & -- & 1600 & 70 \\
		& FT  & -- & -- & 430  & 5 \\
		\midrule
		\multirow{4}{*}{\rotatebox[origin=c]{0}{Health}} 
		& TAR & -- & 83 & 250  & 8\\
		& ME & -- & 148 & 2500 & 28\\
		& KD & -- & 135 & 2200 & 125 \\
		& FT & $3000$  & 81  & 200  & 6\\
		\bottomrule
	\end{tabular}%
\end{table}%

\begin{figure*}[!t]
    \centering  
    \subfigure[CIFAR-10, global]{
        \includegraphics[width=0.48\columnwidth]{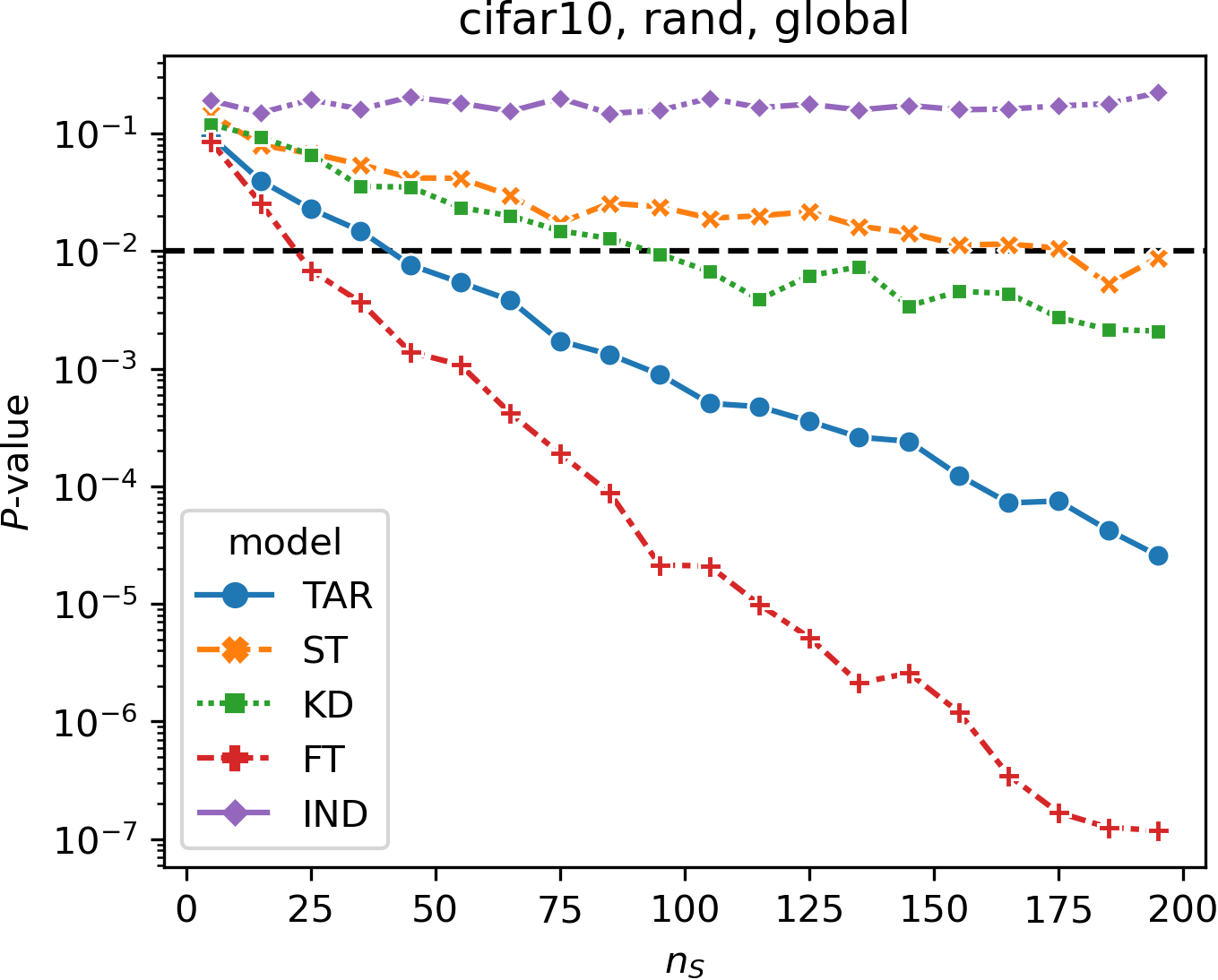}
        \label{subfig:cifar10_ns_pvalues_global}
    }
    \subfigure[CIFAR-10, per-sample]{
        \includegraphics[width=0.48\columnwidth]{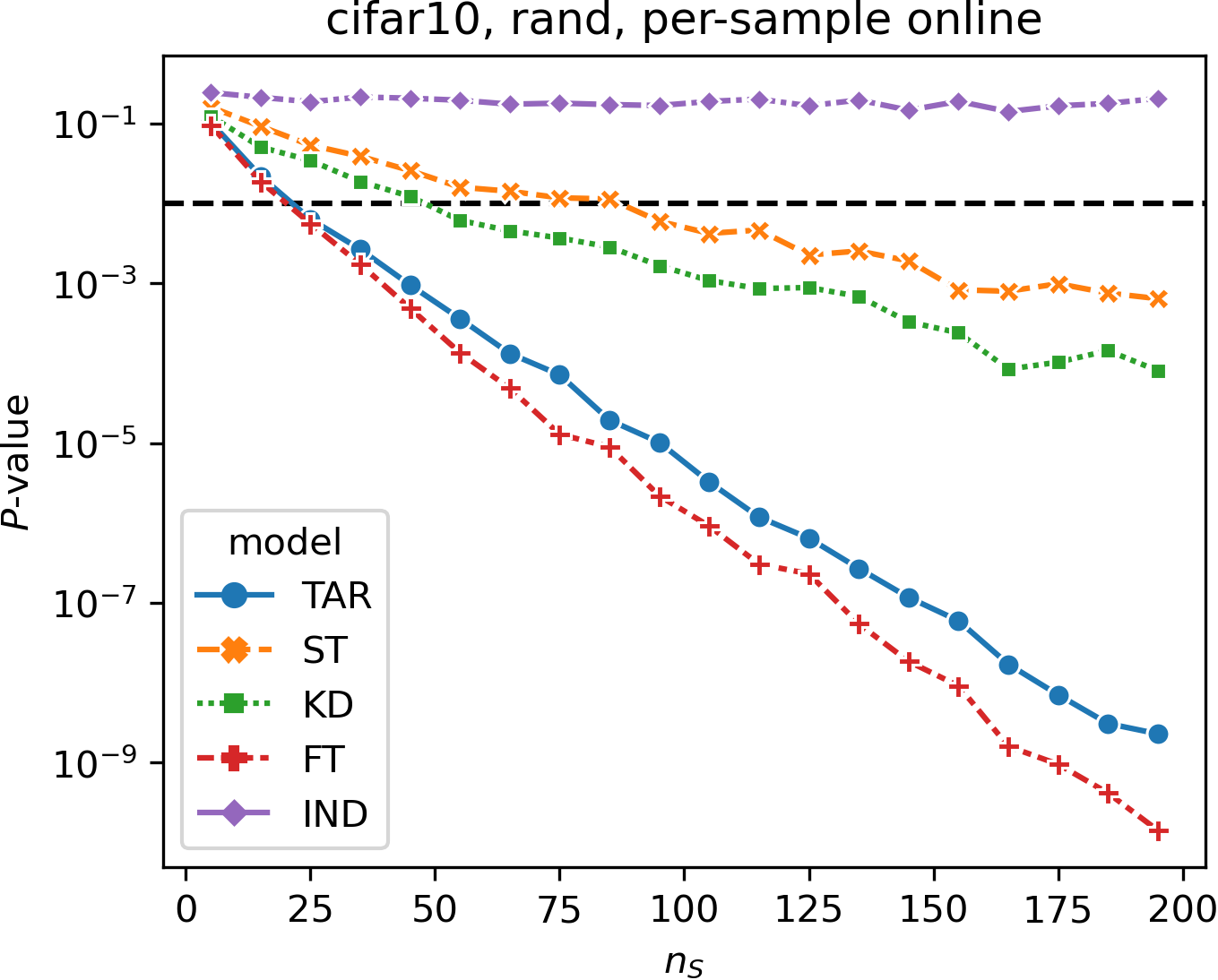}
        \label{subfig:cifar10_ns_pvalues_persample}
    }
    \subfigure[FMNIST, global]{
        \includegraphics[width=0.48\columnwidth]{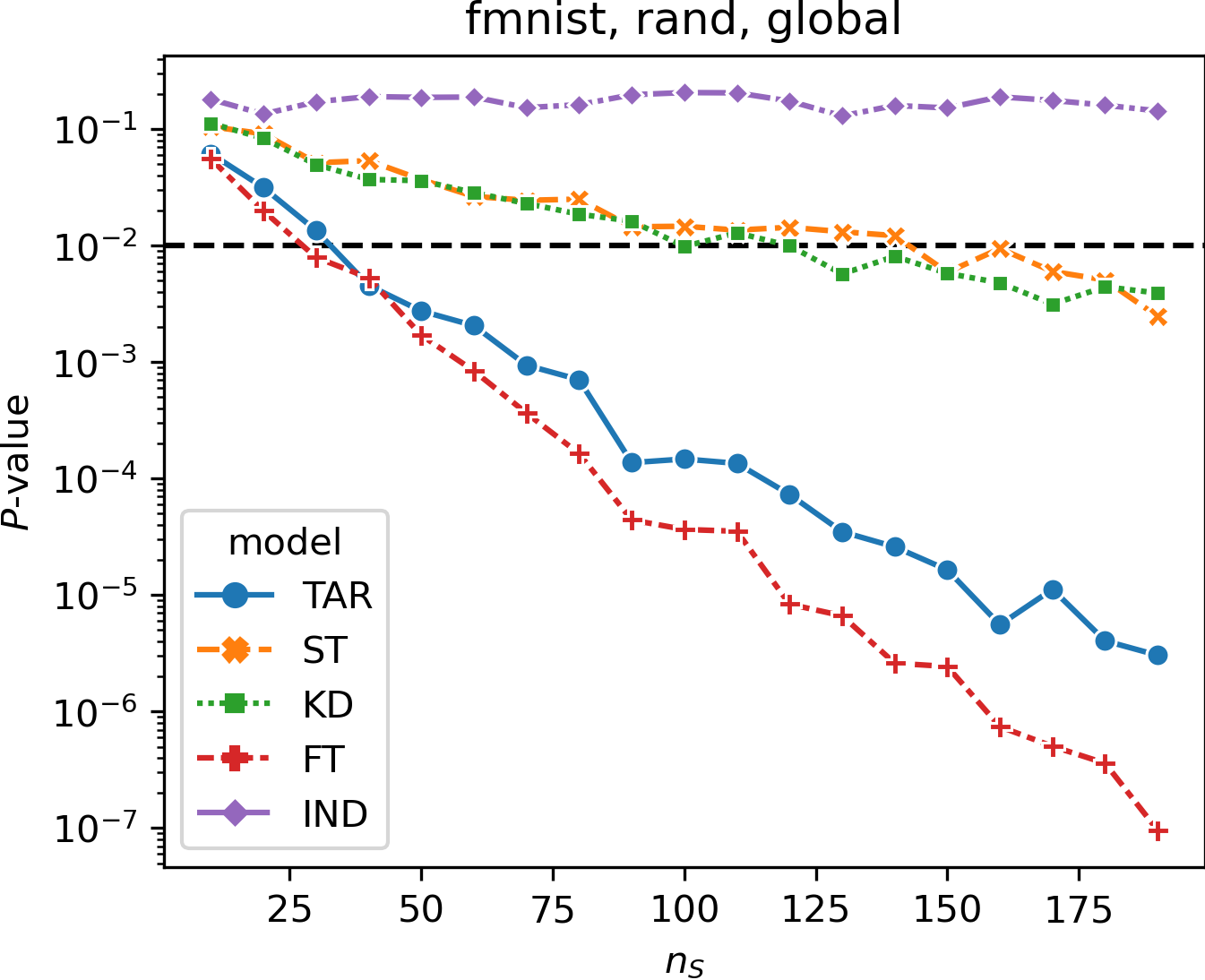}
        \label{subfig:fmnist_ns_pvalues_global}
    }
    \subfigure[FMNIST, per-sample]{
        \includegraphics[width=0.48\columnwidth]{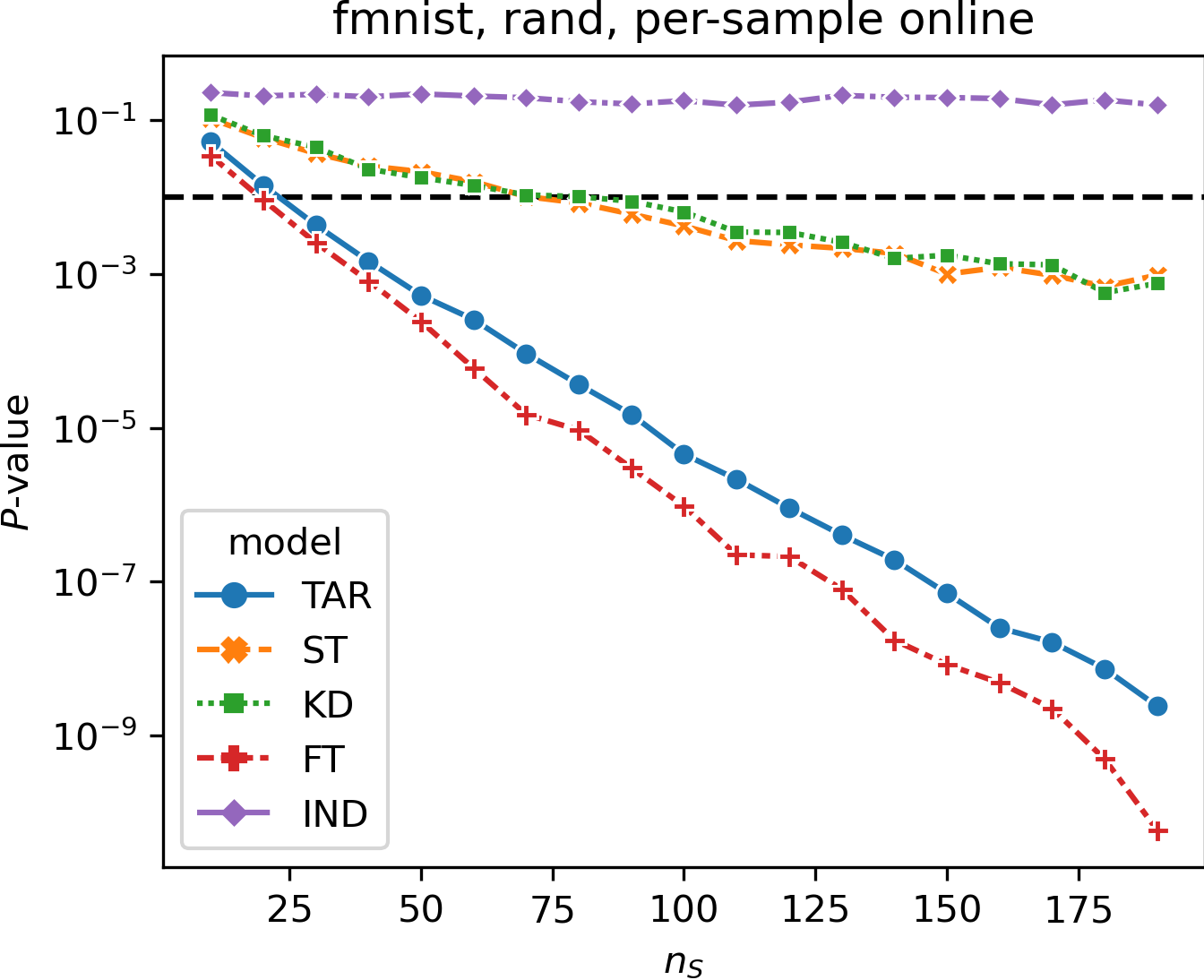}
        \label{subfig:fmnist_ns_pvalues_persample}
    }
    \subfigure[Health, global]{
        \includegraphics[width=0.48\columnwidth]{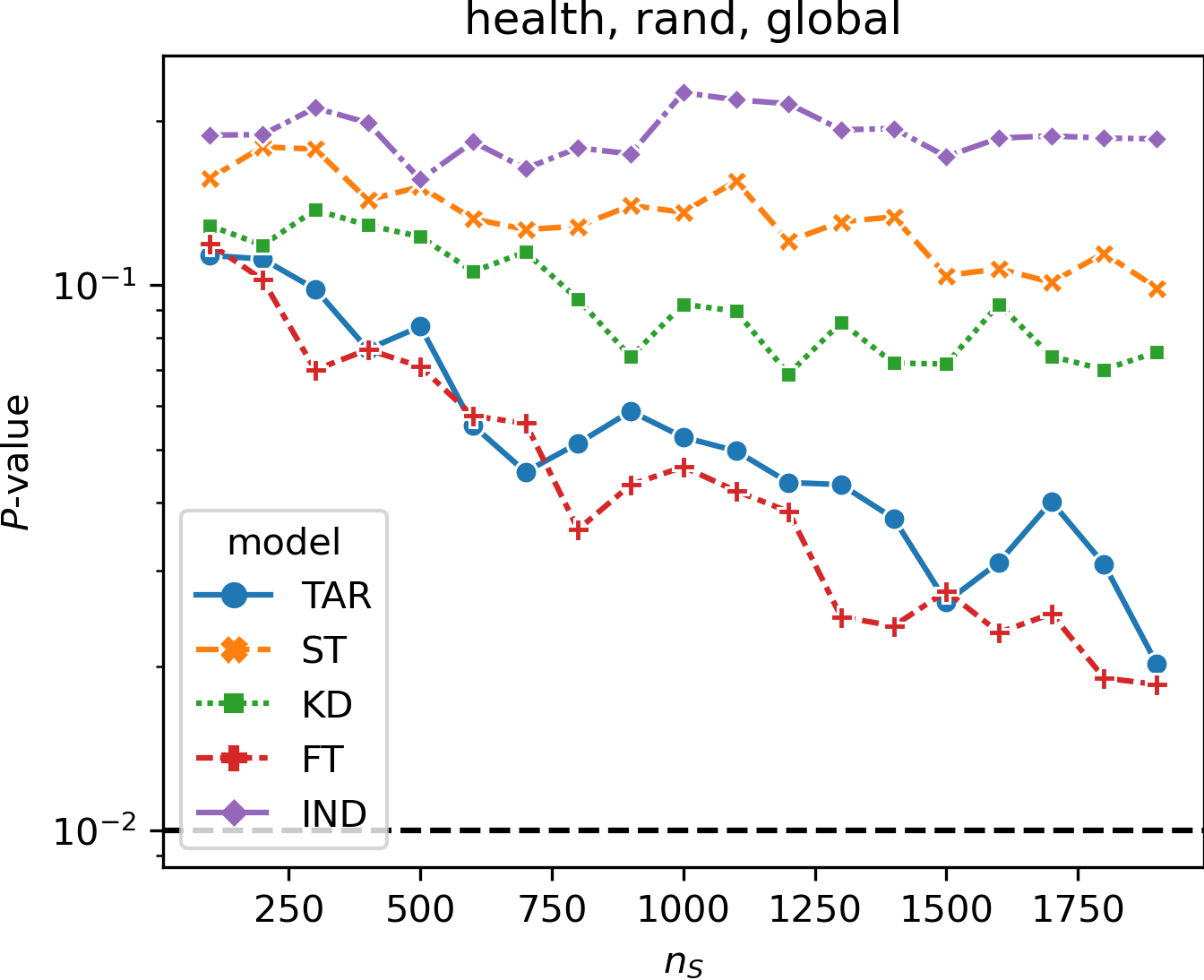}
        \label{subfig:health_ns_pvalues_global}
    } 
    \subfigure[Health, per-sample]{
        \includegraphics[width=0.48\columnwidth]{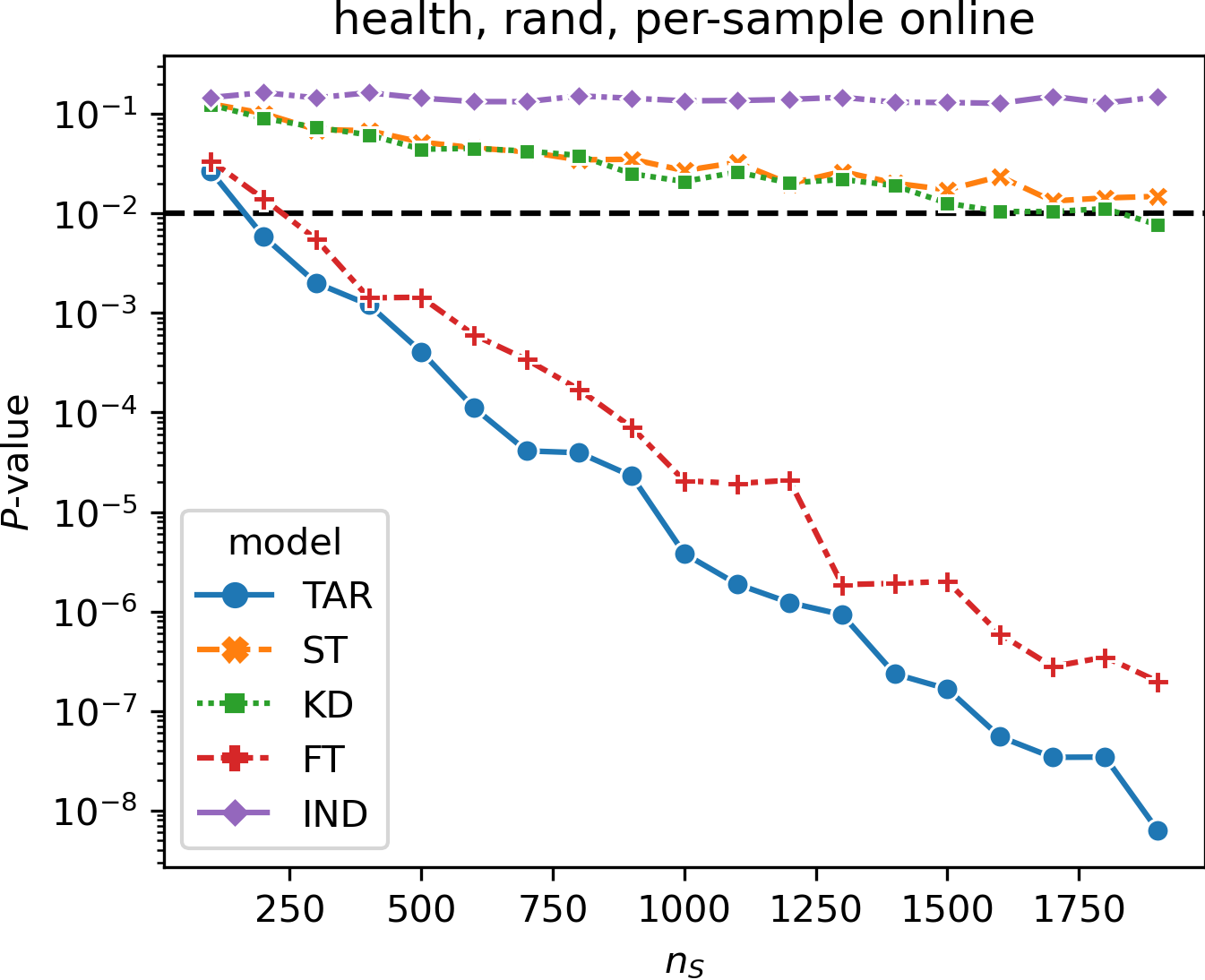}
        \label{subfig:health_ns_pvalues_persample}
    } 
    \subfigure[Adult, global]{
        \includegraphics[width=0.48\columnwidth]{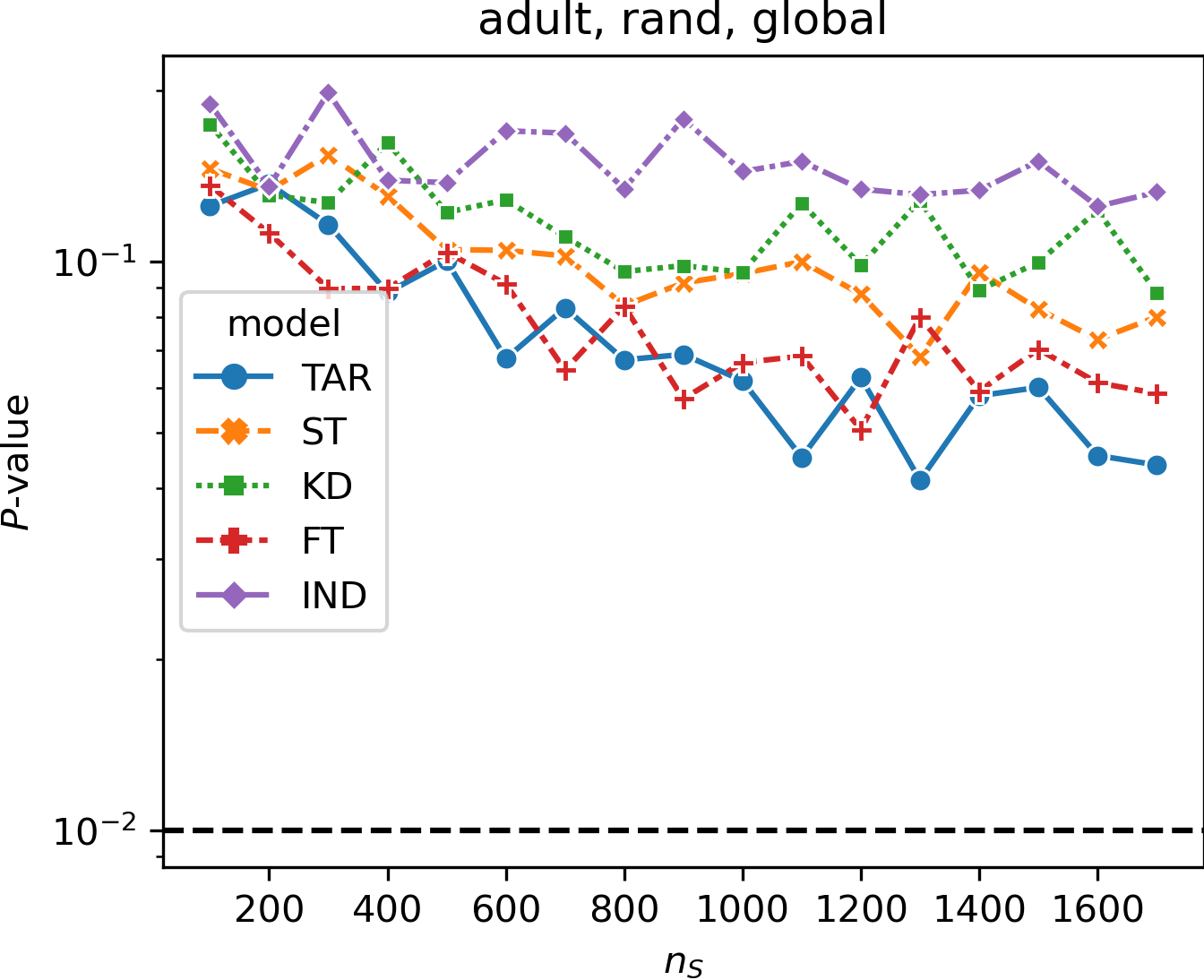}
        \label{subfig:adult_ns_pvalues_global}
    } 
    \subfigure[Adult, per-sample]{
        \includegraphics[width=0.48\columnwidth]{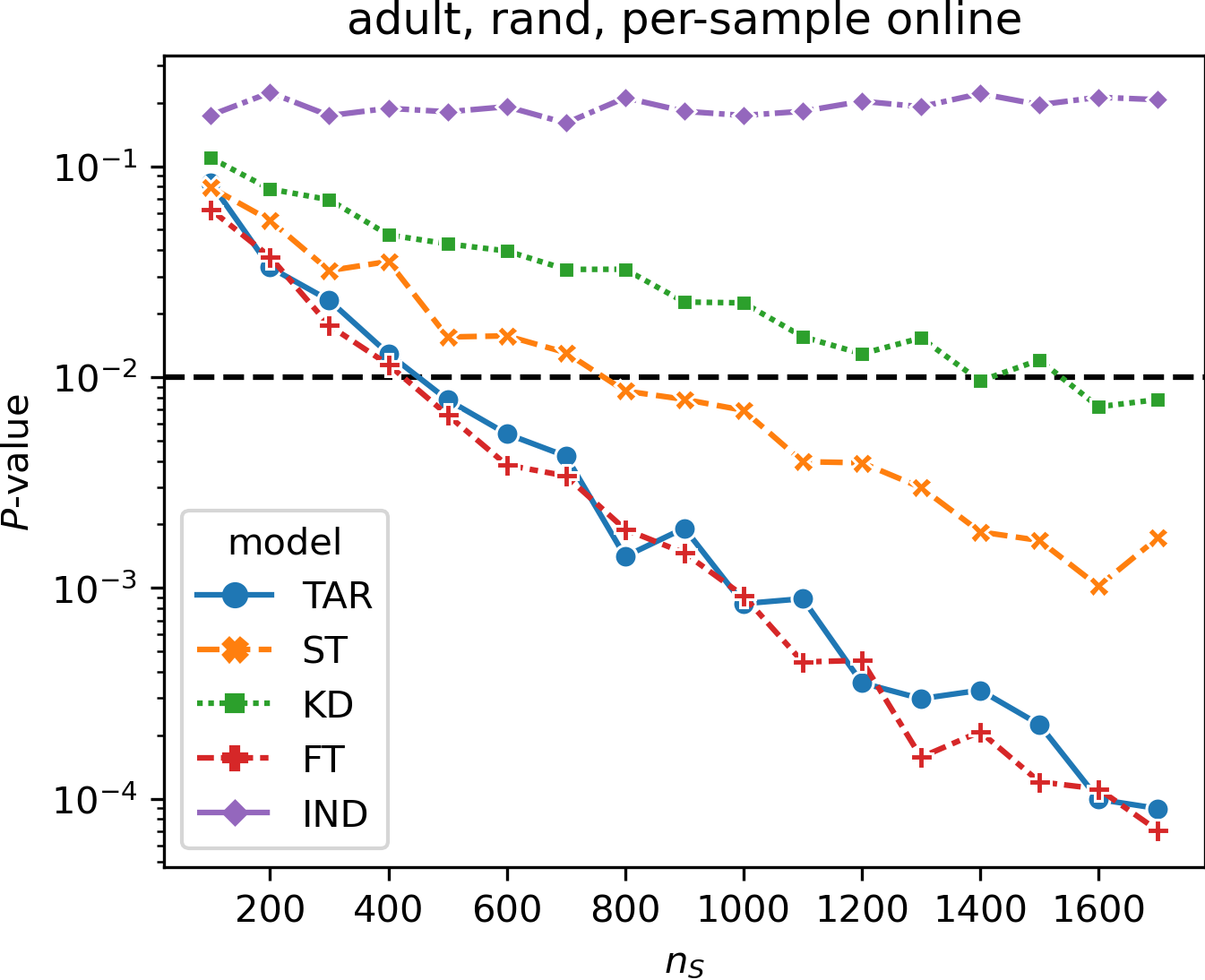}
        \label{subfig:adult_ns_pvalues_persample}
    } 
     \vspace{-3mm}
    \caption{p-value against the number of exposed training samples $n_{S}$. Black dotted line implies $\alpha =0.01$.}
    \label{fig:pvalues_ns}
\end{figure*}

\begin{figure*}[!t]
    \centering  
    \subfigure[CIFAR-10]{
        \includegraphics[width=0.47\columnwidth]{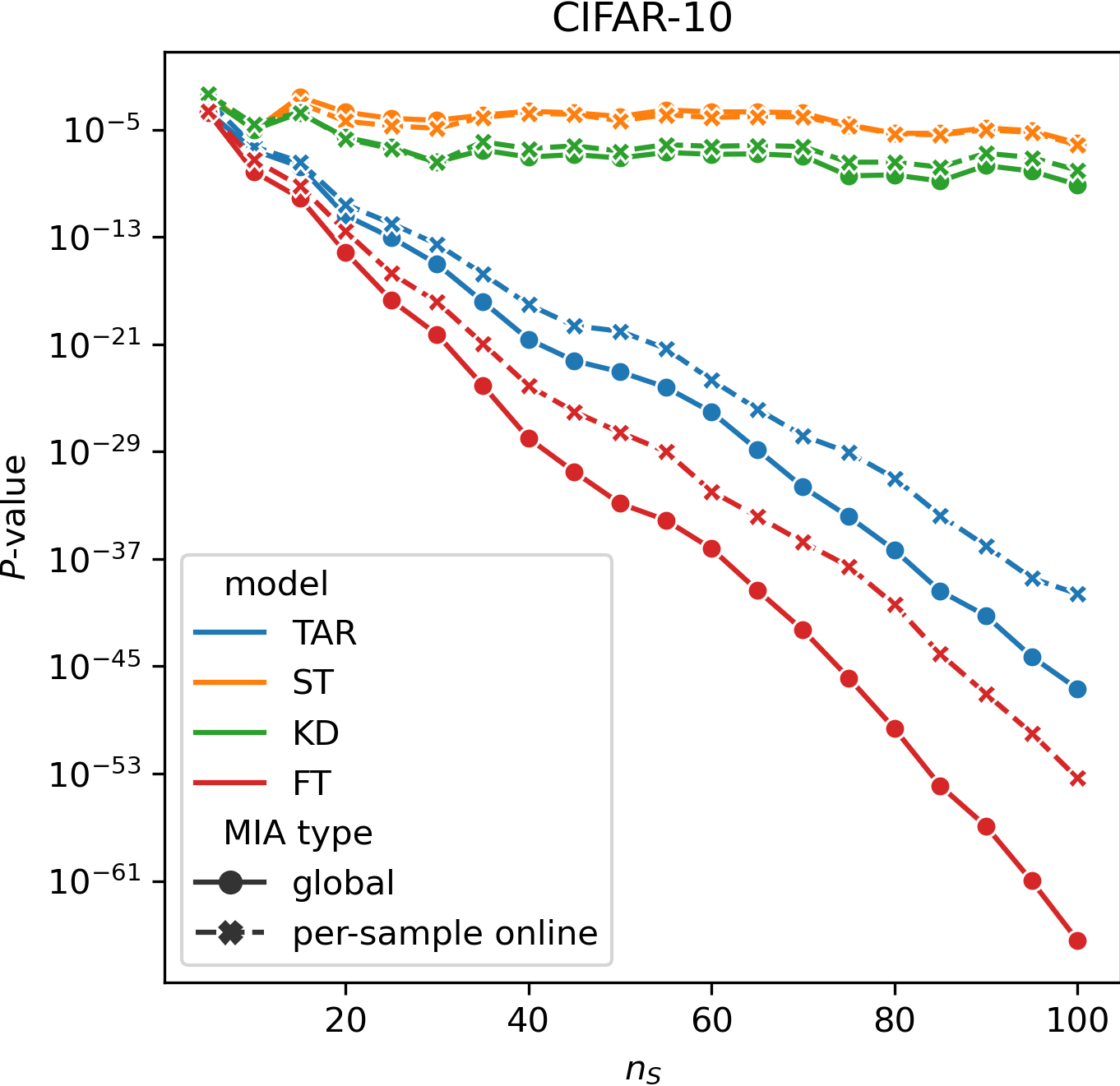}
        \label{subfig:cifar_lesspriv}
    }
    \subfigure[FMNIST]{
        \includegraphics[width=0.47\columnwidth]{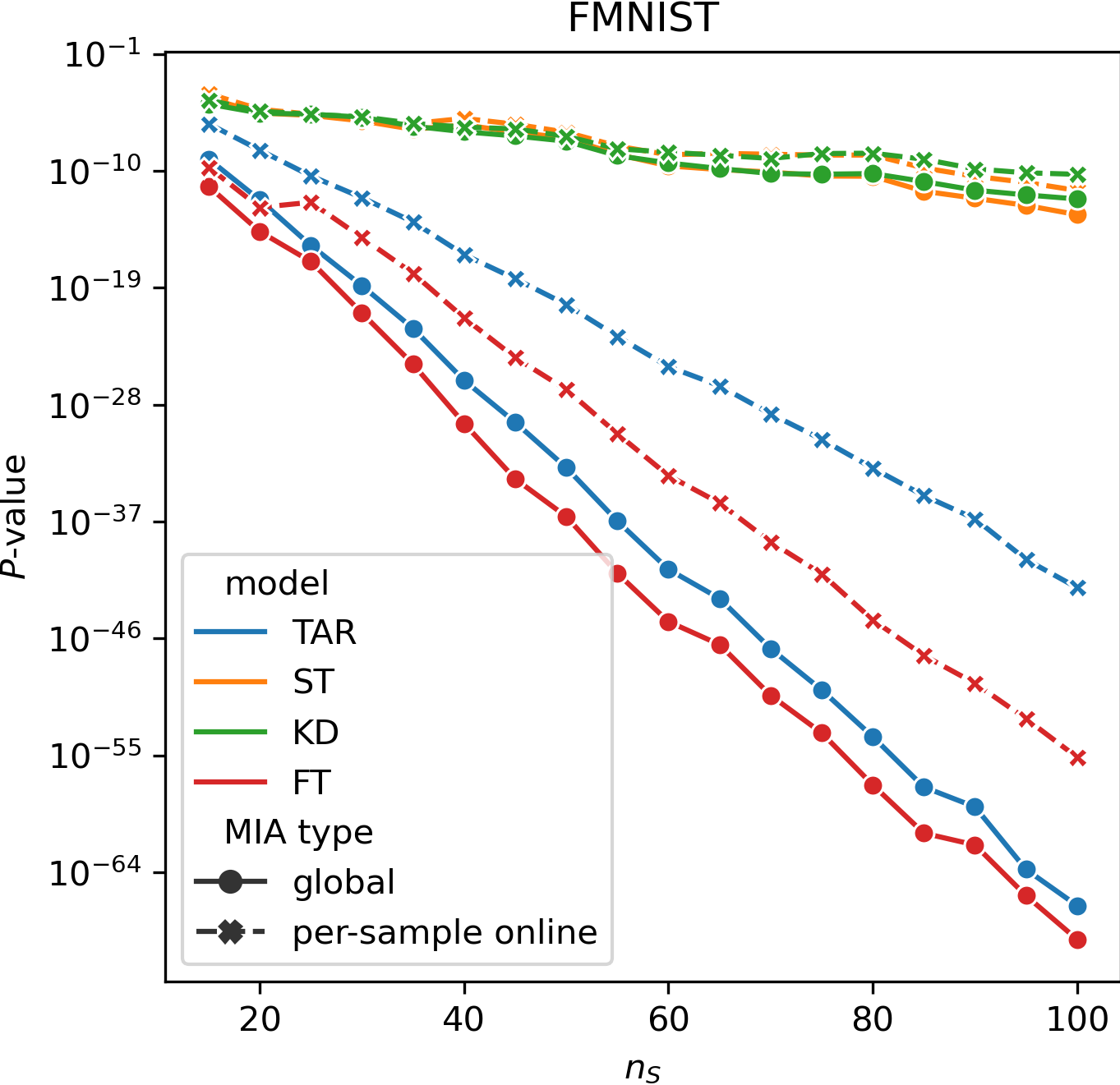}
        \label{subfig:fmnist_lesspriv}
    }
    \subfigure[Adult]{
        \includegraphics[width=0.47\columnwidth]{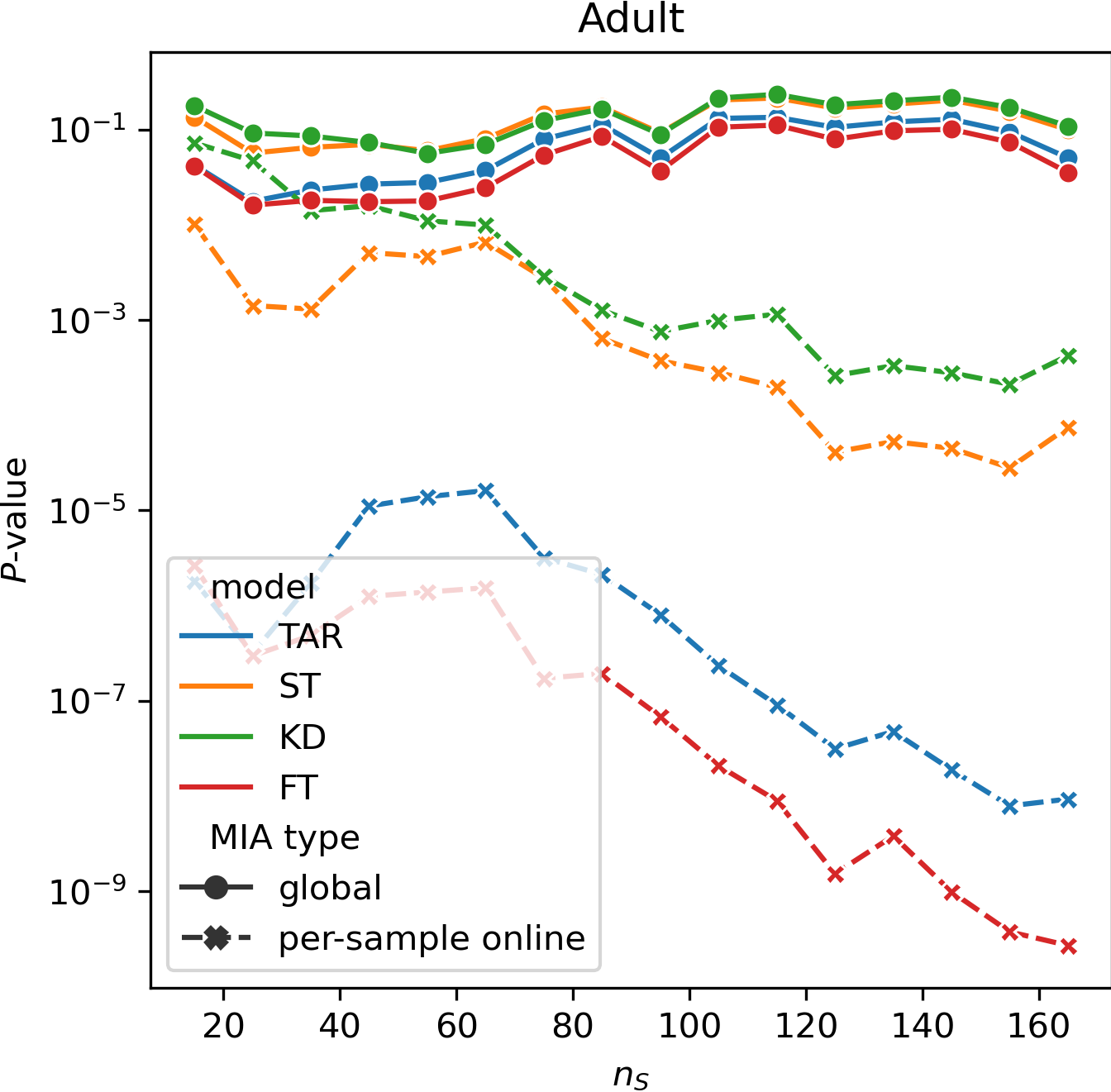}
        \label{subfig:adult_lesspriv}
    }
    \subfigure[Health]{
        \includegraphics[width=0.47\columnwidth]{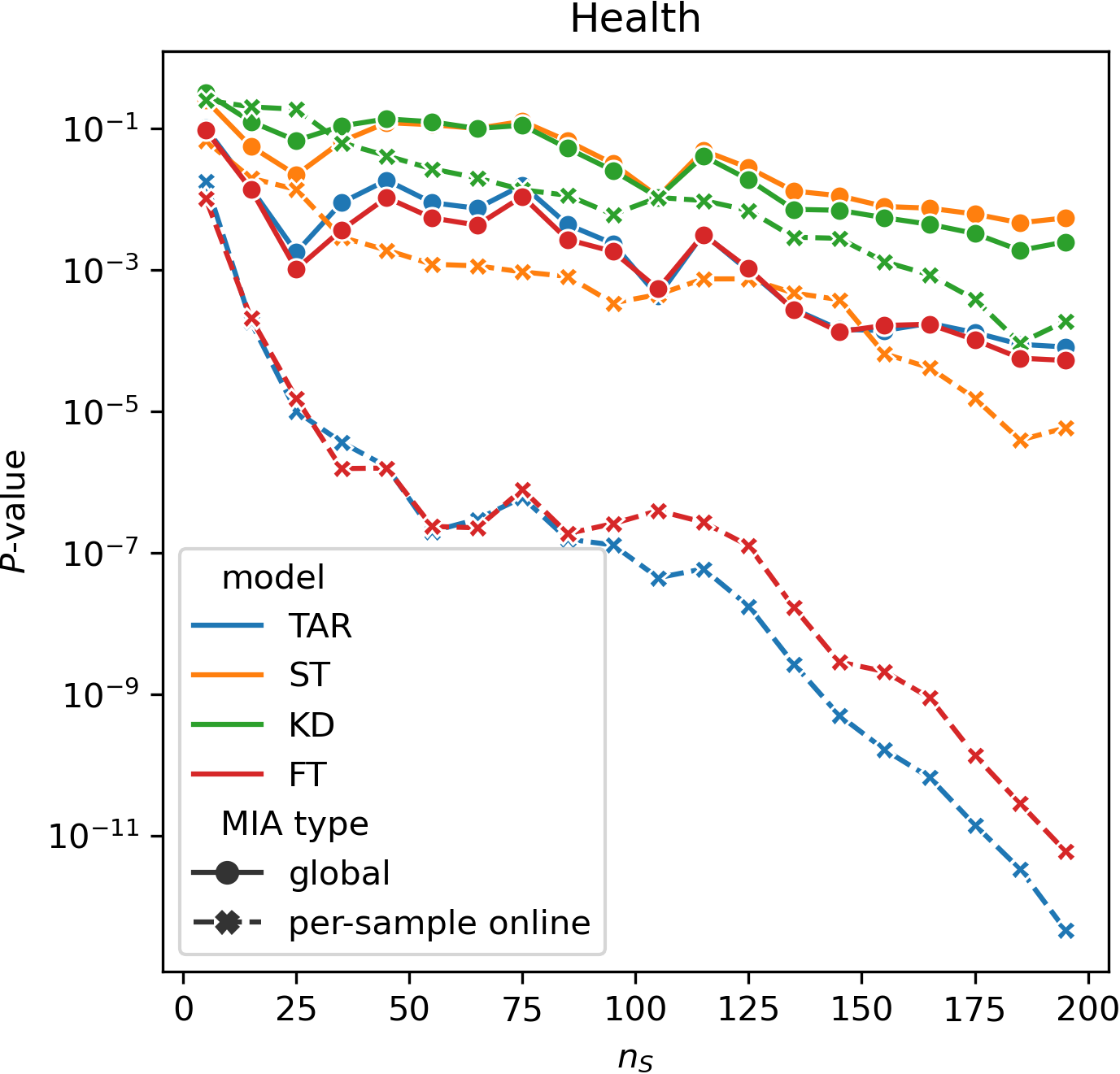}
        \label{subfig:health_lesspriv}
    }
 \vspace{-3mm}
    \caption{Comparison between the enhanced VeriDIP equipped with the global MI attack $\mathcal{V}_{\text{E-G}}$ (Dotted line with marker ``$\times$") and the enhanced VeriDIP equipped with the per-sample MI attack $\mathcal{V}_{\text{E-P}}$ (Solid line with marker ``$\cdot$'). }
    \label{fig:compare_MIA}
\end{figure*}

\subsubsection{Enhanced VeriDIP}

In this section, we evaluate the enhanced VeriDIP on four datasets and compare the results with those of the basic VeriDIP. Table~\ref{tab:enh} reports the minimum number of exposed training samples required to verify ownership at a significance level of $\alpha = 0.01$ (with $99\%$ confidence). Note that the p-values of all independent models remain at $10^{-1}$, and therefore, we have omitted the corresponding $n_{S}$ values for them.

To identify the less private data in advance, we train $N$ shadow models ($N=100$), where each model is trained by sampling half of the database. Consequently, for each data point, we have approximately $N/2$ models that include the data and $N/2$ models that exclude the data. We compute the loss difference $\eta(z)$ for each data point using Equation~\eqref{eq:pred_gap} and select the $k$ samples with the highest $\eta(z)$ values as the less private data.

\textbf{The enhanced VeriDIP offers superior performance compared to the basic VeriDIP.} For CIFAR-10 and FMNIST datasets shown in Table~\ref{tab:enh}, the enhanced VeriDIP equipped with both the global MI attacks and the per-sample MI attacks successfully verify the ownership of all target (``Tar") and stolen models (``ME", ``KD", and ``FT") by exposing only $5$ samples. In the case of more generalized models, such as Adult and Health, the number of exposed training samples is reduced to $\frac{1}{100}$-$\frac{1}{10}$ of the basic VeriDIP. It is worth noting that the enhanced VeriDIP equipped with the global MI attack fails to prove ownership for the Adult database. We believe this is because the global MI attack is not powerful enough to extract useful privacy leakage fingerprints in such generalized models. The main reasons for the success of the enhanced solution are:
\begin{itemize}[leftmargin=*]
	\item Leveraging the worst-case privacy leakage as the model fingerprint can significantly amplify the characteristics of the positive model that are different from the negative counterparts (see Figure~\ref{fig:lesspriv}); 
	\item The decision boundary for less private data is transferable (not easy to erase) in the process of model stealing. 
\end{itemize}

We then compare the performance of the enhanced VeriDIP equipped with the global MI attack (denoted as $\mathcal{V}_{\text{E-G}}$) with the enhanced VeriDIP equipped with the per-sample MI attack (denoted as $\mathcal{V}_{\text{E-P}}$) and plot the p-value against $n_{S}$ in Figure~\ref{fig:compare_MIA}.

\textbf{Compared with the basic VeriDIP where $\mathcal{V}_{\text{P}}$ is superior to $\mathcal{V}_{\text{G}}$ for all tasks, the behavior of $\mathcal{V}_{\text{E-P}}$ and $\mathcal{V}_{\text{E-G}}$ is more complex in enhanced VeriDIP.} For instance, in Figure~\ref{subfig:cifar_lesspriv} and Figure~\ref{subfig:fmnist_lesspriv}, $\mathcal{V}_{\text{E-G}}$ shows surprisingly better performance than $\mathcal{V}_{\text{E-P}}$, but the opposite is true for the Health and Adult databases. Particularly for the Adult database (see Figure~\ref{subfig:adult_lesspriv}), $\mathcal{V}_{\text{E-G}}$ fails to identify all positive models. Investigating the attack ability of MI attacks on different types of databases is beyond the scope of this work. However, we can conclude that the enhanced VeriDIP equipped with the global MI attack is more than sufficient to prove ownership of models trained on CIFAR-10 and FMNIST databases. For models that are barely overfitted, such as those trained on the Adult and Health databases, the enhanced VeriDIP equipped with the per-sample MI attack is a better choice.

\subsubsection{Comparisons with State-of-the-art} 
Dataset Inference (DI)~\cite{maini2021dataset} is the most similar to our idea, but differs in terms of model fingerprint extraction methods. Therefore, we compare our verification performance and costs with DI both functionally and experimentally. The result are show in Table~\ref{tab:func_cmpr} and Table~\ref{tab:exp_cmpr}. We summarize the results in the following aspects:

First, VeriDIP is applicable to tabular trained DNN models, while DI is not. DI uses adversarial noise as fingerprints, but finding the adversarial noise is not trivial for models trained on tabular data. Tabular data may contain a combination of continuous, discrete, and categorical features, making it difficult to calculate adversarial noise through gradient descent. VeriDIP, on the other hand, only requires querying the DNN model's prediction probability, making it applicable to all classifiers.

\begin{table}[!t]
	\centering
	\caption{Functional comparison with Dataset Inference~\cite{maini2021dataset}.}
	\vspace{-3mm}
	\label{tab:func_cmpr}%
	\begin{tabular}{cccc}
		\toprule
		& \makecell{Immune to \\ detector attack} & \makecell{Support table-\\trained models} & \makecell {Directly \\link to DP}\\
        \midrule
		DI & no & no & no \\
		Ours & yes & yes & yes \\
		\bottomrule
	\end{tabular}%
\end{table}%

\begin{table}[!t]
	\centering
	\caption{Experimental comparison with Dataset Inference~\cite{maini2021dataset}. $\mathcal{V}_{\text{E-G}}$: the enhanced VeriDIP equipped with the global MI attack; $\mathcal{V}_{\text{E-P}}$: the enhanced VeriDIP equipped with the per-sample MI attack.}
  \vspace{-3mm}
	\label{tab:exp_cmpr}%
	\begin{tabular}{ccccc}
		\toprule
		Database & OT algorithm & $n_{S}$ & No. of queries   & p-value \\
		\midrule
		\multirow{3}{*}{\rotatebox[origin=c]{0}{CIFAR-10}} & DI & 10 & 10*20*50 & $10^{-6}$ \\
		&$\mathcal{V}_{\text{E-G}}$ & 10 & 10& $10^{-7}$ \\
		&$\mathcal{V}_{\text{E-P}}$ & 10 & 10& $10^{-6}$ \\
		\midrule
		\multirow{3}{*}{\rotatebox[origin=c]{0}{FMNIST}} & DI & 10 & 10*30*50 & $10^{-5}$ \\
		&$\mathcal{V}_{\text{E-G}}$ & 10 & 10& $10^{-9}$ \\
		&$\mathcal{V}_{\text{E-P}}$ & 10 & 10& $10^{-6}$ \\
		\bottomrule
	\end{tabular}%
\end{table}%
Second, compared to DI, VeriDIP significantly reduces the number of required queries during ownership verification, making it immune to the detector attack~\cite{hitaj_have_2018}. DI requires querying the suspect model $n_{S} \times n_{\text{adv}} \times T$ times to obtain a model fingerprint. However, this can raise suspicion from pirated APIs, leading to refusals to answer or adding noise to the responses. Here, $n_{S}$ denotes the number of exposed training samples, $n_{\text{adv}}$ is the number of repeated adversarial attacks per sample, and $T$ is the number of queries for one adversarial attack. In the original setting of~\cite{hitaj_have_2018}, $n_{\text{adv}}=30$ and $T=50$. Table~\ref{tab:exp_cmpr} lists the experimental results for identifying target models in CIFAR-10 and FMNIST. We do not provide the results for Adult and Health datasets because DI does not support them. Consequently, VeriDIP achieves similar or better performance with significantly fewer exposed training samples (two orders of magnitude less than DI). 

Third, VeriDIP can be directly linked to the definition of DP, as the privacy leakage estimated by MI attacks serves as a lower bound for the privacy budget $\epsilon$ in DP (see analysis in Section~\ref{sec:bound_DP}). In contrast, DI leaves the connection to DP as an open question.

\subsubsection{Differential Privacy Relationship}
\label{sec:dp_DI}
In this section, we experimentally discuss the effectiveness of VeriDIP on DP machine learning models, which is also a remaining problem addressed in~\cite{maini2021dataset}. For this evaluation, we select the enhanced VeriDIP models $\mathcal{V}{\text{E-P}}$ and $\mathcal{V}{\text{E-G}}$ due to their improved performance.

\begin{table}[!t]
    \centering
	\caption{Hyper-parameters and test accuracy for DP models.  $\mathrm{z}$: noise multiplier, $C$:clipping threshold.}
  \vspace{-3mm}
	\label{tab:dp-DI}%
	\begin{tabular}{ccccc}
		\toprule
		Database & ($\epsilon$, $\delta$) & epoch  & $(C,\mathrm{z})$ & TestAcc  \\
		\midrule
		\multirow{2}{*}{\rotatebox[origin=c]{0}{CIFAR-10}}& $(1.0,10^{-5})$ & 60 & (5e-4,2.1) & $84.79\%$\\  
		&$(0.5,10^{-5})$ & 60 & (5e-4,4.1) & $84.49\%$\\
		\midrule 
		
		\multirow{2}{*}{\rotatebox[origin=c]{0}{FMNIST}}& $(1.0,10^{-5})$ & 19  & (5e-3,1.2) & $90.54\%$\\  
		& $(0.5,10^{-5})$ & 20 & (5e-3,1.9) & $90.00\%$\\
		
		\midrule 
		\multirow{2}{*}{\rotatebox[origin=c]{0}{Health}} & $(1.0,10^{-5})$ & 50 & (1e-3,4.9) & $86.97\%$ \\
		& $(0.5,10^{-5})$ & 50 & (1e-3,9.7) &  $86.92\%$ \\
		\midrule
		\multirow{2}{*}{\rotatebox[origin=c]{0}{Adult}} & $(1.0,10^{-5})$ & 70  & (1e-3,7.9) & $84.69\%$ \\
		& $(0.5,10^{-5})$ & 60  & (1e-3,14.9) & $84.73\%$\\
		\bottomrule
	\end{tabular}%
\end{table}%

\begin{table}[!t]
	\centering
	\caption{Verification performance of the enhanced VeriDIP on DP models. }
	 \vspace{-3mm}
	\label{tab:dp_per}%
	\begin{tabular}{cccccc}
		\toprule
	    \multirow{2}{*}{Datasets} & \multirow{2}{*}{Models}  & \multicolumn{2}{c}{$\epsilon=0.5$} & \multicolumn{2}{c}{$\epsilon=1.0$}\\ \cmidrule(r){3-6}
	&&  $n_{S}$ & p-value & $n_{S}$ & p-value  \\
		\midrule
		\multirow{4}{*}{\rotatebox[origin=c]{0}{CIFAR-10}} 
		& TAR & 5 & $10^{-4}$ &5 &$10^{-4}$\\
		& ME  & 5 & $10^{-3}$ &5 &$10^{-4}$\\
		& KD  & 5 & $10^{-3}$ &5 &$10^{-4}$ \\
		& FT  & -- & -- &-- &-- \\
		
		\midrule
		\multirow{4}{*}{\rotatebox[origin=c]{0}{FMNIST}} 
		& TAR & 5 & $10^{-6}$  &5 & $10^{-6}$ \\
		& ME  & 5 & $10^{-3}$ &5 &$10^{-3}$ \\
		& KD  & 5 & $10^{-3}$ &5 &$10^{-3}$ \\
		& FT  & 5 & $10^{-4}$ &5 & $10^{-6}$ \\
		
		\midrule
		\multirow{4}{*}{\rotatebox[origin=c]{0}{Adult}} 
		& TAR & 5  & $10^{-3}$ &5  & $10^{-4}$\\
		& ME  & 35 & $10^{-3}$ &25 & $10^{-3}$\\
		& KD  & 75 & $10^{-3}$ &55 & $10^{-3}$ \\
		& FT  & 15 & $10^{-3}$ &5  & $10^{-4}$ \\
		
		\midrule
		\multirow{4}{*}{\rotatebox[origin=c]{0}{Health}} 
		& TAR  &15  & $10^{-4}$& 15 & $10^{-4}$\\
		& ME   &175 &$10^{-3}$ & 55 & $10^{-3}$ \\
		& KD   &135 &$10^{-3}$ & 75 & $10^{-3}$  \\
		& FT   & 15 & $10^{-3}$& 5  & $10^{-3}$\\
		
		\bottomrule
	\end{tabular}%
\end{table}%

\begin{figure*}[!t]
    \centering  
    \subfigure[CIFAR-10, Global]{
        \label{subfig:cifar_dp_global}
        \includegraphics[width=0.23\textwidth]{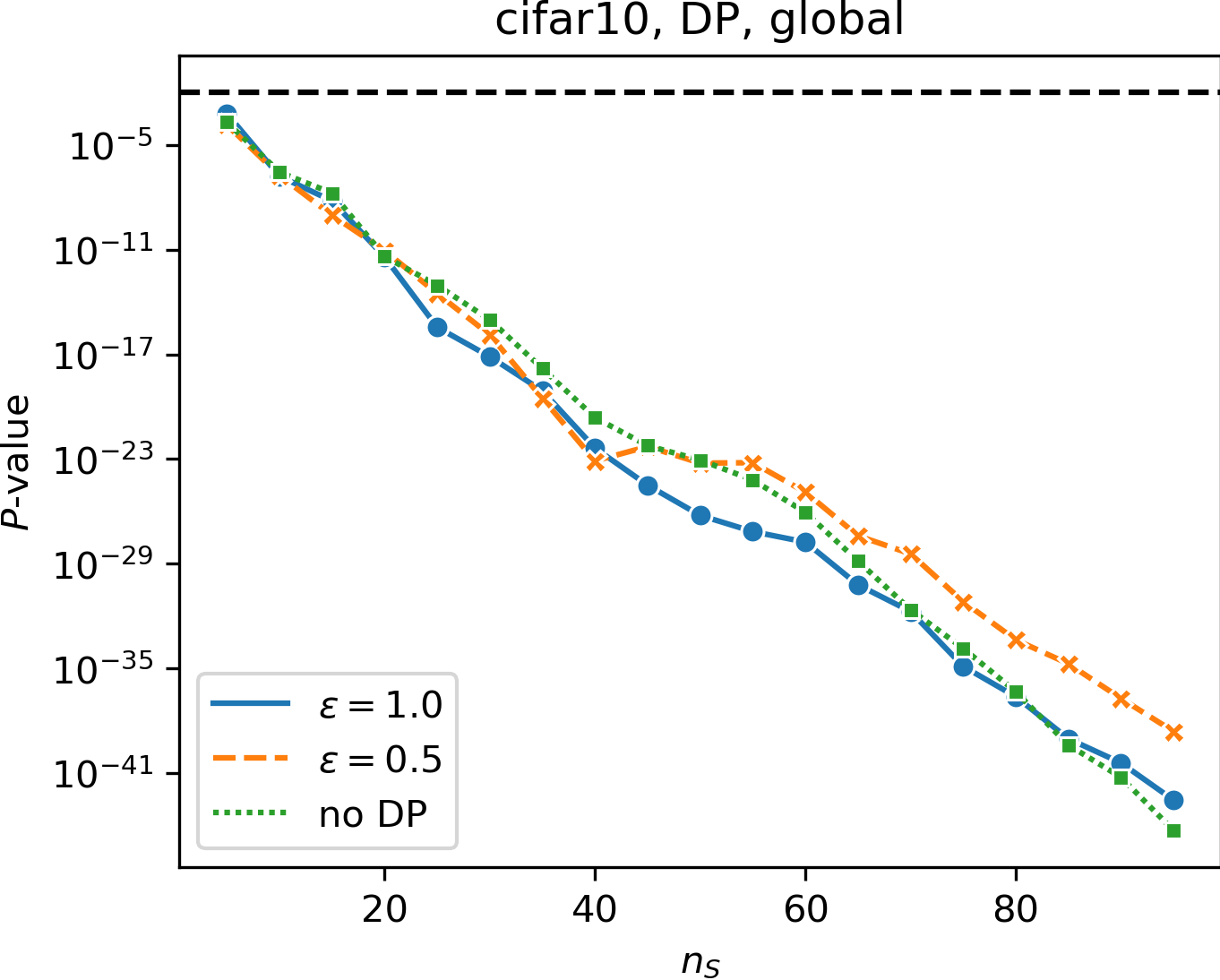}
    }
    \subfigure[CIFAR-10, Per-sample]{
        \label{subfig:cifar_dp_persample}
        \includegraphics[width=0.23\textwidth]{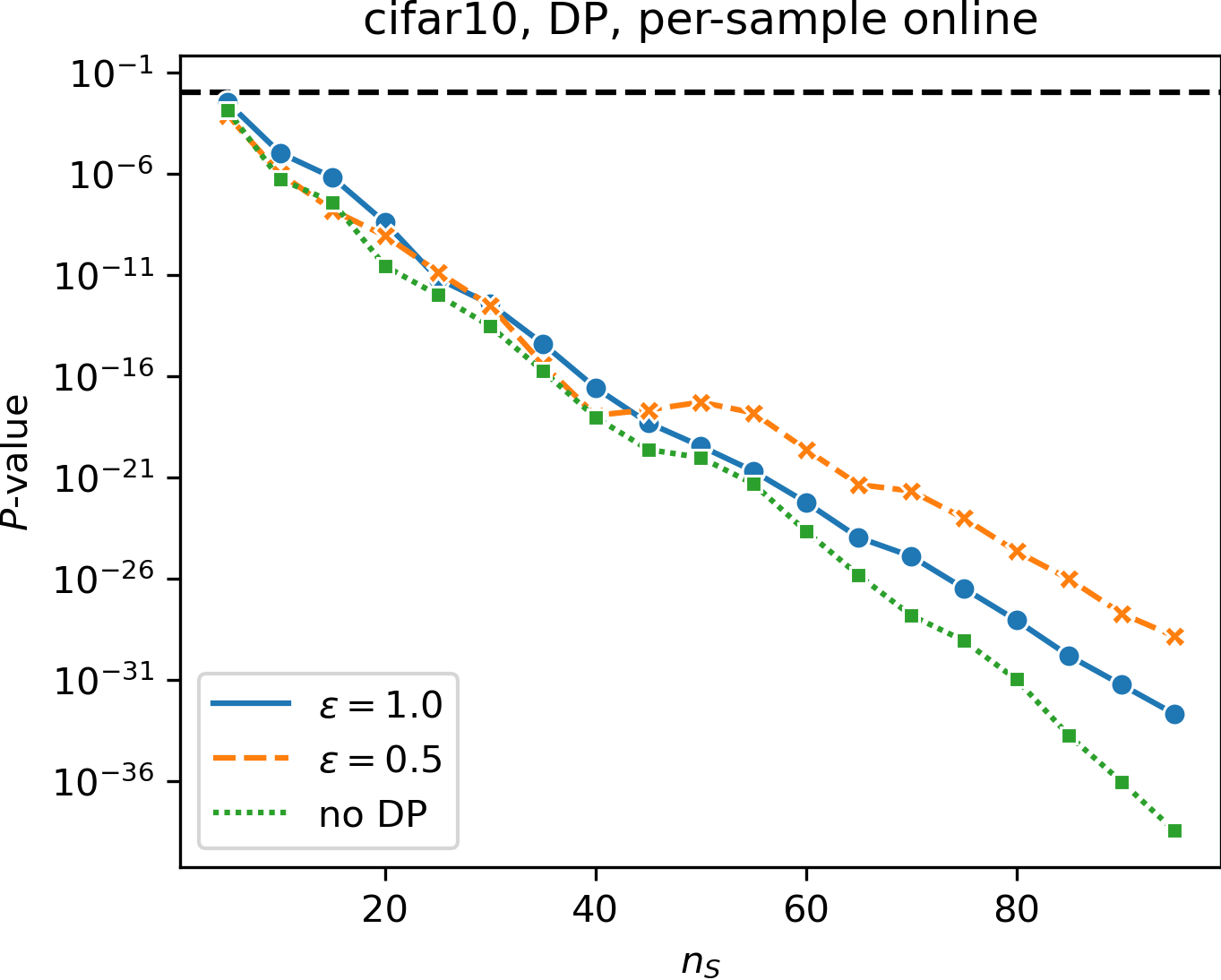}
    }
    \subfigure[FMNIST, Global]{
        \label{subfig:fmnist_dp_global}
        \includegraphics[width=0.23\textwidth]{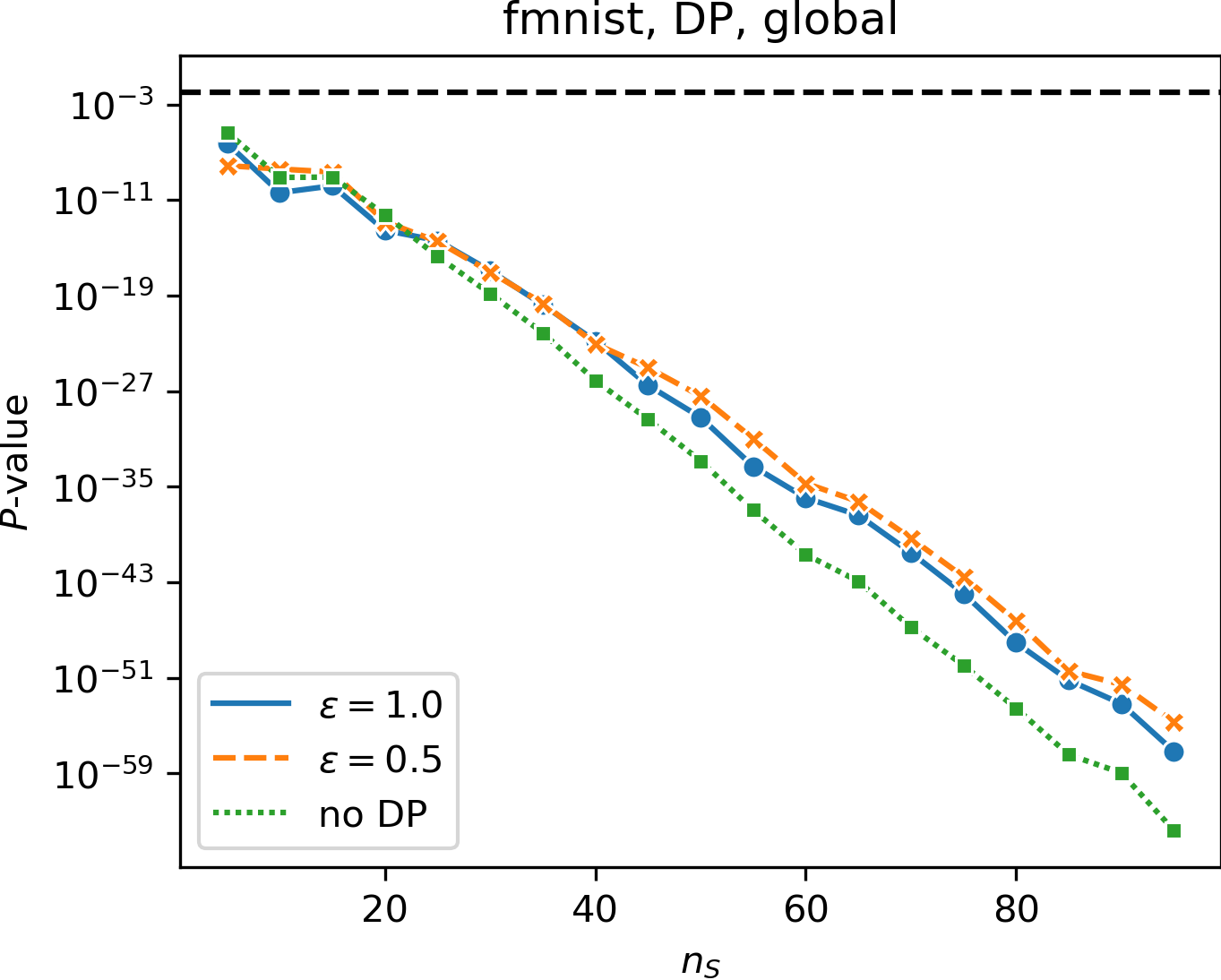}
    }
    \subfigure[FMNIST, Per-sample]{
        \label{subfig:fmnist_dp_persample}
        \includegraphics[width=0.23\textwidth]{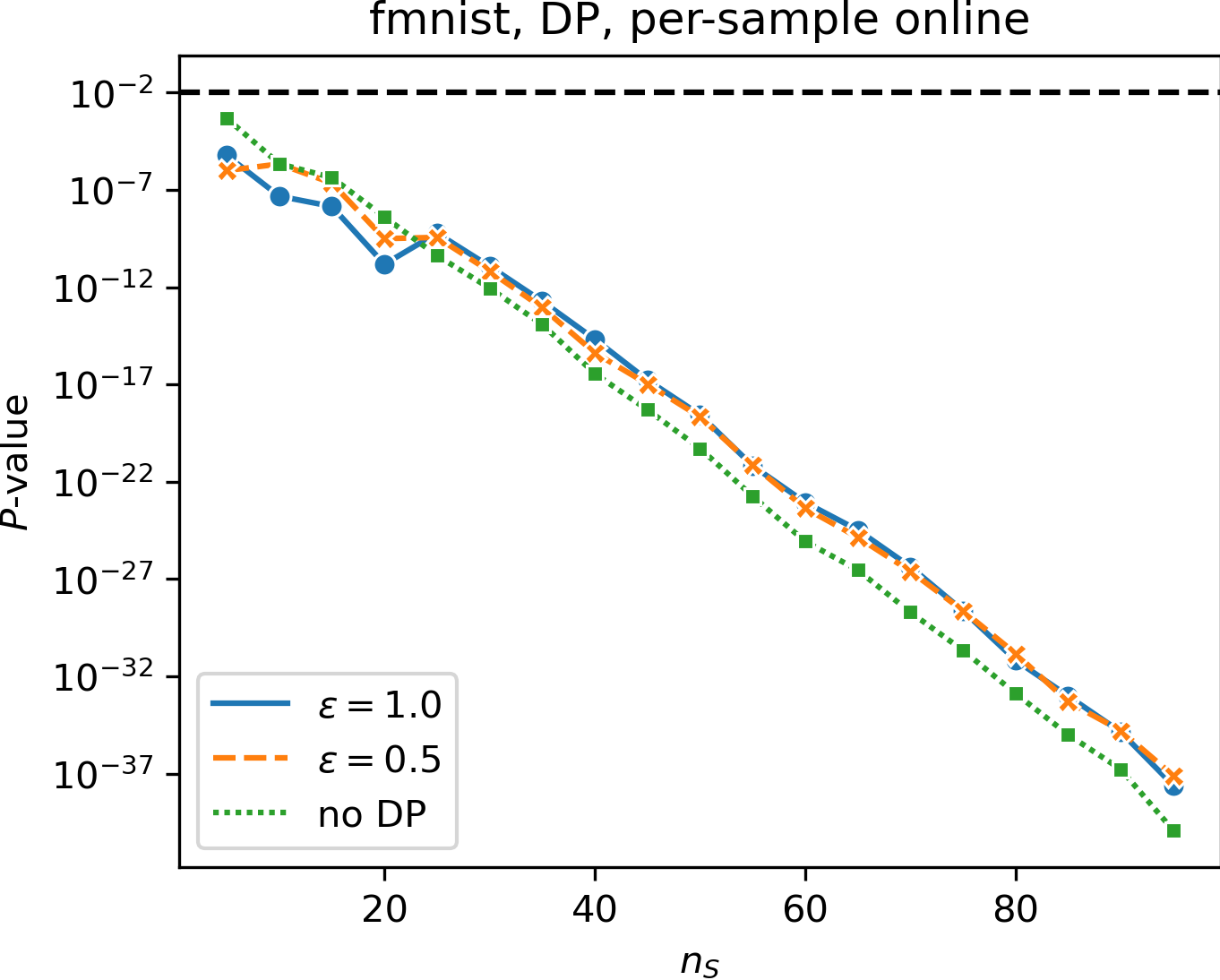}
    }
    \subfigure[Health, Global]{
        \label{subfig:health_dp_global}
        \includegraphics[width=0.23\textwidth]{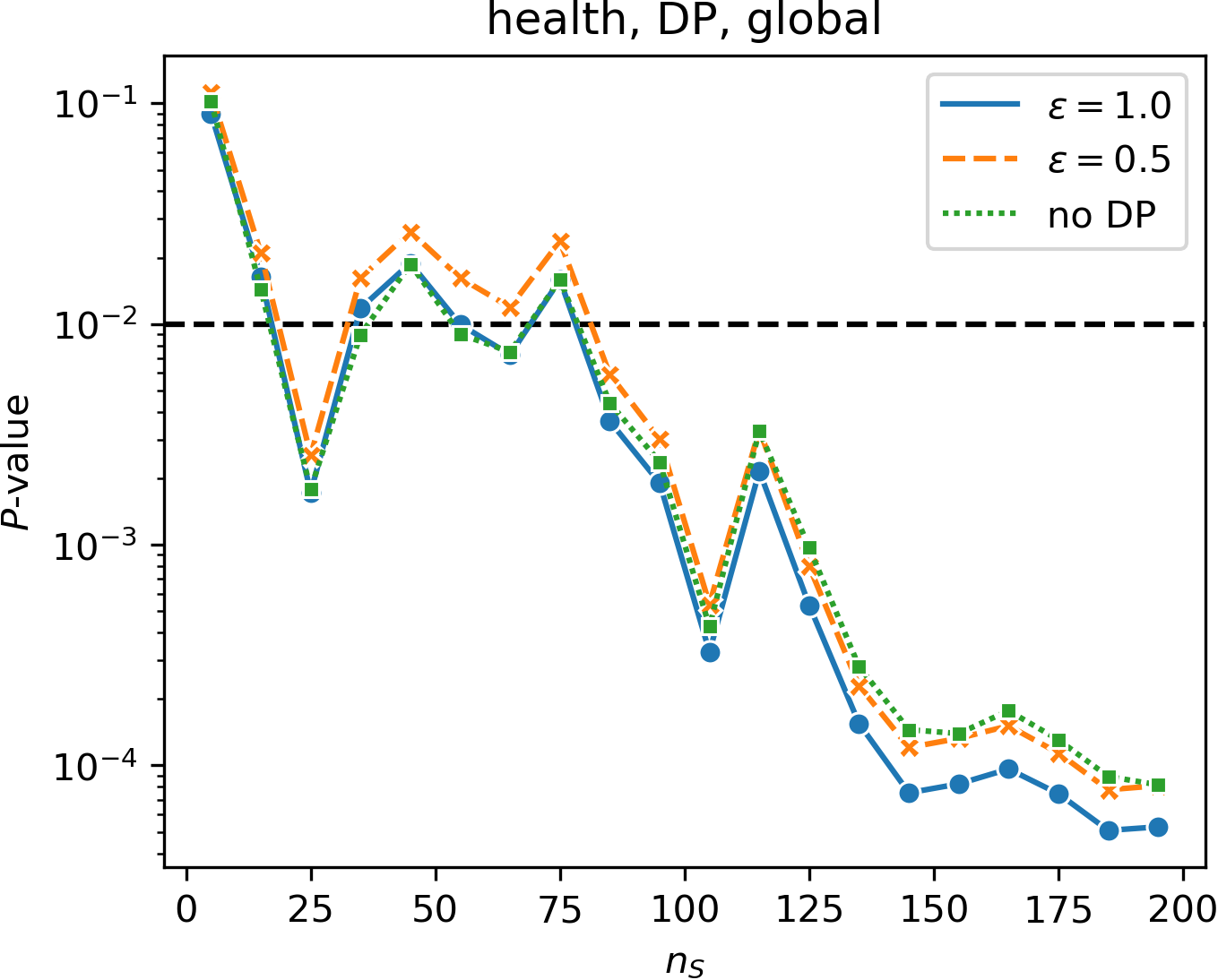}
    }
    \subfigure[Health, Per-sample]{
        \label{subfig:health_dp_persample}
        \includegraphics[width=0.23\textwidth]{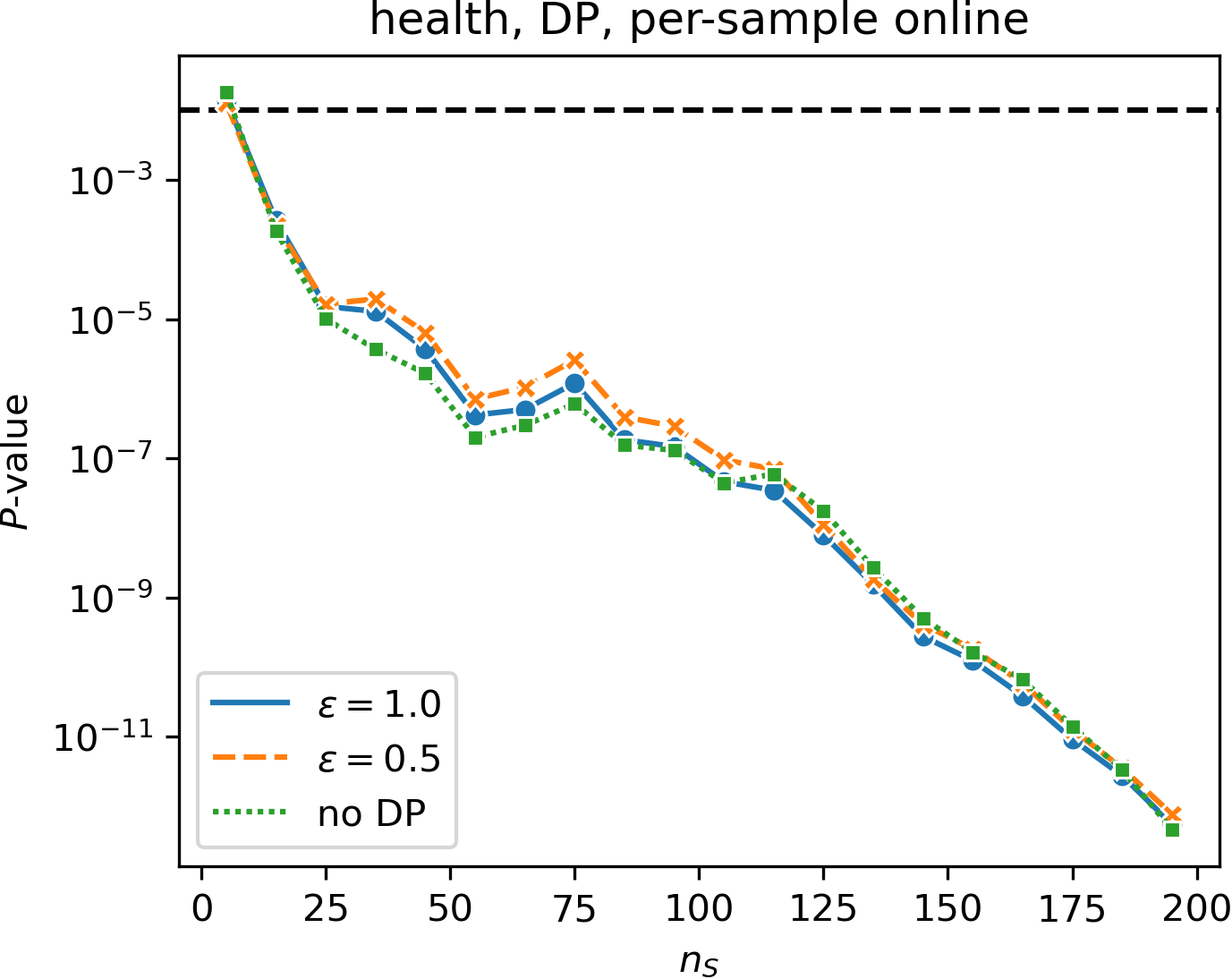}
    }
    \subfigure[Adult, Per-sample]{
        \label{img:adult_dp_persample}
        \includegraphics[width=0.23\textwidth]{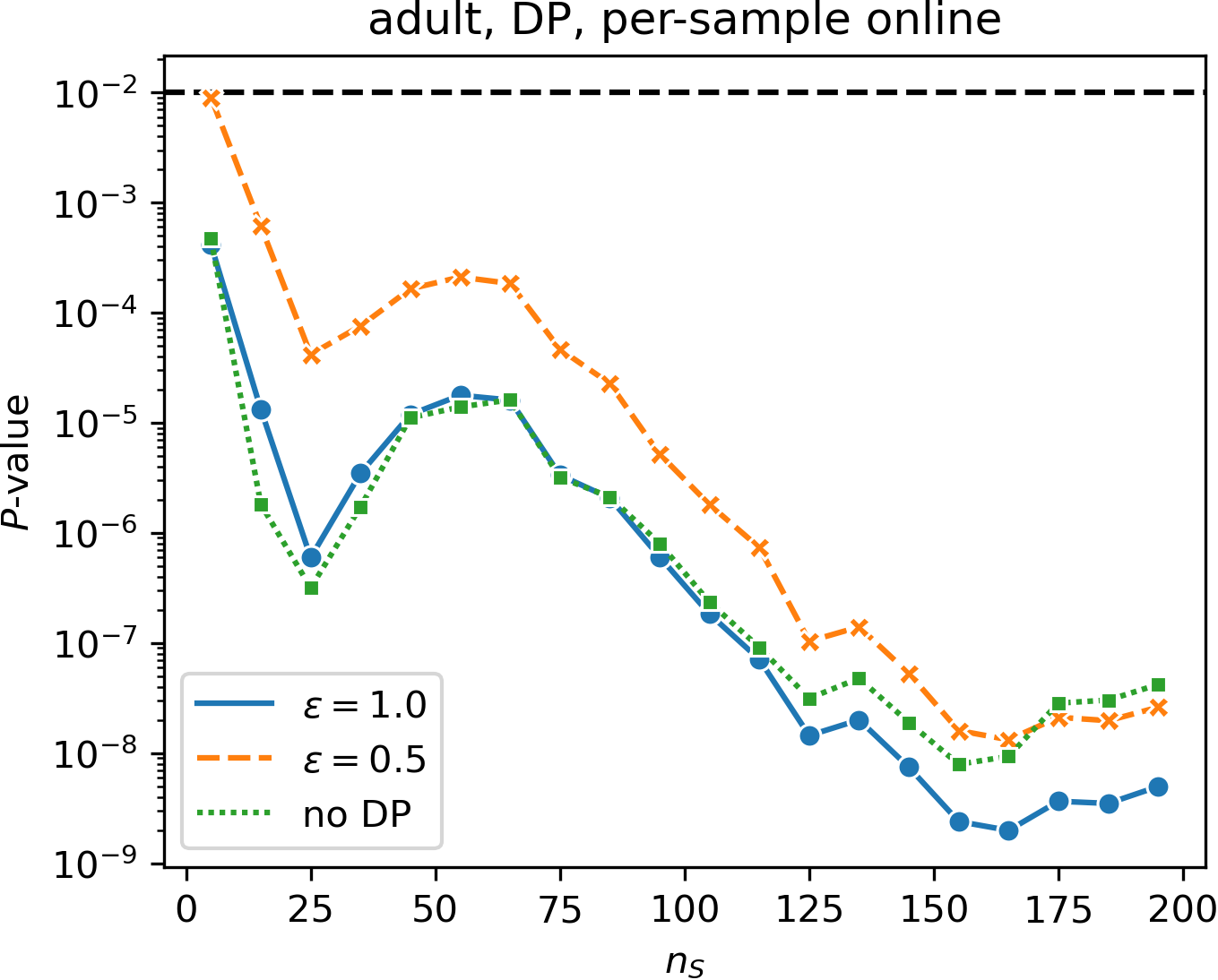}
    }
\vspace{-3mm}
    \caption{Performance of The Enhanced VeriDIP $\mathcal{V}_{\text{E-G}}$ and $\mathcal{V}_{\text{E-P}}$ on DP IP-protected models.}
    \label{fig:dp_performance}
\end{figure*}

\textbf{Experiment setup.} We use the DP Adam optimizer~\cite{abadi2016deep} to train DP machine learning models and compose the privacy budget using RDP techniques~\cite{mironov2017renyi}. In each iteration, we first clip gradient norm with the threshold $C$, then add Gaussian noise with scale $\sigma = \mathrm{z} * C$ (see Table~\ref{tab:dp-DI}) where $\mathrm{z}$ stands for the noise multiplier. We adjust different pairs of hyper-parameters $(C,\mathrm{z})$ to trade off privacy vs. utility. For each dataset, we choose two privacy budget options for $(\epsilon,\delta)$, such that $(0.5,10^{-5}) $ and $(1.0, 10^{-5})$, where $\delta$ is usually set to be the inverse of the number of training sets, as shown in~\cite{abadi2016deep}. These options are commonly used in training DP machine learning models. A smaller privacy budget $\epsilon$ indicates a higher privacy protection level (yet lower model utility). The hyper-parameters that are related to training DP models and testing the accuracy of DP models are listed in Table~\ref{tab:dp-DI}. Note that, the configuration of model stealing attacks are identical to the former's (see Section~\ref{sec:exp_setup}).

Recall the theoretical analysis in Section~\ref{sec:bound_DP}, we bound the privacy budgets align with the VeriDIP's performance, for instance, $\epsilon =0.1$ result in $P>0.156$. Thus, we first experiment with $\epsilon =0.1$ and find all DP models experienced a substantial loss in functionality. Particularly for CIFAR-10, the $(0.1,10^{-5})$-DP model achieved only $76.71\%$ test accuracy, compared with the non-DP benchmark, it loses approximately $10\%$ of the accuracy. In accordance with the theoretical analysis, none of these models can be verified for ownership using VeriDIP. However, protecting the copyright of DP models becomes less meaningful without preserving utility, which motivated us to focus on evaluating the effectiveness of VeriDIP on more useful DP models. Based on our analysis, when $\epsilon = 0.5$, the limitation on the p-value is already negligible. We then experiment with $\epsilon =0.5$ and $\epsilon =1.0$ and Table~\ref{tab:dp_per} presents the main result for VeriDIP on $(0.5,10^{-5})$-DP and $(1.0,10^{-5})$-DP models. Additionally, Figure~\ref{fig:dp_performance} illustrates the comparisons of p-values against $n_{S}$ curves for these DP models and non-DP models. Note that the fine-tuning attack~\cite{chen_refit_2021} fails to steal a functionally-preserving DNN model trained with Adam optimizer, which is why the fourth row of CIFAR-10 is empty. 

\textbf{VeriDIP is as effective on utility-preserving DP models as it is on non-DP models}. Comparing the model utility presented in Table~\ref{tab:dp-DI} and Table~\ref{tab:gen_error}, we found that, by carefully choosing DP hyper-parameters, all DP models show comparable utility with non-DP baselines. From Table~\ref{tab:dp_per} and Figure~\ref{fig:dp_performance}, we can see that 
the effectiveness of $\mathcal{V}_{\text{E-G}}$ and $\mathcal{V}_{\text{E-P}}$ on CIFAR-10 and FMNIST are hardly affected by the noise injected by DP. While on Adult and Health datasets, more strict privacy protection may increase the number of exposed training samples. In Table~\ref{tab:dp_per}, the number of exposed samples $n_{S}$ of $(0.5,10^{-5})$-DP models is higher than that of $(1.0,10^{-5})$-DP models. This indicates that there is a trade-off between privacy protection and copyright protection, especially for those barely overfitted models. 

Since there is a subtle balance between privacy protection and copyright protection in generalized models, we study the behavior of VeriDIP varying different DP hyper-parameters for Adult and Health datasets. In particular, We study two types of DP hyper-parameters: DP clipping threshold $C$ and the number of training epochs, and analyze their influence on VeriDIP.

\textbf{(1) DP clipping threshold $C$.} $C$ represents the clipping threshold for batch gradients in each training iteration. We conducted experiments with different values of $C$ as it does not affect the value of $\epsilon$ but impacts the training performance. We kept the noise multiplier $\mathrm{z}$ and the number of training epochs fixed for $\epsilon = 0.5$. The p-value against $n_{S}$ curve comparisons are depicted in Figure~\ref{fig:compare_cg_factor}. From the figures, we observe that certain choices of $C$ lead to the failure of VeriDIP, such as $C=10^{-1}$, $C=10^{-2}$, and $C=10^{-5}$ in Figure~\ref{subfig:adult_dp_lesspriv_cg_factor}, and $C=10^{-1}$ in Figure~\ref{subfig:health_dp_lesspriv_cg_factor}. Excessively large or small values of $C$ have a detrimental effect on the effectiveness of VeriDIP. A large $C$ introduces excessive noise due to the noise scale $\sigma = \mathrm{z} * C$. Conversely, a small $C$ restricts the gradient magnitude in each iteration, thereby affecting the model's learning process. Hence, we encourage model owners to explore various choices of $C$ to determine the optimal value when training a DNN model with both privacy protection and copyright protection.

\textbf{(2) Number of training epochs.} In addition to $C$, the model trainer has two options to achieve the same privacy protection: (a) more training epochs but less noise for each iteration. (b) less training epochs but more noise for each iteration. Thus, we compare these options and the results are shown in Figure~\ref{fig:compare_epoch_factor}. As a result, we find that option (a) has better VeriDIP performance for the DNN models than option (b). 

\begin{figure}[!t]
    \centering  
    \subfigure[Adult, $(0.5,10^{-5})$-DP]{
        \includegraphics[width=0.47\columnwidth]{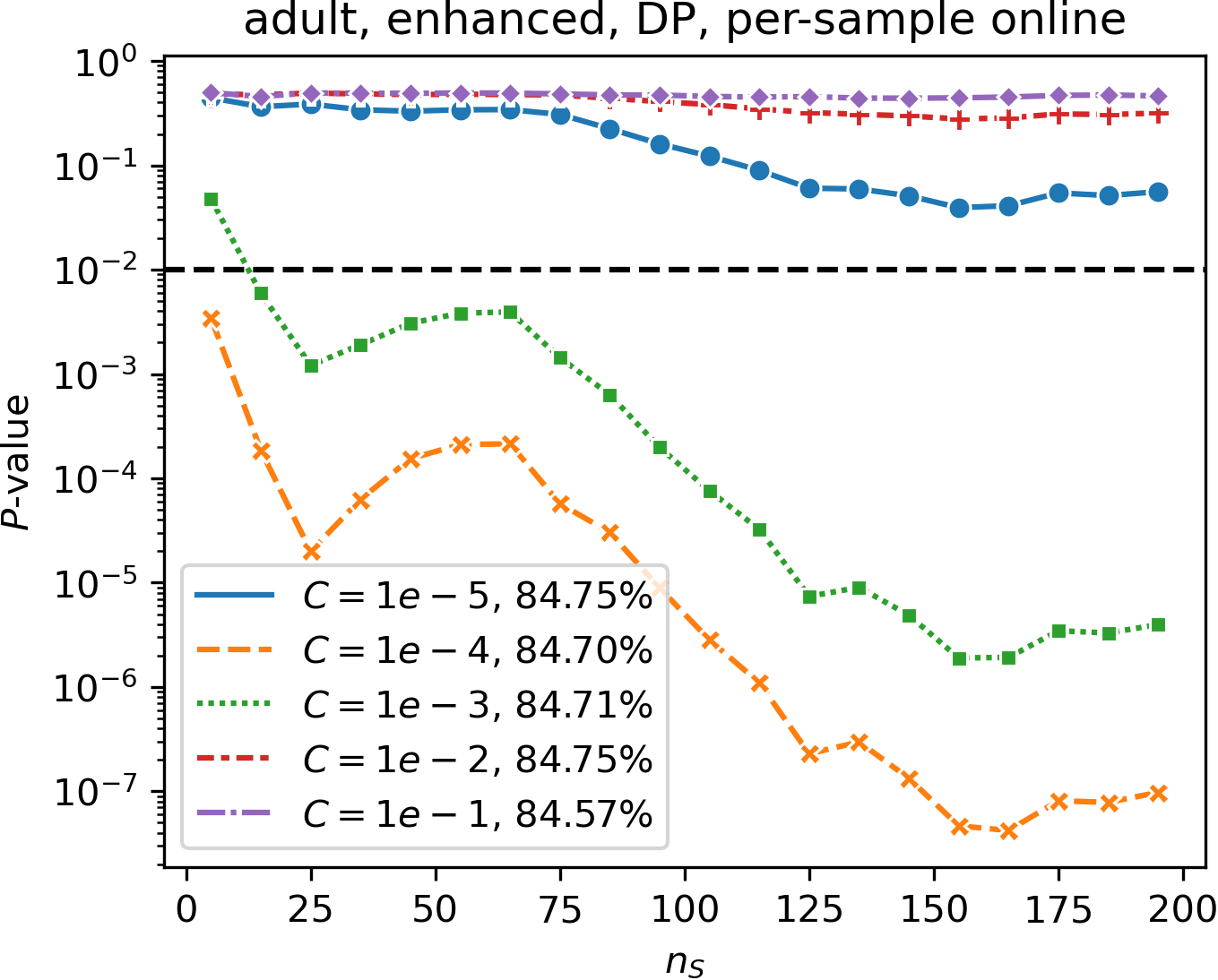}
        \label{subfig:adult_dp_lesspriv_cg_factor}
    }
    \subfigure[Health, $(0.5,10^{-5})$-DP]{
        \includegraphics[width=0.47\columnwidth]{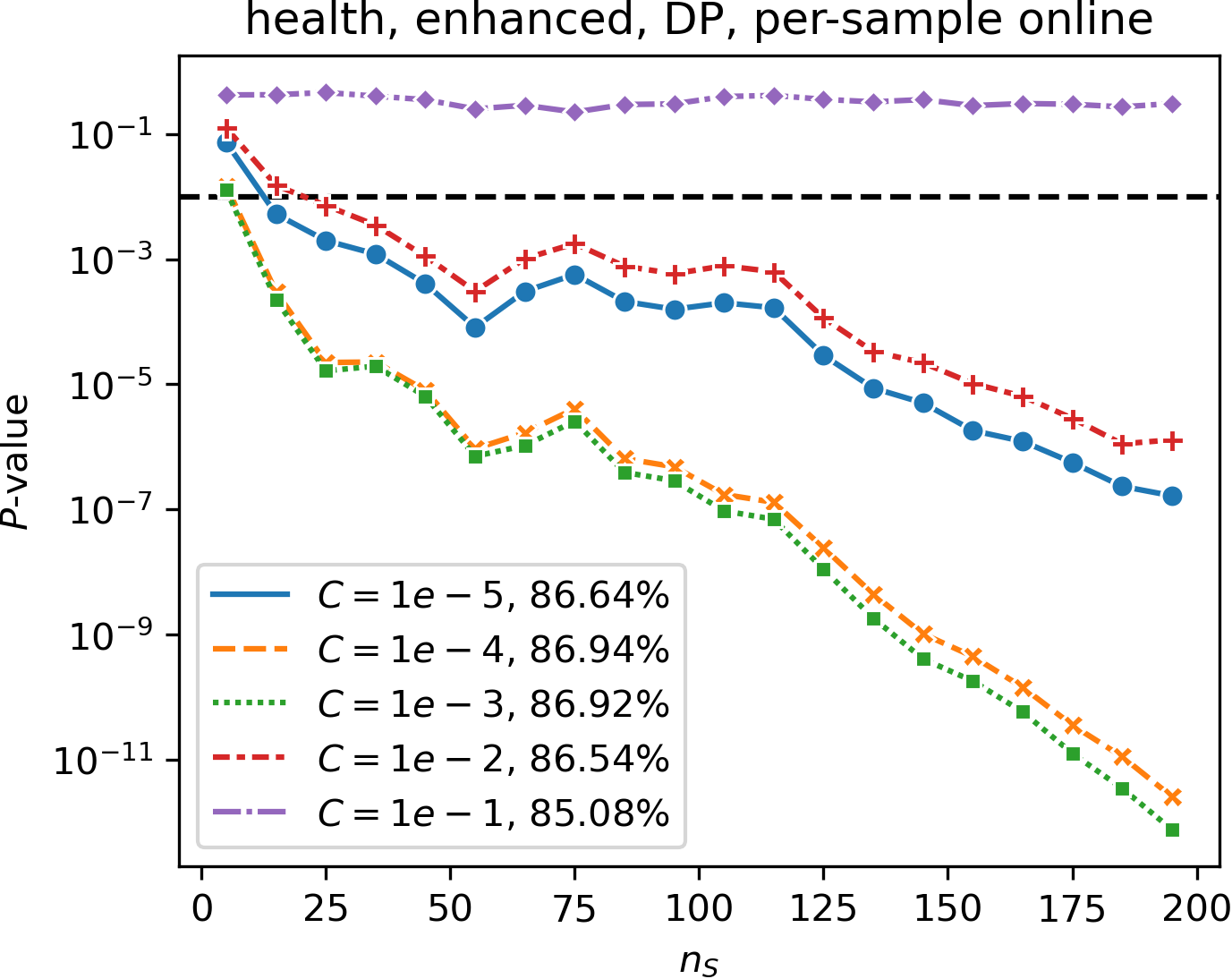}
        \label{subfig:health_dp_lesspriv_cg_factor}
    }
 \vspace{-4mm}
    \caption{The DP hyper-parameters $C$ impact VeriDIP's performance.}
    \label{fig:compare_cg_factor}
\end{figure}

To summarize, the enhanced VeriDIP is effective on DP-protected DNN models. Some privacy-preserving models may double or triple the number of exposed training samples in VeriDIP as a trade-off. Besides, carefully selecting the DP hyperparameters is crucial for model owners to simultaneously benefit from privacy protection and copyright protection.

\section{Conclusion and Future Work Directions}
\textbf{Conclusion of This Paper.} The increasing prevalence of model-stealing attacks poses a significant threat to the protection of neural network models' copyrights. In this work, we propose a novel ownership testing framework for DNN models, VeriDIP, along with its enhanced version, to combat model plagiarism. VeriDIP leverages privacy leakage as a natural fingerprint for verifying DNN model ownership. The enhanced VeriDIP utilizes a reduced amount of private data to estimate the worst-case privacy leakage of models, serving as enhanced model fingerprints. Our comprehensive experiments demonstrate that the enhanced VeriDIP achieves a true positive rate of $100\%$ and a false positive rate of $0$ in accurately identifying positive models (victim models and their stolen copies) as opposed to negative models (independent models), requiring a minimum of 5 data samples during the verification process. Furthermore, the enhanced VeriDIP effectively addresses an open problem concerning the protection of the copyright of any utility-preserved differentially private models.

\textbf{Future Work Directions.} We list the following potential future work directions for this paper.
\begin{enumerate}[leftmargin=*]
    \item Quantitative standard for the Number of Shadow Models Required. In this paper, in order to identify less private data for the enhanced VeriDIP, we trained $100$ shadow models for each mentioned dataset. It is important to note that this empirical number of shadow models may vary depending on the specific datasets. Therefore, it would be valuable to propose a quantitative standard for determining the appropriate number of shadow models based on the characteristics of the given datasets.

    \item Extending to other data domains. While our study primarily focuses on image and tabular data, future research can explore the applicability of VeriDIP to other data types and domains. This could include natural language processing, audio data, or even more specialized domains such as genomics or finance.

    \item Efficiency improvement. Future work can focus on enhancing the efficiency of the VeriDIP framework by reducing the computation costs associated with finding less private data. These efforts will contribute to minimizing the computational overhead and making the framework more practical for real-world deployment.
\end{enumerate} 

\begin{figure}[!t]
    \centering 
    \subfigure[Adult, $(0.5,10^{-5})$-DP]{
        \includegraphics[width=0.47\columnwidth]{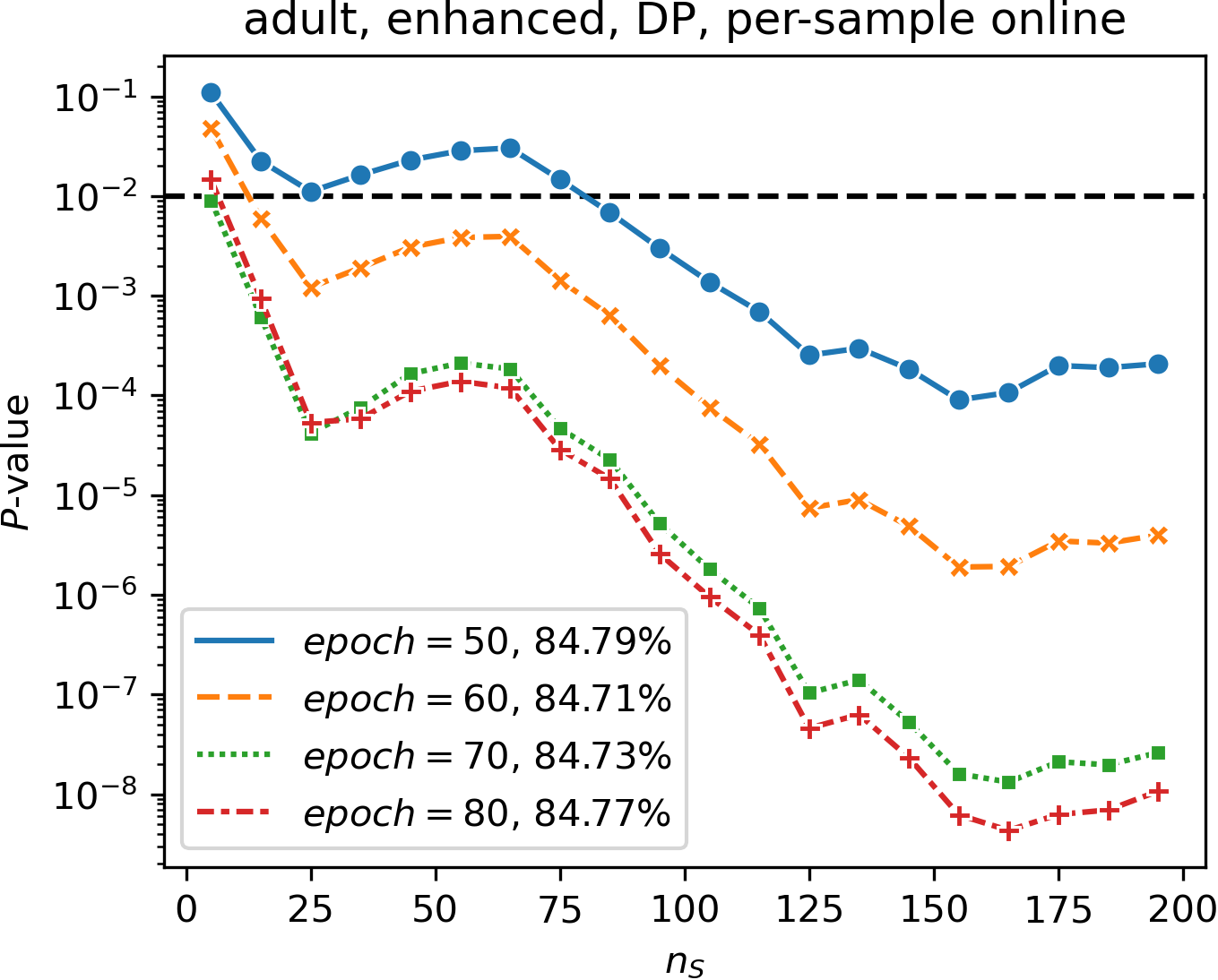}
        \label{subfig:adult_dp_lesspriv_epoch_factor}
    }
    \subfigure[Health, $(0.5,10^{-5})$-DP]{
        \includegraphics[width=0.47\columnwidth]{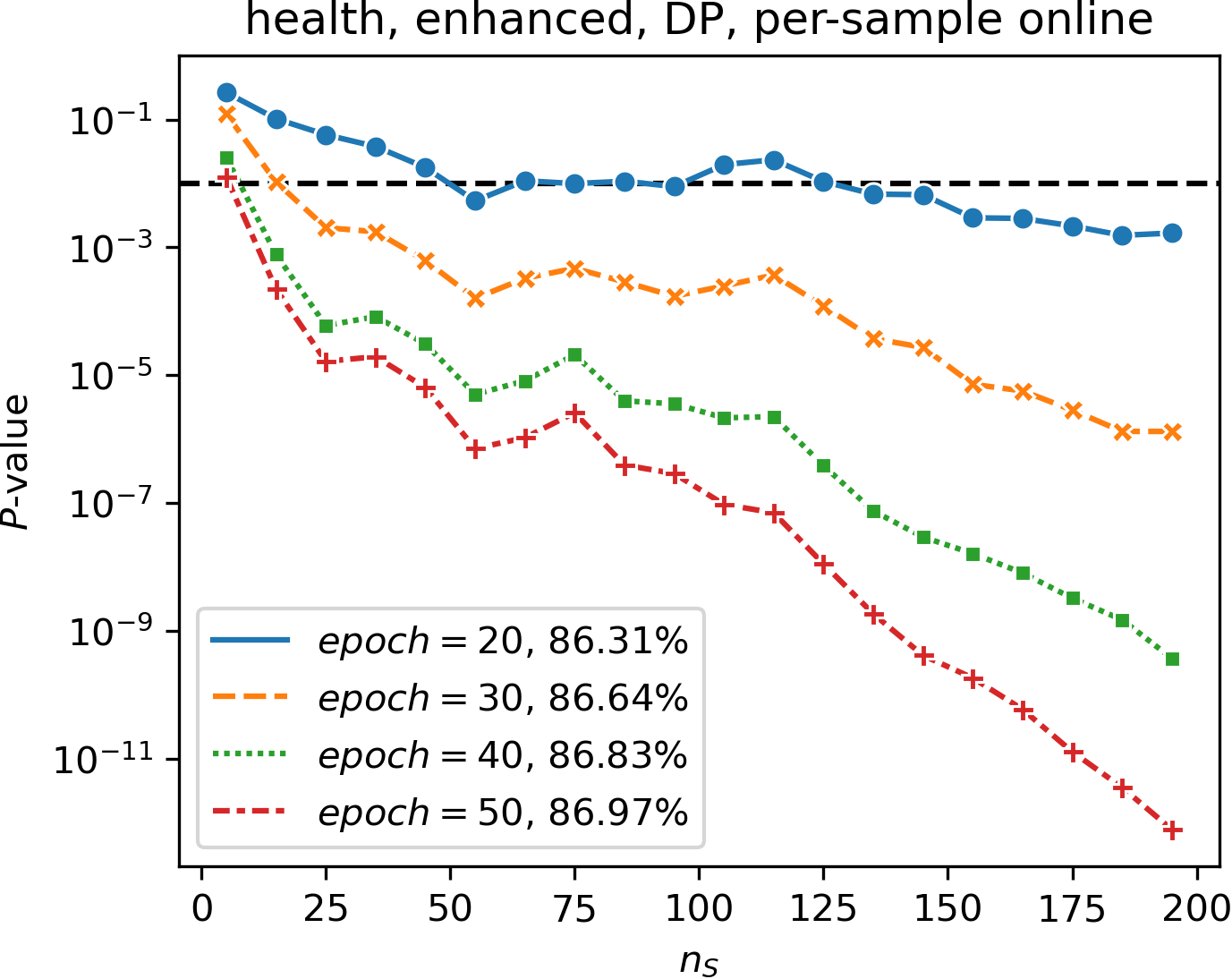}
        \label{subfig:health_dp_lesspriv_epoch_factor}
    }
 \vspace{-4mm}
    \caption{Training epochs of DP models impact VeriDIP's performance.}
    \label{fig:compare_epoch_factor}
\end{figure}

\ifCLASSOPTIONcaptionsoff
  \newpage
\fi

\bibliography{references.bib}
\bibliographystyle{./IEEEtran.bst}

\begin{IEEEbiography}
[{\includegraphics[width=1in,height=1.25in,clip,keepaspectratio]{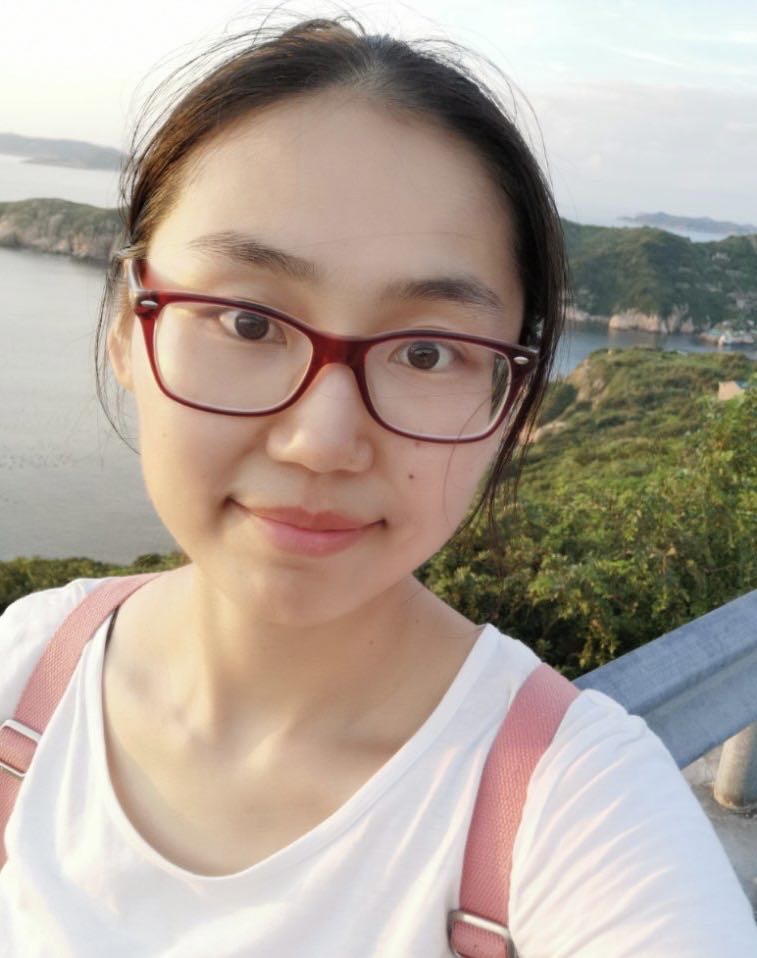}}]{Aoting Hu}  is a Lecturer with Anhui University of Technology, China. She received her B.Sc. degree in communication engineering from Anhui University of Technology in 2014. She obtained her M.Sc. degree in communication and information engineering and her Ph.D. degree in cyberscience and engineering from Southeast University, Nanjing, China. Her recent research interests include machine learning security and privacy.
\end{IEEEbiography}

\begin{IEEEbiography}
[{\includegraphics[width=1in,height=1.25in,clip,keepaspectratio]{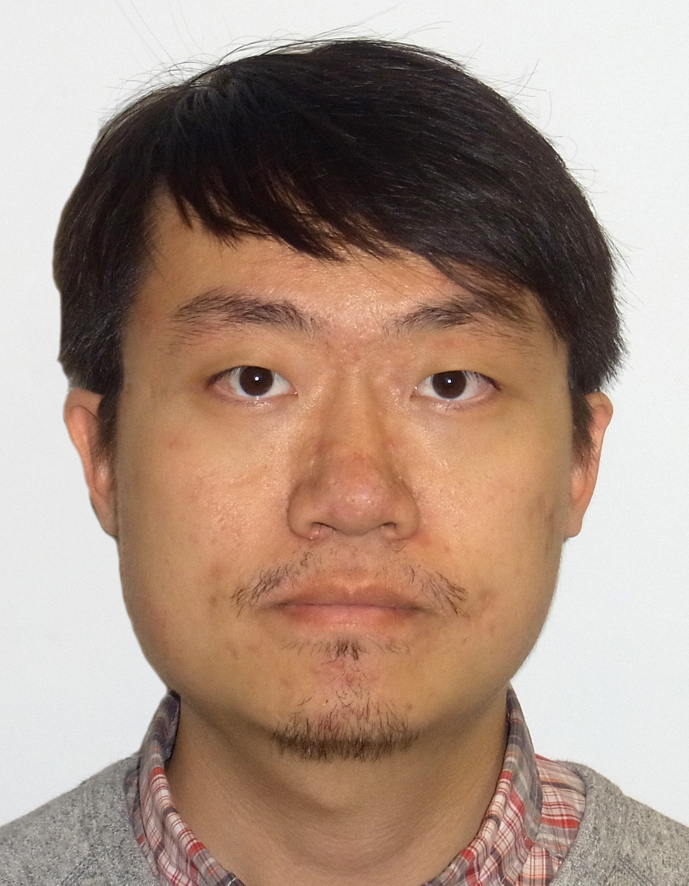}}]{Zhigang Lu} is a Lecturer with James Cook University, Australia. Prior to that, he was a Postdoctoral Research Fellow at the Macquarie University Cyber Security Hub. He received his BEng degree from Xidian University and his MPhil and PhD degrees from the University of Adelaide, all in computer science. With research interests in differential privacy and machine learning, he has published over ten papers in international journals/conferences, including IEEE TDSC, IEEE TIFS, and ACM CCS.
\end{IEEEbiography}

\begin{IEEEbiography}
[{\includegraphics[width=1in,height=1.25in,keepaspectratio]{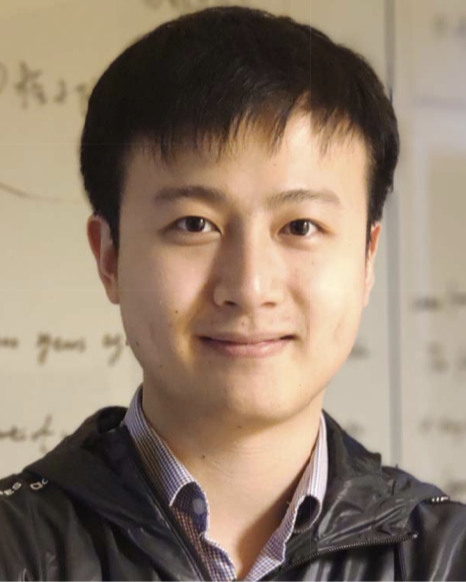}}]{Renjie Xie}~(Graduate Student Member, IEEE) is a Ph.D. candidate majoring in communication and information engineering at Southeast University, Nanjing, China. He received his B.Sc. degree in mathematics and applied mathematics from South China Agricultural University in 2015, and his M.Sc. degree in computer science from the South China University of Technology in 2018. His recent research interests include computer vision, representation learning, physical layer security, and machine learning for wireless communications.
\end{IEEEbiography} 

\begin{IEEEbiography}
[{\includegraphics[width=1in,height=1.25in,keepaspectratio]{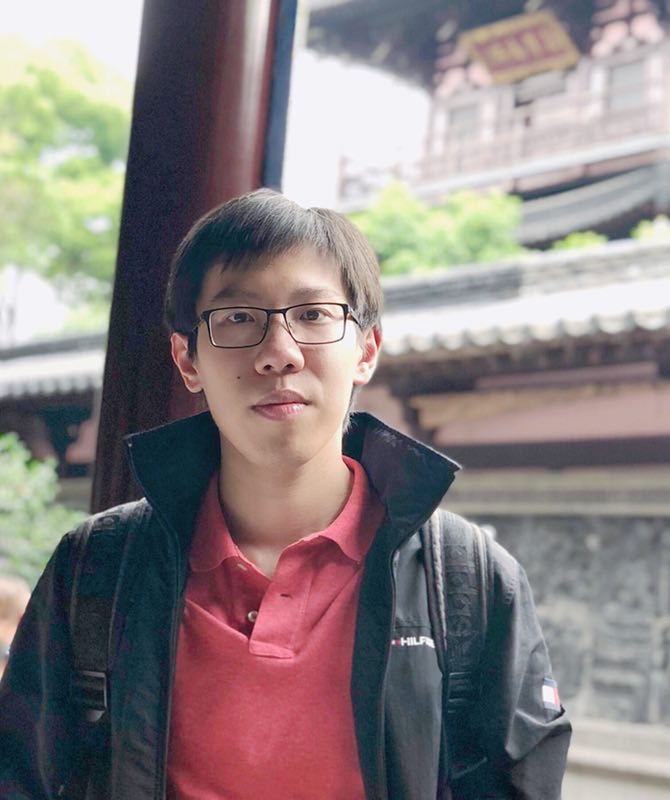}}]{Minhui Xue} is a Senior Research Scientist (lead of AI Security sub-team) at CSIRO's Data61, Australia. His current research interests are machine learning security and privacy, system and software security, and Internet measurement. He is the recipient of the ACM CCS Best Paper Award Runner-Up, ACM SIGSOFT distinguished paper award, Best Student Paper Award, and the IEEE best paper award, and his work has been featured in the mainstream press, including The New York Times, Science Daily, PR Newswire, Yahoo, The Australian Financial Review, and The Courier. He currently serves on the Program Committees of IEEE Symposium on Security and Privacy (Oakland) 2023, ACM CCS 2023, USENIX Security 2023, NDSS 2023, ACM/IEEE ICSE 2023, and ACM/IEEE FSE 2023. He is a member of both ACM and IEEE.
\end{IEEEbiography}

\end{document}